\documentclass[10pt, conference]{IEEEtran}

\usepackage[T1]{fontenc}
\usepackage{color}

\usepackage{amsfonts,amssymb,amsmath,amsthm}
\usepackage{mathtools}
\usepackage{mathrsfs}
\usepackage{xparse}
\usepackage{thmtools}
\usepackage{bussproofs}

\usepackage{tikz-cd}
\usepackage{enumitem}
\usepackage{esvect}  %
\usepackage{wrapfig}

\usepackage[hidelinks]{hyperref}

\newcommand{\tjcol}{blue}
\newcommand{\dmcol}{purple}
\newcommand{\nscol}{orange}

\newenvironment{tjblock}{\par\medskip\color{\tjcol}}{\medskip}
\newenvironment{dmblock}{\par\medskip\color{\dmcol}}{\medskip}
\newenvironment{nsblock}{\par\medskip\color{\nscol}}{\medskip}
\newenvironment{needswork}{\par\medskip\color{\nscol}}{\medskip}

\newcommand{\slice}{\mathbin{\downarrow}}

\makeatletter
\def\namedlabel#1#2{\begingroup%
    #2%
    \def\@currentlabel{#2}%
    \phantomsection\label{#1}\endgroup
}
\def\labelhere#1#2{\begingroup%
    \def\@currentlabel{#2}%
    \phantomsection\label{#1}\endgroup
}
\makeatother

\newenvironment{claimproof}[1]{\par\noindent\emph{Claim proof.}\space#1}{\hfill $\blacksquare$\newline}
\newenvironment{prf}{\begin{proof}}{\end{proof}}

\newcommand{\df}[1]{\emph{#1}}

\newcommand{\Fraisse}{Fra\"{i}\-ss\'{e}}
\newcommand{\ef}{Ehren\-feucht--\Fraisse}

\newcommand{\ol}{\overline}

\newcommand\qq[1]{\quad #1 \quad}
\newcommand\ee[1]{\enspace #1 \enspace}
\newcommand{\tq}[1]{\mbox{#1}\quad}
\newcommand\qtq[1]{\quad\mbox{#1}\quad}

\newcommand\ete[1]{\ee{\mbox{#1}}}

\newenvironment{axioms}{\begin{enumerate}[labelsep=8pt,leftmargin=*,itemindent=2em,labelindent=1.0\parindent]}{\end{enumerate}}

\newcommand{\sue}{\subseteq}

\newcommand\obj{\mathrm{obj}}
\newcommand\id{\mathrm{id}}

\newcommand{\embed}{\rightarrowtail}
    \newcommand{\emb}{\embed}

\renewcommand\vec[1]{\vv{#1}}
\newcommand\veci[1]{\vec{#1_i}}
\newcommand\opveci[1]{\op(\vec{#1_i})}
\newcommand\opvec[1]{\op(\vec{#1})}
\newcommand\emcm{\mathbin{\rotatebox[origin=c]{270}{$\embed$}}}

\renewcommand\L{\mathcal L}

\newcommand\parrow[1]{\Rrightarrow_{\exists^+ #1}}

\newcommand\pequiv[1]{\equiv_{\exists^+ #1}}
\newcommand\cequiv[1]{\equiv_{\# #1}}
\newcommand\sequiv[1]{\equiv_{#1}}
\newcommand\lequiv[1]{\equiv_{#1}}

\newcommand{\fgST}[1]{#1\vert_{\tau}}

\newcommand{\merge}[1]{\stackrel{#1}{\vee}}

\newcommand{\lift}[1]{\widehat{#1}}          %

\newcommand\op{H} %
\newcommand\lop{\lift{\op}}      %

\newcommand{\univ}{\mathrm{u}}

\newcommand\sg{\sigma}
\newcommand\tr{\mathfrak t}

\newcommand\I{I}
\newcommand\trI{\tr^\I}

\newcommand\Con{\texttt{com}}
\newcommand\trCon{\tr^\Con}
\newcommand\sgCon{\sg^\Con}

\newcommand\counit{\varepsilon}
\newcommand{\comonad}[1]{\mathbb{#1}} %
\newcommand\C{\mathbb C} %
\newcommand\D{\mathbb D}

\newcommand{\Ek}{{\comonad{E}_{k}}}
\newcommand{\Pk}{{\comonad{P}_{k}}}
\newcommand{\Mk}{{\comonad{M}_{k}}}

\newcommand{\Cos}{\mathtt{Cos}}

\newcommand{\EM}[1]{\mathsf{EM}(#1)}
\newcommand{\EMF}[1]{F^{#1}}

\newcommand{\Klei}[1]{\mathsf{Kl}(#1)}
    \def\Kl{\Klei}

\newcommand{\kirc}{\mathbin{\bullet}}
\newcommand{\klto}[1]{\to_{#1}}

\newcommand{\cat}[1]{\mathcal{#1}}

\newcommand{\CC}{\cat{C}}
\newcommand{\CD}{\cat{D}}

\newcommand{\CS}{\cat{S}}
\newcommand{\CT}{\cat{T}}
\newcommand{\CU}{\cat{U}}

\newcommand{\Rel}{\mathcal R(\sg)}
\newcommand{\Rels}{\mathcal R_*(\sg)}
\newcommand{\R}{\mathcal R}
\newcommand{\Rs}{\mathcal R_*}
\newcommand{\Id}{{I\mkern-1mu d}}

\newcommand{\Paths}{\mathcal{P}}
\newcommand{\Pa}{\Paths}

\newcommand{\struct}[1]{#1}
\newcommand{\As}{\struct{A}}
\newcommand{\Bs}{\struct{B}}

\newcommand{\Ps}{\struct{P}}
\newcommand{\Qs}{\struct{Q}}

\newcommand{\Ac}{\alpha}
\newcommand{\Bc}{\beta}

\newcommand{\Pc}{\pi}
\newcommand{\Qc}{\rho}

\theoremstyle{plain}
\newtheorem{theorem}{Theorem}[section]
\newtheorem{lemma}[theorem]{Lemma}

\newtheorem{proposition}[theorem]{Proposition}
\newtheorem{claim}[theorem]{Claim}
\theoremstyle{definition}

\newtheorem{example}[theorem]{Example}
\newtheorem{counter}[theorem]{Counterexample}
\theoremstyle{remark}
\newtheorem{remark}[theorem]{Remark}
\numberwithin{theorem}{section}

\begin{document}

\title{A categorical account of \\ composition methods in logic\thanks{Supported by EPSRC grant EP/T00696X/1: Resources and co-resources: a junction between categorical semantics and descriptive complexity.
  The main body of work was done while the first author was employed by the University of Cambridge.
  }}

\author{
 \IEEEauthorblockN{Tomáš Jakl}
 \IEEEauthorblockA{
     Czech Academy of Sciences\\
     \textit{and} Czech Technical University\\
     ORCID: \href{https://orcid.org/0000-0003-1930-4904}{0000-0003-1930-4904}
 }
 \and
 \IEEEauthorblockN{Dan Marsden}
 \IEEEauthorblockA{
     School of Computer Science\\
     University of Nottingham\\
     ORCID: \href{https://orcid.org/0000-0003-0579-0323}{0000-0003-0579-0323}
 }
 \and
 \IEEEauthorblockN{Nihil Shah}
 \IEEEauthorblockN{
     Department of Computer Science\\
     University of Oxford\\
     ORCID: \href{https://orcid.org/0000-0003-2844-0828}{0000-0003-2844-0828}
 }
}

\IEEEoverridecommandlockouts
\IEEEpubid{Accepted at LiCS 2023. \hspace{\columnsep}\makebox[\columnwidth]{ }}

\maketitle              %

\begin{abstract}
    We present a categorical theory of the composition methods in finite model theory -- a key technique enabling modular reasoning about complex structures by building them out of simpler components.  The crucial results required by the composition methods are Feferman--Vaught--Mostowski (FVM) type theorems, which characterize how logical equivalence behaves under composition and transformation of models.

    Our results are developed by extending the recently introduced game comonad semantics for model comparison games. This level of abstraction allow us to give conditions yielding FVM type results in a uniform way. Our theorems are parametric in the classes of models, logics and operations involved. Furthermore, they naturally account for the positive existential fragment, and extensions with counting quantifiers of these logics. We also reveal surprising connections between FVM type theorems, and classical concepts in the theory of monads.

    We illustrate our methods by recovering many classical theorems of practical interest, including a refinement of a previous result by Dawar, Severini, and Zapata concerning the 3-variable counting logic and cospectrality. To highlight the importance of our techniques being parametric in the logic of interest, we prove a family of FVM theorems for products of structures, uniformly in the logic in question, which cannot be done using specific game arguments.
\end{abstract}

\section{Introduction}
\emph{Composition methods} constitute a family of techniques %
in finite model theory for understanding the logical properties of complex structures~\cite{libkin2004elements}. One works in a modular fashion, building a structure out of simpler components. The larger structure can then be understood in terms of the logical properties of its parts, and how they behave under the operations used in the construction. 

The first result of this type was proved by Mostowski~\cite{mostowski1952direct}, who showed that the first-order theory of the product of two structures $A \times B$ is determined by the first-order theories of $A$ and $B$. Later, Feferman and Vaught famously proved a very general result for first-order logic, which included showing that the same holds true for infinite products and infinite disjoint unions of structures of the same type~\cite{feferman1959first}. Since then many more Feferman--Vaught--Mostowski (FVM) theorems\footnote{The terser Feferman--Vaught type theorem is more common in the literature.} have been shown for various operations, logics and types of structures. These theorems have important applications in the theory of algorithms~\cite{makowsky2004algorithmic}, for example in the development of algorithmic meta-theorems such as Courcelle's theorem~\cite{courcelle2012graph}.

For our purposes, the typical form of an FVM theorem for a fixed $n$-ary operation $H$ and logics $L_1,\ldots,L_{n + 1}$ is as follows. 
Given structures $A_1,\dots, A_n$ and $B_1, \dots, B_n$,
\begin{equation}
\label{eq:fvm-general}
\begin{aligned}
    &\tq{For all $i$,} A_i \lequiv{L_i} B_i, \\
    &\tq{implies}
    H(A_1,\,\ldots,\,A_n) \lequiv{L_{n + 1}} H(B_1,\,\ldots,\, B_n).
\end{aligned}
\end{equation}
Here $\equiv_L$ denotes equivalence with respect to the logic $L$.
Typically, $n$ is either one or two. For example, writing $\lequiv{FO}$ for equivalence in first-order logic, Mostowski's result is given in the form of~\eqref{eq:fvm-general} as:
\[
    A_1 \lequiv{FO} B_1 \ete{and} A_2 \lequiv{FO} B_2 \ete{implies} A_1 \times A_2 \lequiv{FO} B_1 \times B_2
\]

A key tool in finite model theory for establishing that two structures are logically equivalent is that of a model comparison game.
Examples of such games are the \ef{}~\cite{ehrenfeucht1961application,fraisse1955quelques}, pebbling~\cite{kolaitis1992infinitary} and bisimulation games~\cite{hennessy1980observing}, establishing equivalence in fragments of first order and modal logic.
FVM theorems are typically proven using these games, building a winning strategy for the composite structure out of winning strategies for the components.

Despite their usefulness, working with model comparison games often requires intricate combinatorial arguments that have to be carefully adapted with even the slightest change of problem domain. To mitigate the difficulty with using game arguments directly, finite model theorists introduced higher-level methods such as locality arguments or 0--1 laws to establish model equivalences or inexpressibility.

The recently introduced \emph{game comonad} framework~\cite{AbramskyDW17,abramsky2021relating}
provides a new such tool. This provides a unifying formalism for reasoning about model comparison games, using categorical methods.
The definition of a comonad enables a transfer of results from the formally dual theory of monads, commonly encountered in the categorical semantics of computation and universal algebra \cite{Moggi89,Moggi91,manes2012algebraic}.
Game comonads are designed to encapsulate a particular model comparison game, and are the key abstraction allowing us to give uniform results, whilst avoiding making arguments specific to a particular logic or its corresponding game.

The early work on game comonads focused on capturing many important logics~\cite{AbramskyM21,ConghaileD21,montacute2021pebble}, laying the foundations for further developments. More recent results recover classical theorems from finite model theory, as well as completely new results~\cite{Paine20,DawarJR21,AbramskyMarsden2022,Reggio21poly}. See \cite{Abramsky22survey} for a recent survey and~\cite{AbramskyR21} for an axiomatic formulation.

Until now, there has been no account of the composition method with the game comonad framework.
We close this gap by introducing a novel method for proving FVM theorems within the game comonad approach, giving results uniformly in the classes of structures, operations and logics involved.
Instantiating the abstract results for our example game comonads yields concrete FVM theorems with respect to three typical fragments of first-order logic:
\begin{enumerate}
    \item \label{en:positive-existential} \emph{the positive existential fragment}, i.e.\ the fragment of first-order logic of formulas not using universal quantification or negations, %
    \item \label{en:counting} \emph{counting logic}, i.e.\ the extension of first-order logic with counting quantifiers,
    \item \label{en:full} \emph{full} first-order logic.
\end{enumerate}
As is standard in finite model theory~\cite{libkin2004elements,ebbinghaus1999finite}, it is natural to grade these logics by a resource parameter, such as quantifier depth.
Correspondingly, the game comonadic approach to grading is to consider collections of comonads $\C_k$, indexed by a resource parameter $k$.
As a source of concrete examples we shall consider three particular game comonads, $\Ek$, $\Pk$ and $\Mk$.
The pair $\Ek$ and $\Pk$ capture first-order logic, with their resource parameters being quantifier depth and variable count respectively. The comonad $\Mk$ encapsulates modal logic up to modal depth $k$.

Conventionally, proving an FVM theorem involves cleverly combining several winning model-comparison game strategies to form a winning strategy on the composite structures.
In our game comonads approach, we only need to find a collection of morphisms of a specified form to obtain an FVM theorem for the positive existential logic.
This collection of morphisms is a formal witness to a combination of strategies.
Perhaps surprisingly, to witness equivalence in the counting logic, the same collection of morphisms only has to satisfy two additional axioms which coincide with the standard definition of Kleisli law~\cite{manes2007monad}.

Although of practical interest, the FVM theorems in Sections~\ref{s:fvm-positive} and~\ref{s:fvm-counting} are relatively straightforward to prove. We present them in detail for pedagogical reasons to develop the ideas gradually.
The abstract FVM theorem for the full fragment is more technically challenging. This theorem shows that two additional assumptions suffice to extend an FVM theorem from counting logic to the full logic.
We can rephrase these assumptions categorically as showing the operation lifts to a parametric relative right adjoint over an appropriate category.
Surprising connections arise with classical results in monad theory, generalizing the notion of bilinear map and its classifying tensor product from ordinary linear algebra~\cite{kock1971bilinearity,jaklmarsdenshah2022bim}.

Our three running example game comonads $\Ek,\Pk, \Mk$ are introduced in Section~\ref{s:preliminaries}, along with some necessary background on comonads.
Sections~\ref{s:fvm-positive}, \ref{s:fvm-counting} and \ref{s:fvm-standard} all follow the same pattern. We introduce the categorical structure abstracting logical equivalence with respect to the studied logical fragment, establish the corresponding categorical FVM theorem and illustrate this with concrete examples.
The examples in Section \ref{s:fvm-counting} include a new refinement of a result by Dawar, Severini, and Zapata~\cite{dawar2017pebble}, showing that equivalence in 3-variable counting logic without equality and with restricted conjunctions implies cospectrality.
Section~\ref{sec:product-theorems} develops FVM theorems for products of structures, uniformly in the logic of interest, greatly generalising Mostowski's original result. Unlike our methods, which are parametric in the choice of game comonad, a typical game argument cannot be used to prove such a result, as the game is tied to a specific choice of logic.

In fact, the game comonads $\Ek$ and $\Pk$ naturally deal with first-order logic \emph{without equality}. There is a standard and very flexible technique for incorporating equality, and other ``enrichments'' of logics~\cite{abramsky2021relating}. This is introduced and related to FVM theorems in Section~\ref{s:enrichments}, and illustrated with further examples, including the handling of weak-bisimulation.

\section{Preliminaries}
\label{s:preliminaries}

\subsection{Categories of Relational Structures}

We assume the basic definitions of category theory, including categories, functors and natural transformations, as can be found in any standard introduction such as~\cite{abramskytzevelekos2010introduction} or \cite{awodey2010category}. Background on comonads, and any less standard material is introduced as needed.

A \df{relational signature} $\sg$ is a set of \emph{relation symbols}, each with an associated strictly positive natural number \df{arity}.
A~$\sg$-structure~$A$ consists of a~\df{universe} set, also denoted $A$, and for each relation symbol $R \in \sg$ of arity~$n$, an~$n$-ary relation~$R^\As$ on~$A$. 
A \df{morphism of~$\sg$-structures} $f\colon \As \rightarrow \Bs$ is a function of type~$A \rightarrow B$, preserving relations, in the sense that for~$n$-ary $R \in \sg$ 
\[ R^\As(a_1,\ldots,a_n) \qtq{implies} R^\Bs(f(a_1),\ldots, f(a_n)) . \]

For a fixed $\sg$, $\sg$-structures and their morphisms form our main category of interest~$\Rel$. We will also have need of the category of pointed relational structures $\Rels$. Here the objects~$(A,a)$ are a $\sg$-structure $A$ with a distinguished element $a\in A$. The morphisms are $\sg$-structure morphisms that also preserve the distinguished element.

\subsection{Comonads}
\label{s:comonads}
A \df{comonad in Kleisli form} on a category $\CC$ is a triple $(\C, \counit, (-)^*)$ where $\C$ is an object map $\obj(\CC)\to \obj(\CC)$, the \df{counit} $\counit$ is an~$\obj({\CC})$-indexed family of morphisms $\counit_A\colon \C(A) \to A$, for $A \in \obj(\CC)$. Lastly, the operation~$(-)^*$ maps morphisms~$f: \C(\As) \rightarrow \Bs$ to their \df{coextension} $f^*:\C(\As) \rightarrow \C(\Bs)$,
subject to the following equations:
\begin{align}
    (\counit_A)^* = \id_{\C(A)}, \quad \counit_B \circ f^* = f,\quad (g \circ f^*)^* = g^* \circ f^*,
    \label{eq:comonad-axioms}
\end{align}
It is then standard~\cite{manes2012algebraic} that $\C$ extends to a functor, with the action on morphisms defined by~$\C(f) = (f \circ \counit_\As)^*$. Also, the counit is a natural transformation~$\C \Rightarrow \Id$, from $\C$ to the identity functor $\Id\colon \CC \to \CC$ and the morphisms~$\delta_A := \id_{\C(A)}^* : \C(A) \rightarrow \C^2(A)$ form a \df{comultiplication} natural transformation satisfying
$\counit \circ \delta = \id = \C(\counit) \circ \delta$ and $\delta \circ \delta = \C(\delta) \circ \delta$.

\medskip
We now introduce the game comonads that we shall regularly refer to in our examples. Each is parameterized by a positive integer~$k$. We shall write~$A^+$ for the set of non-empty words over~$A$, and~$A^{\leq k}$ for its restriction to words of at most~$k$ elements. The first two comonads are defined over $\Rel$:

\begin{description}[leftmargin=0pt]
    \item[\ef{} comonad~$\Ek$:] The universe of~$\Ek(\As)$ is~$\As^{\leq k}$. The counit $\counit_A$ maps $[a_1,\ldots,a_n]$ to $a_n$. $R \in \sg$ an $n$-ary relation symbol, $R^{\Ek(\As)}(s_1,\ldots,s_n)$ iff:
    \begin{enumerate}
    \item the $s_i$ are pairwise comparable in the prefix order on words, and
    \item $R^\As(\counit_A(s_1),\ldots,\counit_A(s_n))$.
    \end{enumerate}
    The coextension of~$h : \Ek(\As) \rightarrow \Bs$ is
    \[ h^*([a_1,\ldots,a_n]) = [h([a_1]), \ldots, h([a_1,\ldots,a_n])] . \]

    \item[Pebbling comonad~$\Pk$:] The universe of $\Pk(\As)$ is 
    \[ (\{ 0,\ldots,k - 1 \} \times \As)^+ .\] %
    The counit $\counit_A$ maps $[(p_1,a_1),\dots,(p_n,a_n)]$ to $a_n$.
    We refer to the first element in a pair~$(p,a)$ as a~\df{pebble index}.  For $n$-ary~$R \in \sg$,
    $R^{\Pk(\As)}(s_1,\ldots,s_n)$ iff:
    \begin{enumerate}

        \item \label{cond:peb-compare} The $s_i$ are pairwise comparable in the prefix order on words.
        \item \label{cond:peb-active} For~$s_i, s_j$ with~$s_i$ a prefix word of $s_j$, the pebble index of the last pair of~$s_i$ does not reappear in the remaining elements of~$s_j$.
        \item \label{cond:peb-compatible} $R^{\As}(\counit_A(s_1),\ldots,\counit_A(s_n))$.
    \end{enumerate}
    For~$h\colon \Pk(\As) \rightarrow \Bs$, the coextension $h^*$ acts on words as:
    \begin{align*}
        &h^*([(p_1,a_1),\ldots,(p_n,a_n)]) :=\\
        &[(p_1,h([a_1])), \ldots, (p_n, h([a_1,\ldots,a_n]))]
    \end{align*}
\end{description}
We say that a signature $\sg$ is a~\df{modal signature} if it only has relational symbols of arity one or two.
\begin{description}[leftmargin=0pt]
    \item[Modal comonad~$\Mk$:] Defined over $\Rels$ for a modal signature $\sg$, the universe of~$\Mk(A,a_0)$ consists of paths in $A$ starting from $a_0$, i.e.\ sequences 
    \[ a_0 \xrightarrow{R_1} a_1 \xrightarrow{R_2} a_2 \xrightarrow{R_3} \ldots \xrightarrow{R_n} a_n \]
    such that~$n \leq k$ and, for every $i$, $R_i\in \sg$ and $R_i^{\As}(a_{i-1},a_i)$. The counit and coextension act as for~$\Ek$.

    Given a sequence $s' = a_0 \xrightarrow{R_1} \ldots \xrightarrow{R_{n-1}} a_{n-1}$ and its extension $s$ by $a_{n-1} \xrightarrow{R_n} a_n$ in $\Mk(A,a_0)$, then $R_{n}^{\Mk(A,a_0)}(s', s)$ and $R_{n}^{\Mk(A,a_0)}$ only consists of such pairs. For a unary $P\in \sg$, $P^{\Mk(A,a_0)}(s)$ iff $P^A(a_n)$.
\end{description}

\begin{theorem}[\cite{AbramskyDW17,abramsky2021relating}]
$\Ek$, $\Mk$ and~$\Pk$ are comonads in Kleisli form.
\end{theorem}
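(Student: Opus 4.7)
The plan is to verify, for each of the three comonads in turn, the three Kleisli-form axioms in \eqref{eq:comonad-axioms}. Since the underlying data of all three comonads involves words or sequences, with counit extracting the last element and coextension given by the same ``list of prefixes'' formula, the algebraic computations verifying the three equations are essentially uniform across $\Ek$, $\Pk$, and $\Mk$. The substantive content specific to each comonad is showing that counit and coextension are morphisms in the relevant category of relational structures.

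First, I would verify that $\counit_A$ is a morphism. For $\Ek$ and $\Pk$, the last clause in the definition of $R^{\Ek(A)}$ (respectively $R^{\Pk(A)}$) is precisely $R^A(\counit_A(s_1), \ldots, \counit_A(s_n))$, so the required preservation is immediate. For $\Mk$, both the successor-like binary relations and the unary predicates on $\Mk(A,a_0)$ are defined in terms of endpoints of paths, and the counit sends the trivial path based at $a_0$ to $a_0$, preserving the basepoint in $\Rels$.

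Second, I would verify that the coextension $h^*$ of $h \colon \C(A) \to B$ is a morphism. Unfolding the formula, $h^*$ applied to a word lists the images under $h$ of its prefixes, so $h^*$ preserves prefix comparability and, for $\Pk$, copies pebble indices verbatim; this handles the first (and, for $\Pk$, the second) clause in the definition of the relations on $\C(A)$. The final clause reduces to $R^B(\counit_B(h^*(s_1)), \ldots, \counit_B(h^*(s_n)))$, and, since $\counit_B \circ h^* = h$ (one of the axioms, established below), this follows from $h$ being a morphism.

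Finally, the three Kleisli axioms follow by direct computation on the list structure. The identity $(\counit_A)^*([a_1,\ldots,a_n]) = [\counit_A([a_1]), \ldots, \counit_A([a_1,\ldots,a_n])] = [a_1,\ldots,a_n]$ gives $(\counit_A)^* = \id_{\C(A)}$. The equation $\counit_B \circ h^* = h$ follows because $\counit_B$ picks the last entry of the prefix-list. For associativity, both $(g \circ f^*)^*$ and $g^* \circ f^*$, applied to $[a_1, \ldots, a_n]$, produce the list whose $i$-th entry is $g(f^*([a_1, \ldots, a_i]))$, using that the $i$-th prefix of $f^*([a_1,\ldots,a_n])$ coincides with $f^*([a_1,\ldots,a_i])$. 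I expect the main bookkeeping obstacle to be the pebble-index condition for $\Pk$: one needs that whenever $s_i$ is a prefix of $s_j$, the last pebble index of $s_i$ does not reappear later in $s_j$. Since $h^*$ leaves pebble indices untouched, this condition is inherited directly from the input words, and no subtler argument is required.
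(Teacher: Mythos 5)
The paper does not prove this theorem; it is imported from \cite{AbramskyDW17,abramsky2021relating}, and the proof given there is exactly the direct verification you outline. Your argument is correct: the three equations in \eqref{eq:comonad-axioms} are uniform list computations (with associativity resting on the observation that the $i$-th prefix of $f^*(s)$ equals $f^*$ applied to the $i$-th prefix of $s$), and the comonad-specific content is confined to checking that $\counit_A$ and $h^*$ are morphisms, which you handle correctly, including the pebble-index clause for $\Pk$. The only point you gloss slightly is for $\Mk$: before discussing relation preservation one should note that $h^*$ is well defined as a function into $\Mk(B,b_0)$, i.e.\ that the sequence of images of prefixes is genuinely a path in $B$ starting at $b_0$; this follows because consecutive prefixes are related by $R_n^{\Mk(A,a_0)}$ and $h$ preserves both these relations and the basepoint, so it is a one-line addition rather than a gap.
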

The intuition for each of these comonads is that they encode the moves within one structure during the corresponding model comparison game. We shall introduce their mathematical properties as needed in later sections. See~\cite{abramsky2021relating} for detailed discussions of all three comonads. 

\section{FVM theorems for positive existential fragments}
\label{s:fvm-positive}
For a game comonad $\C$, we introduce the following two relations on structures:
\begin{itemize}
    \item $A \parrow{\C} B$ if there exists a morphism $\C(A) \to B$.
    \item $A \pequiv{\C} B$ if $A \parrow{\C} B$ and $B \parrow{\C} A$.
\end{itemize}
It is a standard fact, formulated categorically in~\cite{abramsky2021relating}, that %
$A \parrow{\Ek} B$ if and only if for every positive existential sentence $\varphi$ of quantifier depth at most $k$,
$A \models \varphi$ implies $B \models \varphi$,
and therefore $A \pequiv{\Ek} B$ if and only if structures $A$ and $B$ agree on the quantifier depth $k$ fragment. Similarly, $A \parrow{\Pk} B$ and $A \pequiv{\Pk} B$ correspond to the same relationships, but for the $k$ variable fragment \cite{AbramskyDW17}.
\begin{remark}
    For conciseness, in this section, and Sections~\ref{s:fvm-counting}, \ref{s:fvm-standard} and \ref{sec:product-theorems}, references to first-order logic implicitly means the variant \emph{without equality} and with infinite conjunctions and disjunctions. We shall return to the variant with equality in Section~\ref{s:enrichments}. Note that logical equivalence of two \emph{finite} structures in a fragment of logic with infinite conjunctions and disjunctions is the same as in the corresponding fragment of first-order logic with finite conjunctions and disjunctions.
\end{remark}

We now consider FVM theorems for $\parrow{\C}$ and $\pequiv{\C}$, parametric in $\C$. To motivate our abstract machinery, we first consider a warm-up example, that is nevertheless sufficient to highlight the key ideas. 
For a signature $\sg$, and $\tau \subseteq \sg$, the \df{$\tau$-reduct} of a $\sg$-structure $\As$ is the $\tau$-structure on the same universe which interprets all the relation symbols in $\tau$ as in $\As$.
We aim to show that equivalence in positive existential first-order logic with quantifier rank ${\leq}\, k$ is preserved by taking $\tau$-reducts of $\sigma$-structures for any $\tau \subseteq \sigma$.  We can view the $\tau$-reduct operation as a functor $\fgST{(-)} \colon \Rel \to \R(\tau)$. Observe that taking reducts admits an FVM theorem for the \ef{} comonad $\Ek$, viewed as a comonad on both $\Rel$ and $\R(\tau)$ categories. Given the logical reading of $\parrow{\Ek}$, it is an easy observation that $A \parrow{\Ek} B$ implies $\fgST{A} \parrow{\Ek} \fgST{B}$. Indeed, if the positive existential sentences of quantifier depth ${\leq}\,k$ in signature $\sg$ that are true in $A$ are also true in $B$, then the same must hold for the positive existential sentences in the reduced signature $\tau$.

In this case, the FVM theorem can be proved using an ad-hoc argument about the operation involved. However, to identify a general strategy, we would like to prove this fact categorically. Given a morphism $f\colon \Ek(A) \to B$ in $\Rel$ witnessing $A \parrow{\Ek} B$, we wish to construct an $\ol f\colon \Ek(\fgST{A}) \to \fgST{B}$ witnessing $\fgST{A} \parrow{\Ek} \fgST{B}$. To this end, observe that there is a morphism of $\tau$-structures
\[ \kappa_A\colon \Ek(\fgST{A}) \to \fgST{\Ek(A)}\]
which sends a word $[a_1,\dots,a_n]$ in $\Ek(\fgST{A})$ to the same word in $\fgST{\Ek(A)}$. Therefore, $\ol f$ can be computed as the composite
\begin{equation}
    \Ek(\fgST{A}) \xrightarrow{\kappa_A} \fgST{\Ek(A)} \xrightarrow{\fgST{f}} \fgST{B}.
    \label{eq:fgf-ka}
\end{equation}

It is immediate that, since the definition of $\kappa_A$ was not specific to $A$, there is such a morphism for every structure. Consequently, we obtain the following trivial pair of FVM theorems:
\begin{align*} 
A \parrow{\Ek} B &\ete{implies} \fgST{A} \parrow{\Ek} \fgST{B} \\ 
A \pequiv{\Ek} B &\ete{implies} \fgST{A} \pequiv{\Ek} \fgST{B}
\end{align*}

As this exercise demonstrates, all that is required to prove an FVM theorem for the relations $\parrow{\Ek}$ or $\pequiv{\Ek}$, and a unary operation $H$, is to define for every structure $A$ a morphism of $\sg$-structures $\kappa_A\colon \Ek(H(A)) \rightarrow H(\Ek(A))$. It is not difficult to see that the same proof goes through when we parameterise over the comonads involved and allow operations of arbitrary arity. We obtain our first abstract FVM theorem.
\begin{restatable}{theorem}{FVMpe}
    \label{t:fvm-pe}
    Let $\C_1,\dots,\C_n$ and $\D$ be comonads on categories $\CC_1,\dots,\CC_n$, and $\CD$ respectively, and 
    \[ \op\colon \CC_1\times\dots\times\CC_n \to \CD \]
    a functor.
    If for every $A_1 \in \obj(\CC_1),\, \dots,\, A_n \in \obj(\CC_n)$ there exists a morphism
    \begin{equation}
        \label{eq:kappa-collection}
        \D(H(A_1,\dots,A_n)) \xrightarrow{\!\!\!\overset{\kappa_{A_1,\dots,A_n}}{\quad}\!\!\!} H(\C_1(A_1),\dots,\C_n(A_n))
    \end{equation}
    in $\CD$, then
    \[ A_1 \parrow{\C_1} B_1, \;\ldots,\; A_n \parrow{\C_n} B_n \]
    implies
    \[ H(A_1,\dots,A_n) \parrow{\D} H(B_1,\dots,B_n) \]
    The same result holds when replacing $\parrow{\C}$ with $\pequiv{\C}$.
\end{restatable}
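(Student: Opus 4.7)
The plan is to generalise directly the argument sketched for the $\tau$-reduct warm-up, replacing the single morphism $\kappa_A$ by the hypothesised family $\kappa_{A_1,\dots,A_n}$ and the unary functoriality by the $n$-ary functoriality of $H$.

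Suppose morphisms $f_i\colon \C_i(A_i) \to B_i$ witness $A_i \parrow{\C_i} B_i$ for each $i$. I would construct a witness of $H(A_1,\dots,A_n) \parrow{\D} H(B_1,\dots,B_n)$ as the composite in $\CD$
\begin{align*}
\D(H(A_1,\dots,A_n)) &\xrightarrow{\kappa_{A_1,\dots,A_n}} H(\C_1(A_1),\dots,\C_n(A_n)) \\
&\xrightarrow{H(f_1,\dots,f_n)} H(B_1,\dots,B_n),
\end{align*}
where the first arrow is the hypothesised $\kappa$ morphism and the second uses that $H$, as an $n$-ary functor, acts on the tuple $(f_1,\dots,f_n)$.

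For the equivalence version, I would apply the first part symmetrically. The hypothesis $A_i \pequiv{\C_i} B_i$ unfolds to both $A_i \parrow{\C_i} B_i$ and $B_i \parrow{\C_i} A_i$; feeding each direction into the previous construction yields the two morphisms needed to conclude $H(A_1,\dots,A_n) \pequiv{\D} H(B_1,\dots,B_n)$.

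I do not anticipate a significant technical obstacle. The content of the theorem lies in identifying the right abstract hypothesis (existence of the family $\kappa$); once that is in hand, the proof is a one-step diagram chase using only functoriality of $H$. In particular, the comonad structures on the $\C_i$ and on $\D$ are not used, nor is any compatibility between $\kappa$ and the comonadic data. This is precisely why the same hypothesis will prove insufficient once we pass from the positive existential setting to counting or full logic in the subsequent sections, where additional axioms on $\kappa$ will be needed.
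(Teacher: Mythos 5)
Your proof is correct and is essentially identical to the paper's: both construct the witness as $H(f_1,\dots,f_n)\circ\kappa_{A_1,\dots,A_n}$ and handle the $\pequiv{\C}$ case by applying the same construction to the reverse-direction witnesses. Your closing observation that neither the comonad structure nor any compatibility of $\kappa$ with it is used also matches the paper's framing of why stronger axioms are needed for the counting and full fragments.
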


\begin{remark}[Infinitary Operations]
    \label{r:inf-ops}
For readability reasons, we state this as well as the other abstract FVM theorems, such as Theorems~\ref{t:fvm-counting} and~\ref{t:fvm-standard} below, for operations of finite arities only. Nevertheless, all these statements hold verbatim for infinitary operations as well.
\end{remark}

\subsection{Examples}
\label{sec:examples-pe}
\begin{example}
\label{ex:coproducts-pe}
    As our first simple application of Theorem~\ref{t:fvm-pe}, we obtain an FVM theorem showing that~$\parrow{\Pk}$ is preserved by taking the disjoint union~$A_1 \uplus A_2$ of two $\sg$-structures~$A_1$, $A_2$. 
    The universe of~$A_1 \uplus A_2$ can be encoded as pairs~$(i,a)$ where $i \in \{1,2\}$ and~$a \in A_i$. 
    For $r$-ary relation $R \in \sg$, the interpretation $R^{A_1 \uplus A_2}$ is defined as
    \[
    R^{A_1 \uplus A_2}((i_1,a_1),\dots,(i_r,a_r))
    \]
    if and only if there exists an $i$ such that for all $1 \leq j \leq r$, $i_j = i$ and
    $R^{A_i}(a_1,\dots,a_r)$.
    We define a morphism of $\sg$-structures 
    \[ \kappa_{A_1,A_2}\colon\Pk (A_1 \uplus A_2) \rightarrow \Pk A_1 \uplus \Pk A_2 \]
    which sends a~word~$s = [(p_1,(i_1,a_1)),\dots,(p_n,(i_n,a_n))]$ in~$\Pk (A_1 \uplus A_2)$ to the pair~$(i_n,\nu(s))$,
    where the word~$\nu(s)$ is the restriction of~$s$ to the pairs~$(p_j,a_j)$ such that~$i_j = i_n$.
    Since the definition of~$\kappa_{A_1,A_2}$ is not specific to the structures~$A_1,A_2$, by Theorem~\ref{t:fvm-pe}, the following holds: 
    \begin{align*} 
    &A_1 \parrow{\Pk} B_1 \ete{and}  A_2 \parrow{\Pk} B_2 \\
    &\tq{implies} 
    A_1 \uplus A_2 \parrow{\Pk} B_1 \uplus B_2 
    \end{align*}
    Consequently, the same statement holds with $\parrow{\Pk}$ replaced by $\pequiv{\Pk}$.
    The statement demonstrates that if both $A_1$ and $B_1$, and $A_2$ and $B_2$, satisfy the same sentences of the positive existential fragment of $k$-variable logic, 
    then the disjoint unions $A_1 \uplus A_2$ and~$B_1 \uplus B_2$ also satisfy the same sentences of that fragment.
    Similar $\kappa$ morphisms can be defined to demonstrate that~$\parrow{\Pk}$, $\pequiv{\Pk}$ and also $\parrow{\Ek}$ and~$\pequiv{\Ek}$ are preserved by taking disjoint unions of \emph{arbitrary} sets of $\sg$-structures.
\end{example}
Even in the case of relatively simple operations on structures, it may be the case that logical
equivalence is not preserved.
\begin{counter}
\label{counter:ML-coproducts}
For the modal fragment, a positive modal formula is one without negation or the $\Box$ modality.
As shown in \cite{abramsky2021relating}, for pointed structures 
\[ \As = (A,a)\ete{and}\Bs = (B,b)\]
in a modal signature, $\As \parrow{\Mk} \Bs$ if and only if for every positive modal formula $\varphi$ of modal depth at most $k$,
$\As \models \varphi$ implies $\Bs \models \varphi$,
and therefore $\As \pequiv{\Mk} \Bs$ if and only if the two structures agree on positive modal formulae with modal depth bounded by $k$.

Categorically, the disjoint union of~$\sg$-structures is their \emph{coproduct} in~$\Rel$.
The formal definition of coproducts is unimportant for this example, but we note that
coproduct of two \emph{pointed} $\sg$-structures~$(A_1,a_1) + (A_2,a_2)$ in $\Rels$ is a quotient of their disjoint union, where the distinguished points~$a_1$ and $a_2$ are identified.
Unlike with~$\Ek$ and~$\Pk$, the relations~$\parrow{\Mk}$ and~$\pequiv{\Mk}$ are \emph{not} preserved by this operation on pointed~$\sg$-structures, essentially because the two components can interact.
This shows that although coproduct FVM theorems are relatively simple, it is by no means automatic that they will hold.
\end{counter}

\begin{example}
\label{ex:coproducts-with-choice-pe}

For every binary relation $R$ in a modal signature $\sg$, there is an operation $\As_1 \merge{R} \As_2$ which combines the two pointed structures by adding a new initial point~$\star$ with $R$-transitions to $a_1$ and $a_2$. Operations such as this are commonly found when modelling concurrent systems with process calculi~\cite{stirling2001modal, roscoe2010understanding}. Intuitively this operation avoids the interactions that caused problems in Counterexample~\ref{counter:ML-coproducts}.

There is a morphism of pointed $\sg$-structures
 \[\kappa_{\As_1, \As_2}\colon \comonad{M}_{k+1}(\As_1 \merge{R} \As_2)\rightarrow \Mk(\As_1) \merge{R} \Mk(\As_2) \]
All sequences in the domain are of the form 
\[* \xrightarrow{R} (i,c_0) \xrightarrow{R_1} \ldots \xrightarrow{R_n} (i,c_n), \]
with $i \in \{1,2\}$ and each $(i,c_j) \in A_1 \uplus A_2$ in the same component.
Then, $\kappa_{\As_1,\As_2}$ must preserve the distinguished element, and for sequences of length greater than one it sends $* \xrightarrow{R} (i,c_0) \xrightarrow{R_1} \ldots \xrightarrow{R_n} (i,c_n)$ to $(i, c_0 \xrightarrow{R_1} \ldots \xrightarrow{R_n} c_n)$.
    Applying Theorem~\ref{t:fvm-pe} yields the FVM theorem:
\begin{align*}
    &\As_1 \parrow{\Mk} \Bs_1 \ete{and} \As_2 \parrow{\Mk} \Bs_2 \\ 
    &\tq{implies} \As_1 \merge{R} \As_2 \parrow{\comonad{M}_{k+1}} \Bs_1 \merge{R} \Bs_2
\end{align*}
Notice that the resource index actually \emph{increases} from $k$ to $k + 1$ because of the shape of $\kappa_{\As_1,\As_2}$.
\end{example}

\begin{example}
\label{ex:morphism-pe}
    A nice feature of the generality of Theorem~\ref{t:fvm-pe} is that for different comonads $\C$ and $\D$ over the same category $\CC$, a natural transformation between the functors $\C \Rightarrow \D$ or, more generally, a collection of morphisms 
    \[ \{ \kappa_{A}\colon \D A \rightarrow \C A \mid A \in \obj(\CC)\} \]
    can be seen as instances of \eqref{eq:kappa-collection}, with $H$ equal to the identity functor $\Id\colon \CC \to \CC$.
    This allows us to provide semantic translations for the logics captured by game comonads, a question that had not previously been addressed.
    For example, if we consider $\comonad{P}_2$ as a comonad over $\Rels$ where $\comonad{P}_2(A,a_0)$ has distinguished point $[(0,a_0)]$, then for every $k \in \mathbb{N}$ and object $(A,a_0)$ we define a morphism of pointed $\sg$-structures 
    \[ \kappa_{(A,a_0)}\colon \Mk(A,a_0) \rightarrow \comonad{P}_2(A,a_0) \]
    which sends the sequence of transitions 
    \[ a_0 \xrightarrow{R_1} \ldots \xrightarrow{R_{n-1}} a_{n-1} \xrightarrow{R_n} a_n \]
    in $\Mk(A,a_0)$
    to the word in $\comonad{P}_2(A,a_0)$ where elements are labeled by the parity of their position
    \[ [(0,a_0),(1,a_1),\dots,(n \bmod 2,a_n)]. \]
    Applying Theorem~\ref{t:fvm-pe} yields the well-known fact that structures which satisfy the same sentences of the positive existential fragment of two-variable logic must also satisfy the same positive modal formulas for any modal depth $k$
    \begin{align*}
    &(A,a_0) \pequiv{\comonad{P}_2} (B,b_0) \\ 
    &\tq{implies}
    (\forall k \in \mathbb{N}. \ \ (A,a_0) \pequiv{\Mk} (B,b_0)). 
    \end{align*}
    Similarly, there exist morphisms 
    $\kappa_{A}\colon \Ek A \rightarrow \Pk A$ for every $A$ in $\Rel$ where $[a_1,\dots,a_n]$ with $n\leq k$ is sent to $[(0,a_1),\dots(n-1,a_n)]$ demonstrating that~$\parrow{\Pk}$ refines~$\parrow{\Ek}$.
\end{example}

\section{FVM theorems for counting logic}
\label{s:fvm-counting}
We now consider another relationship induced by a game comonad. To do so, we first introduce one of the two standard categories associated with any comonad.

The \df{Kleisli category}~\cite{kleisli1965every} $\Kl \C$ for a comonad $\C$ on $\CC$, has the same objects as~$\CC$. The morphisms of type $A \rightarrow B$ in $\Kl \C$ are the morphisms of type $\C(A)\to B$ in $\mathcal{C}$. To avoid ambiguity, we shall write $f : A \klto{\C} B$ if we intend $f$ to be understood as a Kleisli morphism. The identity morphism at $A$ is the counit component $\counit_A : \C(A) \rightarrow A$.
As composition is different to that in the base category, we use the distinct notation $g \kirc f$ for the composite of morphisms
$f \colon A \klto \C B$ and $g \colon B \klto \C C$,
which is defined as $g \circ f^*$, where we recall $f^* : \C(A) \rightarrow \C(B)$ is the coextension of~$f \colon \C(A) \to B$. The axioms for a comonad in Kleisli form ensure that this is a well-defined category. Notice that the morphisms in the Kleisli category are exactly those that were important for positive existential fragments in Section~\ref{s:fvm-positive}.

As with any category, we can consider the isomorphisms in $\Kl \C$. We shall write $A \cequiv{\C} B$ if structures $A$ and $B$ are isomorphic in the Kleisli category. For finite structures, the relations $A \cequiv{\Ek} B$ and $A \cequiv{\Pk} B$ correspond to equivalence in counting logic \cite{AbramskyDW17,abramsky2021relating}. %
The ideas of Theorem~\ref{t:fvm-pe} can be extended to establish an FVM theorem for counting fragments and the $\cequiv{\C}$ relation. All that is required is to impose extra conditions on the collection of morphisms~\eqref{eq:kappa-collection}. %
Furthermore, these extra conditions turn out to be those of the well-known Kleisli laws related to lifting functors to Kleisli categories~\cite{manes2007monad}.

To motivate our comonadic method, we return to the minimal reduct example from Section~\ref{s:fvm-positive}. We can show that $A \cequiv{\Ek} B$ implies $\fgST{A} \cequiv{\Ek} \fgST{B}$,
via a similar ad-hoc argument to before.
In our setting, we wish to establish that, for mutually inverse morphisms
$f\colon A \klto \Ek B$ and $g\colon B \klto \Ek A$ in $\Kl \Ek$,
the composites $\fgST{f}\circ \kappa_A$ and $\fgST{g} \circ \kappa_B$, constructed as in~\eqref{eq:fgf-ka}, are also mutually inverse in the Kleisli category.
In the following we analyse the abstract setting in which the relation $\cequiv{\C}$ is defined and derive a general result, similar to Theorem~\ref{t:fvm-pe}.

\subsection{Kleisli laws}
\label{s:kleisli-laws}
Akin to our motivating example, we assume a unary operation $H : \CC \rightarrow \CD$,
comonads $(\C, \counit^\C, (-)^*)$ on $\CC$ and $(\D, \counit^\D, (-)^*)$ on $\CD$, and also a collection of morphisms
\[ \{\kappa_A\colon \D H(A) \to H \C(A) \mid A\in \obj(\CC)\}, \]
as in Theorem~\ref{t:fvm-pe}.  We wish to find conditions on this collection of morphisms ensuring that $A \cequiv{\C} B$ implies $H(A) \cequiv{\D} H(B)$. Rephrasing, we require axioms such that if $f\colon A \klto{\C} B$ and $g\colon B \klto{\C} A$ are mutually inverse, then the composites 
\[ H(f)\circ \kappa_A\colon H(A) \klto{\D} H(B) \] and \[ H(g)\circ \kappa_B\colon H(B) \klto{\D} H(A) \] 
are also mutually inverse. To do so, we observe that if both
\begin{equation}
    \label{eq:kl-law-coext-one}
    H(\counit_A) \circ \kappa_A = \counit_{H(A)}
\end{equation}
and
\begin{equation}
    \label{eq:kl-law-coext-two}
    H(f^*) \circ \kappa_A = \kappa_B \circ (H(f)\circ \kappa_A)^*
\end{equation}
then, from $g \kirc f = \id$ in $\Kl \D$ (i.e.\ $g \circ f^* = \counit_A$), we obtain that
\begin{align*}
    & (H(g)\circ \kappa_B) \kirc (H(f)\circ \kappa_A) \\
    &= H(g)\circ \kappa_B \circ (H(f)\circ \kappa_A)^* &\text{definition} \\
    &= H(g)\circ H(f^*) \circ \kappa_A & \eqref{eq:kl-law-coext-two} \\
    &= H(g\circ f^*) \circ \kappa_A & \text{functoriality} \\
    &= H(g \kirc f) \circ \kappa_A & \text{definition} \\
    &= H(\counit_A) \circ \kappa_A & \text{assumption} \\
    &= \counit_{H(A)} : \D(H(A)) \to H(A) & \eqref{eq:kl-law-coext-one}
\end{align*}
and, similarly, 
$f \kirc g = \id$ implies that 
\[ (H(f)\circ \kappa_A) \kirc (H(g)\circ \kappa_B) \]
equals to identity in $\Kl \D$.
In fact, equations~\eqref{eq:kl-law-coext-one} and~\eqref{eq:kl-law-coext-two} are equivalent to requiring that $\kappa$ is a \df{Kleisli law}~\cite{manes2007monad}.
\begin{lemma}
    \label{l:klei-ax-clone-form}
    A collection of morphisms $\{\kappa_A\}$ satisfies equations~\eqref{eq:kl-law-coext-one} and~\eqref{eq:kl-law-coext-two} for every $f \colon \C(A) \to B$ iff $\kappa$ is a natural transformation $\D \circ \op \Rightarrow \op \circ \C$ such that the following commute:
    \[
    \begin{tikzcd}
    \D \circ \op \dar[swap]{\kappa} \drar{\counit_{\op}} & \\
    \op \circ \C \rar[pos=0.4]{H\counit} & H
    \end{tikzcd}
    \quad
    \begin{tikzcd}[column sep=1.7em]
    \D \circ \op \dar[swap]{\kappa} \rar{\delta_{\op}} & \D^2 \circ \op \rar{\D\kappa}  & \D\circ\op\circ\C \dar{\kappa_\C} \\
    \op \circ \C \arrow[rr, "H\delta"] & & \op \circ \C^2
    \end{tikzcd}
    \]
\end{lemma}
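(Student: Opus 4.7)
The plan is to unpack each side of the equivalence in terms of the comonad axioms \eqref{eq:comonad-axioms} recalled in Section~\ref{s:comonads}. The bridging identity driving both directions is that, for every $h\colon \D(X)\to Y$, the coextension satisfies $h^* = \D(h)\circ \delta_X$. This follows from the Kleisli-form axioms: unfolding $\D(h) = (h\circ \counit_{\D(X)})^*$ and $\delta_X = \id_{\D(X)}^*$, applying $(g\circ f^*)^* = g^*\circ f^*$ in reverse, and collapsing $\counit_{\D(X)}\circ \id_{\D(X)}^* = \id_{\D(X)}$, yields $\D(h)\circ \delta_X = (h\circ \id_{\D(X)})^* = h^*$. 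The dual identity for $\C$ holds analogously. With this in hand, both directions reduce to short calculations.

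For the forward direction, I assume $\{\kappa_A\}$ satisfies \eqref{eq:kl-law-coext-one} and \eqref{eq:kl-law-coext-two}. To extract naturality, given $g\colon A\to B$ in $\CC$, I specialise \eqref{eq:kl-law-coext-two} at $f \mathrel{:=} g\circ \counit_A\colon \C(A)\to B$, so that $f^* = \C(g)$ by definition of the functorial action. The left-hand side becomes $H(\C(g))\circ \kappa_A$, while the right-hand side simplifies, using $H(\counit_A)\circ \kappa_A = \counit_{H(A)}$ from \eqref{eq:kl-law-coext-one}, to $\kappa_B\circ (H(g)\circ \counit_{H(A)})^* = \kappa_B\circ \D(H(g))$, which is the naturality square. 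The counit triangle is just a restatement of \eqref{eq:kl-law-coext-one}. For the comultiplication square, I specialise \eqref{eq:kl-law-coext-two} at $f \mathrel{:=} \id_{\C(A)}$: the left-hand side is $H(\delta_A)\circ \kappa_A$, and the right-hand side is $\kappa_{\C(A)}\circ \kappa_A^*$, which equals $\kappa_{\C(A)}\circ \D(\kappa_A)\circ \delta_{H(A)}$ by the bridging identity applied to $\kappa_A$.

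For the converse, I assume $\kappa$ is a natural transformation $\D\circ H \Rightarrow H\circ \C$ making both diagrams commute. Equation \eqref{eq:kl-law-coext-one} is a direct restatement of the counit triangle. For \eqref{eq:kl-law-coext-two}, given $f\colon \C(A)\to B$, I start from $H(f^*)\circ \kappa_A$, use the bridging identity to rewrite $f^* = \C(f)\circ \delta_A$, then chain: functoriality of $H$ gives $H(\C(f))\circ H(\delta_A)\circ \kappa_A$; the comultiplication square replaces $H(\delta_A)\circ \kappa_A$ by $\kappa_{\C(A)}\circ \D(\kappa_A)\circ \delta_{H(A)}$; naturality of $\kappa$ at the morphism $f\colon \C(A)\to B$ in $\CC$ converts $H(\C(f))\circ \kappa_{\C(A)}$ into $\kappa_B\circ \D(H(f))$; functoriality of $\D$ merges $\D(H(f))\circ \D(\kappa_A) = \D(H(f)\circ \kappa_A)$; and one final application of the bridging identity yields $\kappa_B\circ (H(f)\circ \kappa_A)^*$, as required.

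The main obstacle is essentially bookkeeping: identifying the right specialisations of $f$ in the forward direction (one to produce naturality, one to produce the comultiplication square), and chaining naturality with the comultiplication square in the correct order in the converse. The bridging identity $h^* = \D(h)\circ \delta_X$, while quick to verify, is the linchpin that translates between the Kleisli-form equations in $(-)^*$ and the $(\counit,\delta)$-form diagrams.
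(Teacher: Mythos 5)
Your proof is correct and follows essentially the same route as the paper's (which proves the $n$-ary generalisation in the appendix): the same bridging identities $\C(g)=(g\circ\counit)^*$, $\delta=\id^*$, $h^*=\D(h)\circ\delta$, the same specialisations $f=g\circ\counit_A$ and $f=\id_{\C(A)}$ to extract naturality and the comultiplication square, and the same chain of (K2) followed by naturality for the converse. No gaps.
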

\begin{remark}
\label{rem:comonad-morphisms}
Kleisli laws correspond precisely to liftings of the operation $\op\colon \CC \to \CD$ to operations $\Kl \C \to \Kl \D$ commuting with the cofree functors $\CC \to \Kl \C$ and $\CD \to \Kl \D$. See for example~\cite{jacobs1994semantics}.
Furthermore, a Kleisli law $\kappa : \C \Rightarrow \D$ with respect to the identity functor is the same thing as a~\df{comonad morphism}.
\end{remark}

To generalise to operations of arbitrary arities, we first introduce some notation.
For a set~$I$, and family of categories $\{\CC_i\}_{i \in I}$, we shall write $\vec{A_i}$ for a~family of objects, with each $A_i \in \CC_i$, and similarly $\veci f$ for families of morphisms. These form the \df{product category} $\prod_{i \in I} \CC_i$, with composition and identities defined componentwise. Given a family of comonads $\C_i : \CC_i \rightarrow \CC_i$, they induce a comonad $\prod_i \C_i$ on the product category, again componentwise.

Using this notation, we define the $n$-ary version of the axioms from Lemma~\ref{l:klei-ax-clone-form}. \df{Kleisli laws} for an $n$-ary operation $\op$ are the natural transformations $\kappa\colon \D \circ \op \Rightarrow \op \circ \prod_i \C_i$ such that the following diagrams commute:
\begin{axioms}
    \item[\textbf{\namedlabel{ax:kl-law-counit}{(K1)}}] %
    \begin{tikzcd}
        \D \circ \op \dar[swap]{\kappa} \drar{\counit_{\op}} & \\
        \op \circ \prod_i \C_i \rar[swap]{H(\prod_i \counit)} & H
    \end{tikzcd}
    \\[0.5em]
    \item[\textbf{\namedlabel{ax:kl-law-comultiplication}{(K2)}}] %
    \begin{tikzcd}[column sep=1.8em]
        \D \circ \op \dar[swap]{\kappa} \rar{\delta_H} & \D^2 \circ \op \rar{\D \kappa} & \D \circ \op \circ \prod_i \C_i \dar{\kappa } \\
        \op \circ \prod_i \C_i \arrow[rr, "\op(\prod_i \delta)"] & & \op \circ \prod_i \C^2_i
    \end{tikzcd}
\end{axioms}
\begin{remark}
    To reduce clutter, in the axioms above and in the rest of this text we often omit the subscripts for the components of natural transformations, if they can be easily inferred from the context. %
\end{remark}

The previous argument generalises smoothly to operations of any arity. We obtain our FVM theorem for counting fragments.
\begin{restatable}{theorem}{FVMcounting}
    \label{t:fvm-counting}
    Let $\C_1,\dots,\C_n$ and $\D$ be comonads on categories $\CC_1,\dots,\CC_n$ and $\CD$, respectively, and 
    $ \op\colon \prod_i \CC_i \to \CD $ 
    a functor. 
    If there exists a Kleisli law of type
    \[ \D\circ H \Rightarrow H \circ \prod\nolimits_i\C_i \]
    then
    \[ A_1 \cequiv{\C_1} B_1 \;\ldots,\; A_n \cequiv{\C_n} B_n \]
    implies
    \[ H(A_1,\dots,A_n) \;\cequiv{\D}\; H(B_1,\dots,B_n) . \]
\end{restatable}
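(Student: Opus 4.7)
The plan is to reduce the $n$-ary statement to the unary argument that was already carried out in detail in the excerpt just before Lemma~\ref{l:klei-ax-clone-form}. First, I would recall that the family of comonads $\C_1,\dots,\C_n$ bundles into a single \emph{product comonad} $\C := \prod_i \C_i$ on $\prod_i \CC_i$, whose counit and comultiplication are defined componentwise. A standard observation is that the Kleisli category of this product is the product of Kleisli categories, so any tuple of Kleisli isomorphisms $f_i : A_i \klto{\C_i} B_i$ with inverses $g_i : B_i \klto{\C_i} A_i$ assembles into a single Kleisli isomorphism $\vec{f} := (f_1,\dots,f_n) : (A_1,\dots,A_n) \klto{\C} (B_1,\dots,B_n)$ with inverse $\vec{g} := (g_1,\dots,g_n)$ in $\Kl \C$.

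Second, I would reinterpret the hypothesised Kleisli law $\kappa : \D\circ H \Rightarrow H \circ \prod_i\C_i$ as a Kleisli law for the \emph{unary} functor $H : \prod_i \CC_i \to \CD$ relative to the pair of comonads $\C$ on $\prod_i \CC_i$ and $\D$ on $\CD$. The axioms \ref{ax:kl-law-counit} and \ref{ax:kl-law-comultiplication} in their stated $n$-ary form then become literally the unary axioms of Lemma~\ref{l:klei-ax-clone-form} applied to $\C$ and $\D$, because the counit and comultiplication of $\C$ are componentwise, so $\prod_i \counit$ coincides with the counit of $\C$ and $\prod_i \delta$ with the comultiplication of $\C$.

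Third, with this reduction in hand, I apply the chain of equalities already displayed in the excerpt (the calculation preceding Lemma~\ref{l:klei-ax-clone-form}) to the single pair of mutually inverse Kleisli morphisms $\vec{f},\vec{g}$. This produces the composites
\[ H(\vec{f})\circ \kappa_{(A_1,\dots,A_n)} \colon H(A_1,\dots,A_n) \klto{\D} H(B_1,\dots,B_n) \]
and $H(\vec{g})\circ \kappa_{(B_1,\dots,B_n)}$ in the opposite direction, and the same calculation shows they are mutually inverse in $\Kl \D$, thus yielding $H(A_1,\dots,A_n) \cequiv{\D} H(B_1,\dots,B_n)$ as required.

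The main obstacle—more bookkeeping than substance—is verifying that the identification $\Kl(\prod_i \C_i) \cong \prod_i \Kl(\C_i)$ really does transport the coextension operation componentwise, so that the unary Kleisli-law axioms of Lemma~\ref{l:klei-ax-clone-form} coincide with axioms \ref{ax:kl-law-counit} and \ref{ax:kl-law-comultiplication} stated for the product. Once this identification is pinned down, no new calculations are needed: the reduction turns the theorem into the already-verified unary case, and the argument for $\pequiv{}$-preservation from Theorem~\ref{t:fvm-pe} is then strengthened by exactly the two extra Kleisli-law axioms.
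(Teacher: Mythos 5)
Your proposal is correct and follows essentially the same route as the paper: the paper's $n$-ary Kleisli-law axioms \ref{ax:kl-law-counit} and \ref{ax:kl-law-comultiplication} are already stated for the componentwise product comonad $\prod_i \C_i$, and its appendix proof simply reruns the unary Kleisli-composite calculation in vector notation (via an $n$-ary analogue of Lemma~\ref{l:klei-ax-clone-form}) to show that $H(\vec{f_i})\circ\kappa_{\vec{A_i}}$ and $H(\vec{g_i})\circ\kappa_{\vec{B_i}}$ are mutually inverse. Your explicit reduction through $\Kl(\prod_i\C_i)\cong\prod_i\Kl(\C_i)$ is just a more formal packaging of the same argument, and the componentwise-coextension bookkeeping you flag is indeed the only thing to check.
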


\subsection{Examples}

\begin{example}
    \label{ex:coprod-etc-counting}
The collection of $\kappa$ morphisms described in each of
    Examples~\ref{ex:coproducts-pe},~\ref{ex:coproducts-with-choice-pe} and~\ref{ex:morphism-pe} of Section~\ref{sec:examples-pe}
    are natural and satisfy axioms~\ref{ax:kl-law-counit} and~\ref{ax:kl-law-comultiplication}. 
Therefore, by applying Theorem~\ref{t:fvm-counting} to the Kleisli laws in
    Examples~\ref{ex:coproducts-pe} and~\ref{ex:coproducts-with-choice-pe}
    we have that $\cequiv{\Ek}$ and $\cequiv{\Pk}$ are preserved by taking coproducts of structures and the merge operation maps $\cequiv{\Mk}$ to $\cequiv{\comonad{M}_{k+1}}$.
Recalling Remark~\ref{rem:comonad-morphisms}, by applying Theorem~\ref{t:fvm-counting} to the comonad morphism in Example~\ref{ex:morphism-pe} we have that for all $k \in \mathbb{N}$, ~$\cequiv{\comonad{P}_2}$ refines $\cequiv{\Mk}$, and $\cequiv{\Pk}$ refines $\cequiv{\Ek}$. 
\end{example}

\begin{example}[Cospectrality]
Two graphs $G$ and $H$ are \df{cospectral} if the adjacency matrices of $G$ and $H$ have the same multiset of eigenvalues.
We shall use Theorem~\ref{t:fvm-counting} to strengthen a result in \cite{dawar2017pebble} showing that equivalence in 3-variable counting logic \emph{with equality} implies cospectrality.

Since cospectrality is a notion on graphs, we move to the full subcategory of loopless undirected graphs in signature $\sg = \{E\}$ where $E$ is a binary edge relation.
The comonad $\Pk$ restricts to this category and we also consider a comonad $\Cos$ defined in~\cite[Section 3]{abramskyjaklpaine2022discrete} which characterizes cospectrality,
in that
$G \cequiv{\Cos} H$ is equivalent to $G$ and $H$ being cospectral.
The universe of $\Cos(G)$ can be encoded as pairs~$(c,v_i)$ where $c$ is a closed walk $v_0 \xrightarrow{E} v_1 \xrightarrow{E} \ldots \xrightarrow{E} v_n \xrightarrow{E} v_0$ that passes through $v_i$.  
Two pairs $(c,v_i),(c',v_j)$ are adjacent in $\Cos(G)$ if $c = c'$ and $v_i$ is adjacent to $v_j$ in the closed walk $c$. 
The counit $\counit_G$ maps $(c,v_i)$ to $v_i$.  
For $h\colon \Cos(G) \rightarrow H$, the coextension $h^{*}$ maps the pair $(c,v_i)$ with $c$ being the closed walk $v_0 \xrightarrow{E} v_1 \xrightarrow{E} \ldots \xrightarrow{E} v_n \xrightarrow{E} v_0$ to $(d,h(c,v_i))$ where $d$ is the closed walk $h(c,v_0) \xrightarrow{E} h(c,v_1) \ldots \xrightarrow{E} h(c,v_n) \xrightarrow{E} h(c,v_0)$.

We define a comonad morphism $\kappa\colon\Cos(G) \to \comonad{P}_3(G)$ where $(c,v_i)$ is mapped to the word 
\[ [(2,v_0),(1,v_1),(0,v_2),(1,v_3),\dots,(i \bmod 2,v_i)]. \]
Recalling Remark~\ref{rem:comonad-morphisms}, we apply Theorem~\ref{t:fvm-counting}, deducing that $\cequiv{\comonad{P}_3}$ implies $\cequiv{\Cos}$.
Since $\cequiv{\comonad{P}_3}$ captures equivalence in 3-variable counting logic \emph{without equality} and $\cequiv{\Cos}$ captures cospectrality, we have avoided the need for equality in the logic. 
This fact is also a consequence of \cite[Theorem 32]{DawarJR21} and the original theorem of Dawar, Severini, and Zapata~\cite{dawar2017pebble}. 
However, the same reasoning allows us to define a comonad morphism $\Cos \Rightarrow \comonad{PR}_3$ where $\comonad{PR}_3$ is the pebble-relation comonad from~\cite{montacute2021pebble}, capturing the restricted conjunction fragment of $3$-variable counting logic.
The universe of $\comonad{PR}_3(G)$ consists of pairs $(s,i)$ with $s \in \comonad{P}_3(G)$ is a sequence of length $n$ and $i \in \{1,\dots,n\}$ is an index into the sequence $s$.
This proves a genuine strengthening of the previous result.
\end{example}

\section{FVM theorems for the full logic}
\label{s:fvm-standard}

In Sections~\ref{s:fvm-positive} and \ref{s:fvm-counting} we described FVM theorems for the equivalence relations $\pequiv{\C}$ and $\cequiv{\C}$, typically expressing logical equivalence for the positive existential and counting logic variants. To do the same for the full logic requires us to move from the Kleisli category to the richer setting of the \df{Eilenberg--Moore category of coalgebras}.

For a comonad $\C$ on $\CC$, a pair $(A,\alpha)$ where $\alpha\colon A \to \C(A)$ is a morphism in $\CC$ is a \df{$\C$-coalgebra} or simply just \df{coalgebra} if the following two diagrams commute.
\begin{equation}
    \begin{tikzcd}
        A \dar[swap]{\alpha} \ar{rd}{\id} \\
        \C(A) \rar[pos=0.40]{\counit_A} & A
    \end{tikzcd}
    \qquad
    \qquad
    \begin{tikzcd}
        A \dar[swap]{\alpha} \rar{\alpha} & \C(A) \dar{\delta_A} \\\
        \C(A) \rar{\C(\alpha)} & \C(\C(A))
    \end{tikzcd}
    \label{eq:axioms-coalg}
\end{equation}

\df{A morphism of coalgebras} $f\colon (A,\alpha) \to (B,\beta)$ is a morphism $f\colon A\to B$ such that $\beta \circ f = \C(f) \circ \alpha$. We write $\EM \C$ for the Eilenberg--Moore category of coalgebras of $\C$.

\begin{remark}
    A coalgebra $(A,\alpha)$ of any of the game comonads defined in Section~\ref{s:comonads} is endowed with a preorder~$\sqsubseteq_\alpha$, where $x \sqsubseteq_\alpha y$ whenever the word $\alpha(x)$ is a prefix of $\alpha(y)$.
    A \df{forest order} is a poset in which the set of predecessors of any element in a finite chain. It follows from the axioms in~\eqref{eq:axioms-coalg} that $\sqsubseteq_\alpha$ is a forest order on the universe. 

    For example, the category $\EM{\Ek}$ is equivalent to the category of $\sg$-structures endowed with a compatible forest order of depth ${\leq}\,k$. Similar characterisations of the categories $\EM{\Pk}$ and $\EM{\Mk}$ can be made as well, cf.~\cite[Section 9]{abramsky2021relating}.
    \label{r:coalg-order}
\end{remark}
\smallskip
We are now almost ready to define the relation $\sequiv \C$.
Given $A,B\in \CC$, define $A \sequiv{\C} B$ 
if there exists a span of \emph{open pathwise-embeddings}
\[ \EMF \C (A) \leftarrow X \rightarrow \EMF \C (B) \]
where $\EMF \C (A)$ is the \emph{cofree coalgebra} on $A$, concretely this is the pair $(\C(A), \delta_A)$.
We postpone the definition of open pathwise-embeddings to Section~\ref{s:ope-preservation}.

In terms of our example comonads, $A \sequiv{\Ek} B$ and $A \sequiv{\Pk} B$ correspond to agreement on first-order sentences of $k$-bounded quantifier depth and variable count respectively \cite{abramsky2021relating}. Similarly, $(A,a) \sequiv{\Mk} (B,b)$ characterises agreement on modal formulae of modal depth at most $k$.

Returning again to the example of reducts, showing the trivial fact that $A \sequiv \Ek B$ implies $\fgST{A} \sequiv \Ek \fgST{B}$ in our setting suggests the strategy:
\begin{enumerate}
    \item \label{en:lift}
    Lift an $n$-ary operation $H$ to the level of coalgebras, in a manner that commutes with the construction $\EMF \C (-)$ of cofree coalgebras.
    \item \label{en:preservation} Check that open pathwise-embeddings are preserved by the lifted operation.
\end{enumerate}
Tackling step~\ref{en:lift} is the topic of Section~\ref{s:lifting-operations}. Step~\ref{en:preservation} is the subject of the subsequent Section~\ref{s:ope-preservation}.

\subsection{Lifting operations to coalgebras}
\label{s:lifting-operations}

Here we focus on the task of lifting a given operation to the level of coalgebras.
By a \df{lifting} of a functor $\op : \prod_i \CC_i \rightarrow \CD$ we mean a functor
$\lop : \prod_i \EM{\C_i} \rightarrow \EM{\D}$ such that the following diagram commutes
up to isomorphism:
\begin{equation*}
    \begin{tikzcd}
        \prod_i \EM{\C_i} \rar{\lop} & \EM{\D} \\
        \prod_i \CC_i \uar{\prod_i \EMF{\C_i}} \rar{\op} & \CD \uar{\EMF{\D}}
    \end{tikzcd}
\end{equation*}
To this end, we generalise a standard technique from monad theory, used for lifting adjunctions to the Eilenberg--Moore categories \cite{johnstone1975adjoint} or to lifting monoidal structure to the algebras of commutative monad \cite{kock1971closed} (see also~\cite{jacobs1994semantics,seal2013,jaklmarsdenshah2022bim}).
For such a lifting to exist, given a Kleisli law, it is sufficient that the codomain category of coalgebras has sufficient equalisers.
In our case, it is enough to check that the comonad on the target category preserves embeddings and the rest is ensured by Linton's theorem~\cite{linton1969coequalizers}.

In the following we say that a homomorphism of (pointed) $\sg$-structures $f\colon A\to B$ is an \df{embedding}, written $f\colon A\embed B$, if it is injective and reflects relations. 
Further, we say that a comonad $\C$ \df{preserves embeddings} if $\C(f)$ is an embedding whenever $f$ is. In similar fashion, an $n$-ary operation $\op$ preserves embeddings if $\op(f_1,\dots,f_n)$ is an embedding whenever $f_1,\dots,f_n$ are embeddings in their respective categories.
All the comonads that we have introduced thus far preserve embeddings.

\begin{remark}
    For concreteness, in the following $\CC_1,\dots,\CC_n,\CD$ are categories of $\sg$-structures or pointed $\sg$-structures.
    Nevertheless, a categorically minded reader will readily see that our statements hold more generally, typically for categories which have coproducts, a \emph{proper}\footnote{As \emph{proper} we identify any factorisation system $(\mathcal E, \mathcal M)$  with $\mathcal E \sue \text{epis}$ and $\mathcal M \sue \text{monos}$.} factorisation system $(\mathcal E, \mathcal M)$ and are $\mathcal M$-well-powered.
\end{remark}
\smallskip

The following theorem allows us to lift a functor $\op\colon \prod_i\CC_i\to \CD$ if the comonad on the target category preserves embeddings.

\begin{restatable}{theorem}{EMLifting}
    \label{t:em-lifting}
    If $\kappa$ is a Kleisli law of type $\D \circ \op \Rightarrow \op \circ \prod_i \C_i$ and $\D$ preserves embeddings then the lifting of $\op$ to $\lop$ exists.
\end{restatable}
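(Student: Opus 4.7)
My plan is to adapt the classical technique of lifting operations to Eilenberg--Moore categories via equalizer constructions (originating with Linton~\cite{linton1969coequalizers}; see also~\cite{johnstone1975adjoint,jacobs1994semantics,seal2013}) to our comonadic setting. I will define $\lop$ on objects by an equalizer in $\EM{\D}$, and then use the embedding-preservation of $\D$ to guarantee that this equalizer lifts from $\CD$ along the forgetful functor.

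For the construction, I will exploit the fact that every coalgebra $(A_i, \alpha_i)$ is the equalizer in $\EM{\C_i}$ of the parallel pair $\delta_{A_i}, \C_i(\alpha_i) \colon \EMF{\C_i}(A_i) \rightrightarrows \EMF{\C_i}(\C_i A_i)$, both of which live between cofree coalgebras. Since the commuting diagram forces $\lop$ on cofree coalgebras, and the Kleisli lift (Remark~\ref{rem:comonad-morphisms}, Lemma~\ref{l:klei-ax-clone-form}) determines its action on morphisms between cofree coalgebras, transporting the equalizer presentation through the diagram gives a parallel pair in $\EM{\D}$
\[
    \phi, \psi \colon \EMF{\D}(\op(A_1,\dots,A_n)) \rightrightarrows \EMF{\D}(\op(\C_1 A_1,\dots,\C_n A_n))
\]
with $\phi = \D \op(\alpha_1,\dots,\alpha_n)$ and $\psi = \D\kappa_{A_1,\dots,A_n} \circ \delta_{\op(A_1,\dots,A_n)}$. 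I will then define $\lop((A_i, \alpha_i)_i)$ to be the equalizer of $\phi$ and $\psi$ in $\EM{\D}$.

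The crux will be verifying that this equalizer exists. Its underlying version in $\CD$ is an embedding $e \colon E \embed \D \op(A_1,\dots,A_n)$, since equalizers in categories of (pointed) $\sg$-structures are computed as substructures. Because $\D$ preserves embeddings, $\D e$ will also be an embedding, so the comultiplication $\delta_{\op(A_1,\dots,A_n)}$ will restrict uniquely along $e$ to a coalgebra structure on $E$ --- this is Linton's theorem in comonadic form, applied to the equalizer at hand. The functorial action of $\lop$ on morphisms will then follow from the universal property of equalizers, and the Kleisli law axioms~\ref{ax:kl-law-counit} and~\ref{ax:kl-law-comultiplication} will enter when verifying that, on cofree inputs $(A_i, \alpha_i) = (\C_i X_i, \delta_{X_i})$, the equalizer is isomorphic to $\EMF{\D}(\op(X_1,\dots,X_n))$, establishing the required commuting diagram. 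The main obstacle will be precisely this coalgebra-lift step: the factorisation of $\delta_{\op(A_1,\dots,A_n)} \circ e$ through $\D e$ is exactly what embedding-preservation supplies, and the remaining verifications will reduce to routine diagram-chasing controlled by the Kleisli law axioms.
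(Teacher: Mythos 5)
Your overall plan coincides with the paper's: you form exactly the same coreflexive pair $\EMF{\D}(\op(\vec{\alpha_i}))$ and $\EMF{\D}(\kappa)\circ\delta$ out of $\EMF{\D}(\op(\vec{A_i}))$, define $\lop$ on objects as its equaliser in $\EM{\D}$, obtain functoriality from the universal property of equalisers, and use \ref{ax:kl-law-counit} and \ref{ax:kl-law-comultiplication} to identify the equaliser with $\EMF{\D}(\op(\vec{X_i}))$ on cofree inputs. The difference, and the gap, is in how you justify that this equaliser exists.

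You claim that the equaliser $e\colon E\embed\D\op(\vec{A_i})$ computed in $\CD$ acquires a coalgebra structure because $\D(e)$ is an embedding, so that the comultiplication ``restricts uniquely along $e$''. The fact that $\D(e)$ is a monomorphism gives \emph{uniqueness} of such a restriction, but not its \emph{existence}: you would need $\delta\circ e$ to factor through $\D(e)$, i.e.\ the substructure $E$ to be closed under the comultiplication, and preservation of embeddings by $\D$ does not supply this. What comes for free is only that $\delta\circ e$ equalises $\D$ applied to the two parallel maps, and the base equaliser of that pair is in general strictly larger than the image of $\D(e)$; the pair is coreflexive but not split by the forgetful functor, so the dual Beck argument does not apply either. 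For a general embedding-preserving comonad the equaliser in $\EM{\D}$ is the \emph{largest subcoalgebra} of the domain contained in $E$, which may be a proper substructure of $E$. That is what Linton's theorem actually constructs, and it uses the additional structure the paper verifies in Lemma~\ref{l:emb-preserving-creates-equalisers}: coproducts, the proper (surjection, embedding) factorisation system on $\Rel$ and $\Rels$, and $\mathcal M$-well-poweredness, on top of $\D$ preserving embeddings. So although you name the right theorem, the mechanism you describe for it is not what it provides, and the ``restrict $\delta$ along $e$'' step would fail for a general $\D$. Replacing that step with an appeal to the lemma's actual hypotheses repairs the argument; the remainder of your proof then proceeds as in the paper.
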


In what follows, we do not need to understand how exactly $\lop$ is computed. However, we often need to analyse morphisms of the form $(A,\alpha) \to \lop(\vec{(B_i,\beta_i)})$, for some coalgebras $(A,\alpha),(B_1,\beta_1),\dots,(B_n,\beta_n)$. To aid with this we restate a direct generalisation of the universal property of bimorphisms, known from linear algebra and also from the setting of commutative monads~\cite{kock1971bilinearity}, to our setting:

\begin{restatable}{proposition}{Bimorphisms}
    \label{p:bimorphisms}
    Morphisms of coalgebras 
    \[f\colon (A,\alpha) \to \lop\bigl(\vec{(B_i,\beta_i)}\bigr)\] 
    are in one-to-one correspondence with morphisms
    \[f^\#\colon A \to H(\vec{B_i}) \] in $\CD$ such that
    \begin{equation}
        \begin{tikzcd}[->, ampersand replacement=\&,]
        \As \arrow[rr, "f^\#"] \dar[swap]{\alpha} \& \& \op(\vec{\Bs_i}) \dar{\op(\vec{\beta_i})} \\
        \D(\As) \rar{\D(f^\#)} \& \D\op(\vec{\Bs_i}) \rar{\kappa} \& \op\bigl(\vec{\C_i(\Bs_i)}\bigr)
    \end{tikzcd}
    \label{eq:bimorph}
    \end{equation}
    Furthermore, the correspondence $f \mapsto f^\#$ commutes with compositions with coalgebra morphisms. That is, for $\D$-coalgebra morphism $h\colon (A',\alpha') \to (A,\alpha)$ and $\C_i$-coalgebra morphisms $g_i\colon (B_i, \beta_i) \to (B'_i, \beta'_i)$:
    \[
        \bigl(\lop(\vec{g_i}) \circ f \circ h\bigr)^\# = \op(\vec{g_i}) \circ f^\# \circ h.
    \]
\end{restatable}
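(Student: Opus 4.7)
The plan is to exploit the explicit construction of $\lop(\vec{(B_i,\beta_i)})$ given by the proof of Theorem~\ref{t:em-lifting}, which (via the comonadic version of Linton's theorem) realises it as an equaliser in $\EM{\D}$ of two parallel coalgebra morphisms out of the cofree coalgebra $\EMF{\D}(\op(\vec{B_i}))$. The bimorphism characterisation then falls out by combining the universal property of this equaliser with the cofree/forgetful adjunction.

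First, I would identify the relevant parallel pair. Under the adjunction $\EMU{\D} \dashv \EMF{\D}$, coalgebra morphisms $\EMF{\D}(\op(\vec{B_i})) \to \EMF{\D}(\op(\vec{\C_i(B_i)}))$ in $\EM{\D}$ correspond bijectively to $\CD$-morphisms $\D\op(\vec{B_i}) \to \op(\vec{\C_i(B_i)})$; the two relevant ones are $\kappa_{\vec{B_i}}$ and $\op(\vec{\beta_i}) \circ \counit_{\op(\vec{B_i})}$. Axioms~\ref{ax:kl-law-counit} and~\ref{ax:kl-law-comultiplication}, together with embedding-preservation of $\D$, are precisely what guarantees that the equaliser of the corresponding coalgebra morphisms exists in $\EM{\D}$ and can be taken to be $\lop(\vec{(B_i,\beta_i)})$.

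With this in hand, a coalgebra morphism $f\colon (\As,\alpha) \to \lop(\vec{(B_i,\beta_i)})$ is the same datum as a coalgebra morphism $\tilde f\colon (\As,\alpha) \to \EMF{\D}(\op(\vec{B_i}))$ that equalises the above pair. The adjunction turns $\tilde f$ into its transpose $f^\# := \counit \circ \tilde f \colon \As \to \op(\vec{B_i})$ in $\CD$, with $\tilde f = \D(f^\#) \circ \alpha$. Unwinding the equalising equation by postcomposing with $\counit_{\op(\vec{\C_i(B_i)})}$ and simplifying via naturality of $\counit$ and the coalgebra axiom $\counit \circ \alpha = \id$ reduces it exactly to $\kappa \circ \D(f^\#) \circ \alpha = \op(\vec{\beta_i}) \circ f^\#$, i.e.\ to the commutativity of diagram~\eqref{eq:bimorph}. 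Reversing each step shows every bimorphism arises this way, establishing the bijection.

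The composition clause $(\lop(\vec{g_i}) \circ f \circ h)^\# = \op(\vec{g_i}) \circ f^\# \circ h$ is then a naturality statement. Precomposition with the $\D$-coalgebra morphism $h$ passes through the adjunction transparently, since the transpose $\counit \circ (\tilde f \circ h) = f^\# \circ h$. For postcomposition with $\lop(\vec{g_i})$, functoriality of the equaliser construction places $\lop(\vec{g_i})$ into a commuting square with $\D(\op(\vec{g_i}))$ acting on the ambient cofree coalgebras, and a $\counit$-chase using naturality flattens the latter to $\op(\vec{g_i})$ after transposing. The principal obstacle is the computation in the third paragraph: translating the equalising condition into the precise shape of~\eqref{eq:bimorph} requires careful bookkeeping with the cofree adjunction and the comonad axioms, but once it is in place the remaining assertions reduce to routine diagram chases.
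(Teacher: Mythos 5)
Your proposal is correct and follows essentially the same route as the paper: the paper's proof also rests on the equaliser presentation of $\lop\bigl(\vec{(B_i,\beta_i)}\bigr)$ from Theorem~\ref{t:em-lifting} combined with the cofree adjunction, defining $f^\# = \counit \circ \iota_{\vec{\beta_i}} \circ f$ (packaged there as composition with a ``universal bimorphism'' $\univ_{\vec{\beta_i}}$) and deriving the composition clause from naturality of $\counit$ and the functoriality of the equaliser construction. The only difference is that you unwind the equaliser universal property and the $\counit$-transposition directly, whereas the paper delegates that step to Theorem~7.1 of~\cite{jaklmarsdenshah2022bim}.
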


\subsection{Open pathwise-embeddings and their preservation}
\label{s:ope-preservation}

In this section we define open pathwise-embeddings and give sufficient conditions for the lifted functors of Section~\ref{s:lifting-operations} to preserve them. In fact, our condition is a combination of two other categorical properties: parametric adjoints and relative adjoints, which we discuss in Section~\ref{s:parametric-relative-adjoints}.

The coalgebras of game comonads are structures carrying a forest order compatible with their underlying $\sg$-structures. We fix a collection of, typically linearly ordered, coalgebras $\Pa \sue \EM \C$ which we call \df{paths}. In concrete applications, there is typically a natural choice of paths, and we will leave this parameter implicit in the theorems that follow.
We call a morphism of coalgebras $(A,\alpha) \to (B,\beta)$ an \df{embedding} if the underlying morphism $A\to B$ is an embedding of $\sg$-structures, and a \df{path embedding} if furthermore $(A,\alpha)$ is a path.

\begin{wrapfigure}[5]{r}{6.5em}
   \vspace{-0.8em}
    \begin{tikzcd}
        P\dar[>->,swap]{e} \rar{g} & Q \dar[>->]{m} \\
        X \rar{f} & Y
    \end{tikzcd}
\end{wrapfigure}
A morphism $f\colon X \to Y$ in $\EM \C$ is a \df{pathwise-embedding} if $f\circ e$ is an embedding for every path embedding $e\colon P \embed X$ in $\EM \C$. Further, $f$ is \df{open} if for every
commutative square as shown on the right,
where $e,m$ are path embeddings,
there exists a morphism $d\colon Q \to X$ such that $e = d \circ g$ and $m = f \circ d$.

\begin{remark}
    Recall from Remark~\ref{r:coalg-order} that coalgebras $(A, \alpha)$, for our example comonads, $\Ek, \Pk,$ or $\Mk$, come equipped with a forest order $\sqsubseteq_\alpha$ on the universe. In the following, we always assume that path coalgebras of these comonads are the coalgebras $(A,\alpha)$ which are finite linear orders in $\sqsubseteq_\alpha$.
\end{remark}

We now come back to our original goal. In order for an $n$-ary operation $\op \colon \prod_i \CC_i \to \CD$ to admit an FVM theorem, we wish to show that its lifting
$\lop\colon \prod\nolimits_i \EM{\C_i} \to \EM \D$
described in Theorem~\ref{t:em-lifting}, sends tuples of open pathwise embeddings to pathwise embeddings.

In the remainder of this section we assume the following two conditions and show that they suffice:

\begin{axioms}
    \item[\textbf{\namedlabel{ax:s1}{(S1)}}]
    $\lop\colon \prod_i \EM{\C_i} \to \EM{\D}$ preserves embeddings.

    \item[\textbf{\namedlabel{ax:s2}{(S2)}}]
    Any path embedding $e\colon P \embed \lop(\vec{X_i})$ has a \df{minimal decomposition} as $e_0\colon P \to \lop(\vec{P_i})$ followed by $\lop(\vec{e_i})\colon \lop(\vec{P_i}) \to \lop(\vec{X_i})$,
         for some path embeddings $e_i\colon P_i \embed X_i$, for $1 \leq i \leq n$.
\end{axioms}
Minimality in \ref{ax:s2} expresses that for any decomposition of $e$ as $g_0 \colon P \to \lop(\vec{Q_i})$ followed by $\lop(\vec{g_i}) \colon \lop(\vec{Q_i}) \to \lop(\vec{X_i})$,
for some path embeddings $g_i\colon Q_i \embed X_i$, there exist necessarily unique morphisms $h_i\colon P_i \to Q_i$ such that $e_i = g_i \circ h_i$, for $i=1,\dots,n$.

We are going to need the following basic properties of embeddings in categories of coalgebras, derived from the properties of embeddings in~$\Rel$ and~$\Rels$.

\begin{restatable}{lemma}{FSBasics}
    \label{l:fs-basics}
    Let $\C$ be a comonad over a category of (pointed) relational structures. Then, the following hold for embeddings in $\EM \C$:
\begin{enumerate}
    \item Every embedding $e$ is a monomorphism, that is, if $e \circ f = e \circ g$ then $f = g$.
    \item Embeddings are closed under composition.
    \item If $g \circ f$ is an embedding then so is $f$.
\end{enumerate}
\end{restatable}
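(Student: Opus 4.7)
The plan is to reduce all three claims to the analogous properties for embeddings in the base category $\CC$ of (pointed) $\sg$-structures, exploiting that the forgetful functor $U\colon \EM{\C} \to \CC$ is faithful and that, by definition, a coalgebra morphism $f$ is an embedding in $\EM{\C}$ iff $U(f)$ is an embedding in $\CC$. Thus the entire content of the lemma lies in verifying the three properties for embeddings of (pointed) $\sg$-structures, after which transport along $U$ is purely formal.

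First I would check the three statements directly in $\CC$. Recall that an embedding $e\colon A \embed B$ is an injective $\sg$-structure morphism that reflects each relation. Injectivity immediately gives that $e$ is a monomorphism (in $\mathbf{Set}$, hence in $\CC$), which settles (1) at the base level. For (2), the composite of two embeddings is clearly injective and preserves relations; relation-reflection is inherited since, for $n$-ary $R\in\sg$, $R^{C}((g\circ f)(\vec a))$ implies $R^B(f(\vec a))$ by reflection along $g$, which then implies $R^A(\vec a)$ by reflection along $f$. For (3), assume $g\circ f$ is an embedding; then $f$ is injective because $g\circ f$ is, and if $R^B(f(\vec a))$ holds then $R^C(g(f(\vec a)))$ holds because $g$ is a homomorphism, whence $R^A(\vec a)$ follows from the fact that $g\circ f$ reflects relations. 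In the pointed case the distinguished element is automatically preserved by $f$ since both $f$ and $g$ are morphisms of pointed structures.

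Second, I would lift each property to $\EM{\C}$. For (1), given $e\circ f = e\circ g$ in $\EM{\C}$, applying $U$ and using that $U(e)$ is a monomorphism in $\CC$ yields $U(f)=U(g)$, hence $f=g$ by faithfulness of $U$. For (2), the underlying morphism of a composite in $\EM{\C}$ is the composite of the underlying morphisms, which is an embedding in $\CC$ by the previous step, so the composite is an embedding in $\EM{\C}$ by definition. For (3), $U(g\circ f) = U(g)\circ U(f)$ is an embedding in $\CC$, so $U(f)$ is an embedding in $\CC$ by the base-level version, and hence $f$ is an embedding in $\EM{\C}$.

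I do not expect a genuine obstacle here: the only point requiring any care is keeping the conventions straight, namely that ``embedding in $\EM{\C}$'' is defined via the forgetful functor rather than by a property formulated inside $\EM{\C}$ itself, so that each claim reduces mechanically to the corresponding statement in $\Rel$ or $\Rels$.
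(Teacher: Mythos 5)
Your proof is correct, and it takes a more elementary route than the paper. The paper does not verify the three properties by hand: immediately after the lemma it remarks that (surjective homomorphisms, embeddings) form a proper factorisation system on $\Rel$ and $\Rels$, that this system lifts to $\EM{\C}$ with both classes defined on underlying morphisms, and that (1)--(3) are then the standard closure properties of the $\mathcal M$-class of a proper factorisation system ($\mathcal M\subseteq$ monos, closure under composition, and left-cancellation). Your argument instead checks injectivity and relation-reflection directly in the base category and transports the result along the faithful forgetful functor, using that ``embedding in $\EM{\C}$'' is by definition a property of the underlying morphism; all three verifications are sound, including the pointed case. What each approach buys: yours is self-contained and, notably, never needs the hypothesis that $\C$ preserves embeddings (which the factorisation-system route invokes to lift the full system to $\EM{\C}$, even though properties (1)--(3) alone do not require it); the paper's route is shorter on the page and generalises immediately to any base category equipped with a proper factorisation system, which the authors explicitly flag as the intended level of generality.
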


\begin{remark}
    Categorically speaking, the above statement only restates properties of factorisation systems lifted to the category of coalgebras. Namely, (pointed) relational structures come equipped with a proper factorisation system $(\mathcal E, \mathcal M)$ where $\mathcal E$ are surjective homomorphisms and $\mathcal M$ are embeddings. Then, it is a standard fact that, for $\C$ which preserves embeddings, $\EM \C$ admits a proper factorisation system $(\ol{\mathcal E}, \ol{\mathcal M})$ where a morphism of coalgebras $h$ is in $\ol{\mathcal E}$ (resp.\ $\ol{\mathcal M}$) if the underlying homomorphisms of $h$ is in~$\mathcal E$ (resp.~$\mathcal M$). Furthermore, surjective homomorphisms and embeddings of (pointed) relational structures are precisely epimorphisms and regular monomorphism, respectively. In fact, regular monos are the same as strong, extremal, or effective monos in this category.
\end{remark}

These basic properties of embeddings listed in Lemma~\ref{l:fs-basics} allow us to show the first half of our desired result, preservation of pathwise-embeddings.

\begin{restatable}{lemma}{PEPreserved}
    If $f_1,\dots,f_n$ are pathwise-embeddings in $\EM{\C_1}$, \dots, $\EM{\C_n}$, respectively, then so is $\lop(f_1,\dots,f_n)$.
    \label{l:pe-preserved}
\end{restatable}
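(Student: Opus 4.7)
The plan is to unfold the definition of pathwise-embedding and show directly that for every path embedding $e\colon P \embed \lop(\vec{X_i})$ in $\EM{\D}$, the composite $\lop(\vec{f_i}) \circ e$ is an embedding. I intend to derive this by orchestrating the minimal decomposition supplied by~\ref{ax:s2}, the embedding preservation of~\ref{ax:s1}, and the elementary closure properties of embeddings collected in Lemma~\ref{l:fs-basics}.

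First, I would invoke~\ref{ax:s2} on the path embedding $e$, producing path embeddings $e_i\colon P_i \embed X_i$ and a morphism $e_0\colon P \to \lop(\vec{P_i})$ with $e = \lop(\vec{e_i}) \circ e_0$. Because $e$ is itself an embedding, applying Lemma~\ref{l:fs-basics}(3) to this factorisation forces $e_0$ to be an embedding as well. Observe that only the existence clause of~\ref{ax:s2} is needed here; the minimality clause plays no role in this particular lemma.

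Next, using that each $f_i$ is a pathwise-embedding and each $e_i$ is a path embedding, each composite $f_i \circ e_i$ is an embedding in $\EM{\C_i}$. By~\ref{ax:s1}, the morphism $\lop(\vec{f_i \circ e_i})$ is then an embedding in $\EM{\D}$, and by the functoriality of $\lop$ it coincides with $\lop(\vec{f_i}) \circ \lop(\vec{e_i})$.

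Finally, assembling the previous steps yields
\[
    \lop(\vec{f_i}) \circ e \;=\; \lop(\vec{f_i}) \circ \lop(\vec{e_i}) \circ e_0 \;=\; \lop(\vec{f_i \circ e_i}) \circ e_0,
\]
which exhibits $\lop(\vec{f_i}) \circ e$ as the composite of two embeddings and is therefore an embedding by Lemma~\ref{l:fs-basics}(2). I do not anticipate any real obstacle: the argument is essentially a bookkeeping exercise that threads together~\ref{ax:s1},~\ref{ax:s2}, and Lemma~\ref{l:fs-basics}. The only point requiring a little care is to verify that all morphisms are being composed in $\EM{\D}$ (rather than in the base category $\CD$), so that Lemma~\ref{l:fs-basics} is legitimately applicable.
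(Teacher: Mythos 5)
Your proof is correct and follows essentially the same route as the paper's: decompose $e$ via \ref{ax:s2}, deduce that $e_0$ is an embedding from Lemma~\ref{l:fs-basics}(3), use the pathwise-embedding property of each $f_i$ together with \ref{ax:s1} to see that $\lop(\vec{f_i \circ e_i})$ is an embedding, and close under composition. Your side remark that only the existence clause of \ref{ax:s2} is needed here is also accurate.
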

\begin{prf}
    Assume $e\colon (\Ps,\Pc) \embed \lop\bigl(\vec{(\As_i,\Ac_i)}\bigr)$ is a path embedding where, for $i=1,\dots,n$, the coalgebra $(\As_i,\Ac_i)$ is the domain of $f_i$. By~\ref{ax:s2}, $e$ decomposes as $e_0\colon (\Ps,\Pc) \embed \lop\bigl(\vec{(\Ps_i,\Pc_i)}\bigr)$ followed by 
    \[ \lop\bigl(\vec{e_i}\bigr)\colon \lop\bigl(\vec{(\Ps_i,\Pc_i)}\bigr) \embed \lop\bigl(\vec{(\As_i,\Ac_i)}\bigr) . \]
    Observe that, by Lemma~\ref{l:fs-basics}.3, $e_0$ is an embedding because $e$ is. Further, since $f_i$ is a pathwise-embedding, for $i=1,\dots,n$, the morphism $f_i \circ e_i$ is an embedding. Therefore, by \ref{ax:s1}, $\lop\bigl(\vec{f_i \circ e_i}\bigr)$ is also an embedding. We obtain that the composite $\lop\bigl(\vec{f_i}\bigr) \circ e = \lop\bigl(\vec{f_i \circ e_i}\bigr) \circ e_0$ is an embedding because embeddings are closed under composition, cf. Lemma~\ref{l:fs-basics}.2.
\end{prf}

For the preservation of openness, we need the following technical lemma.

\begin{restatable}{lemma}{LemmaIFour}%
    \label{l:I4}
    Let $f_i\colon (\As_i,\Ac_i) \to (\Bs_i,\Bc_i)$ in $\EM{\C_i}$ be pathwise-embeddings, for $i=1,\dots,n$, and let $e$ and $g$ be path embeddings making the diagram on the left below commute.
    \[
    \begin{tikzcd}[ampersand replacement=\&, column sep=0.5em]
        \& (\Ps,\Pc) \ar[>->,swap]{dl}{e} \ar[>->]{dr}{g} \\
        \lop(\vec{(\As_i,\Ac_i)}) \ar{rr}{\lop(\vec{f_i})} \& \& \lop(\vec{(\Bs_i,\Bc_i)})
    \end{tikzcd}
    \quad
    \begin{tikzcd}[ampersand replacement=\&]
        \Ps_i \rar{f'_i}\dar[swap,>->]{e_i} \& \Qs_i\dar[>->]{g_i}\\
        \As_i \rar{f_i} \& \Bs_i
    \end{tikzcd}
    \]
    Then, for $i=1,\dots,n$, there exist morphisms 
    \[ f'_i\colon (\Ps_i,\Pc_i) \to (\Qs_i,\Qc_i) ,\]  such that
    the diagram on the right above commutes.
    Here, the $e_i$ and $g_i$ are the minimal embeddings given by \ref{ax:s2} such that $e$ and $g$ decompose through $\lop\left(\vec{e_i}\right)$ and $\lop\left(\vec{g_i}\right)$, respectively.
\end{restatable}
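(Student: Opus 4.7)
The plan is to invoke the minimality clauses of \ref{ax:s2} twice in opposite directions: first to extract auxiliary morphisms $h_i\colon (\Qs_i,\Qc_i) \to (\Ps_i,\Pc_i)$ from minimality of $g$'s decomposition, and then to obtain the desired $f'_i\colon (\Ps_i,\Pc_i) \to (\Qs_i,\Qc_i)$ as sections of the $h_i$ from minimality of $e$'s decomposition.

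First I would observe that $g$ admits an alternative decomposition via the $\vec{\Ps_i}$. Using $\lop(\vec{f_i}) \circ e = g$ and functoriality of $\lop$, one has $g = \lop(\vec{f_i \circ e_i}) \circ e_0$, and each composite $f_i \circ e_i\colon (\Ps_i,\Pc_i) \embed (\Bs_i,\Bc_i)$ is a path embedding since $f_i$ is a pathwise-embedding, $e_i$ is a path embedding, and $(\Ps_i,\Pc_i)$ is a path. Applying minimality of the decomposition $(g_0,\vec{g_i})$ of $g$ to this alternative decomposition then produces unique morphisms $h_i\colon (\Qs_i,\Qc_i) \to (\Ps_i,\Pc_i)$ with $g_i = (f_i \circ e_i) \circ h_i$, and Lemma~\ref{l:fs-basics}.3 ensures each $h_i$ is itself an embedding.

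Next I would transfer this information back to a decomposition of $e$. Since the $f_i \circ e_i$ are embeddings, \ref{ax:s1} makes $\lop(\vec{f_i \circ e_i})$ an embedding, hence monic by Lemma~\ref{l:fs-basics}.1. Cancelling this monomorphism from the identity
\[ \lop(\vec{f_i \circ e_i}) \circ e_0 \;=\; g \;=\; \lop(\vec{g_i}) \circ g_0 \;=\; \lop(\vec{f_i \circ e_i}) \circ \lop(\vec{h_i}) \circ g_0 \]
yields $e_0 = \lop(\vec{h_i}) \circ g_0$. Consequently $(g_0, \vec{e_i \circ h_i})$ is a decomposition of $e$ into path embeddings $e_i \circ h_i\colon (\Qs_i,\Qc_i) \embed (\As_i,\Ac_i)$. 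Applying minimality of $(e_0,\vec{e_i})$ to this new decomposition furnishes unique morphisms $f'_i\colon (\Ps_i,\Pc_i) \to (\Qs_i,\Qc_i)$ satisfying $e_i = e_i \circ h_i \circ f'_i$; monicity of $e_i$ then forces $h_i \circ f'_i = \id_{\Ps_i}$, whence $g_i \circ f'_i = (f_i \circ e_i) \circ h_i \circ f'_i = f_i \circ e_i$, which is exactly the required commutativity.

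I expect the main obstacle to be the cancellation step, not because it is technically deep but because one must keep careful track of where the various morphisms live: the cancellation must take place in $\EM{\D}$, and it depends on combining \ref{ax:s1} (which supplies an embedding in $\EM{\D}$) with Lemma~\ref{l:fs-basics}.1 (which upgrades embeddings to monomorphisms there). Everything else amounts to two symmetric applications of the universal property in \ref{ax:s2}.
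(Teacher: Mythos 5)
Your proposal is correct and follows essentially the same route as the paper's own proof: the same two applications of the minimality clause of \ref{ax:s2} (first to the decomposition $\lop(\vec{f_i\circ e_i})\circ e_0$ of $g$, then to the induced decomposition of $e$ through the $\vec{e_i\circ h_i}$), with the same cancellation of the monomorphism $\lop(\vec{f_i\circ e_i})$ via \ref{ax:s1} and Lemma~\ref{l:fs-basics}.1. The only cosmetic differences are your naming ($h_i$ for the paper's $l_i$) and the extra observation that $h_i\circ f'_i=\id$, which the paper bypasses by computing $f_i\circ e_i = f_i\circ e_i\circ l_i\circ f'_i = g_i\circ f'_i$ directly.
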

\begin{prf}
    Let $e_0\colon (\Ps,\Pc) \embed \lop\bigl(\vec{(\Ps_i,\Pc_i)}\bigr)$ be the embedding such that 
    $ \lop\left(\vec{e_i}\right) \circ e_0 $ 
    is the minimal decomposition of $e$ by \ref{ax:s2}. Since 
    \[ \lop(\vec{f_i}) \circ e = \lop(\vec{f_i \circ e_i}) \circ e_0 \]
    is another decomposition of $g$, there exists morphisms $l_i\colon (\Qs_i,\Qc_i) \embed (\Ps_i,\Pc_i)$ such that $g_i = f_i \circ e_i \circ l_i$, for $i=1,\dots,n$.

    Next, we observe that $e_0 = \lop(\vec{l_i}) \circ g_0$ where 
    \[ g_0\colon (\Ps,\Pc) \embed \lop\bigl(\vec{(\Qs_i,\Qc_i)}\bigr) \]
    is such that $\lop(\vec{g_i}) \circ g_0$ is the minimal decomposition of $g$. Observe that 
    \begin{align*} 
    \lop\bigl(\vec{f_i \circ e_i}\bigr) \circ e_0
    &= g\\
    &= \lop\bigl(\vec{g_i}\bigr) \circ g_0\\
    &= \lop\bigl(\vec{f_i \circ e_i \circ l_i}\bigr) \circ g_0 \\ 
    &= \lop\bigl(\vec{f_i \circ e_i}\bigr) \circ \lop\bigl(\vec{l_i}\bigr) \circ g_0 . 
    \end{align*}
    Therefore, since by \ref{ax:s1} $\lop\bigl(\vec{f_i \circ e_i}\bigr)$ is an embedding, we obtain by Lemma~\ref{l:fs-basics}.1 that $e_0 = \lop(\vec{l_i}) \circ g_0$.

    Consequently, we obtain another decomposition of $e$, given by $\lop\bigl(\vec{e_i \circ l_i}\bigr) \circ g_0$. By minimality $\vec{e_i}$, there exists 
    \[ f'_i\colon (\Ps_i,\Pc_i) \to (\Qs_i,\Qc_i), \]
    for $i=1,\dots,n$, such that $e_i = e_i \circ l_i \circ f'_i$. Therefore, for $i=1,\dots,n$, 
    \[ f_i \circ e_i = f_i \circ e_i \circ l_i \circ f'_i = g_i \circ f'_i. \qedhere\]
\end{prf}

We are now ready to proof the main theorem of this section.

\begin{restatable}{theorem}{SmoothnessTheorem}
    \label{t:smoothness}
    If $\lop$ satisfies \ref{ax:s1} and \ref{ax:s2} and $f_1,\dots,f_n$ are open pathwise-embeddings, then so is $\lop(\vec{f_i})$.
\end{restatable}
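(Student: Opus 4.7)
The plan is to first invoke Lemma~\ref{l:pe-preserved}, which already establishes that $\lop(\vec{f_i})$ is a pathwise-embedding, so what remains is to verify openness. Given an openness square
\[
\begin{tikzcd}
P \rar{g}\dar[>->,swap]{e} & Q \dar[>->]{m} \\
\lop(\vec{\As_i}) \rar{\lop(\vec{f_i})} & \lop(\vec{\Bs_i})
\end{tikzcd}
\]
with $e, m$ path embeddings, I would construct the diagonal $d\colon Q\to\lop(\vec{\As_i})$ by solving openness componentwise for each $f_i$ and assembling the solutions via $\lop$.

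First, decompose $e$ and $m$ minimally via \ref{ax:s2} as $\lop(\vec{e_i})\circ e_0$ and $\lop(\vec{m_i})\circ m_0$, with $e_i\colon \Ps_i\embed \As_i$ and $m_i\colon \Qs_i\embed \Bs_i$. The key observation is that the composite $m\circ g = \lop(\vec{f_i})\circ e$ is itself a path embedding: its underlying morphism is an embedding by Lemma~\ref{l:pe-preserved}, and its domain $P$ is a path. I would apply Lemma~\ref{l:I4} to the pair $(e, m\circ g)$; its statement and proof together yield, for each $i$, morphisms $f'_i\colon \Ps_i\to R_i$ and retractions $l_i\colon R_i\to\Ps_i$ satisfying $l_i\circ f'_i=\id$, where $(n_0,\vec{R_i},\vec{n_i})$ denotes the minimal decomposition of $m\circ g$, along with the useful identity $n_0 = \lop(\vec{f'_i})\circ e_0$. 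Invoking minimality of this decomposition against the alternative decomposition $(m_0\circ g,\vec{\Qs_i}, \vec{m_i})$ of $m\circ g$ further produces morphisms $u_i\colon R_i\to\Qs_i$ with $n_i = m_i\circ u_i$ and $m_0\circ g = \lop(\vec{u_i})\circ n_0$.

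Setting $a_i := u_i\circ f'_i\colon \Ps_i \to \Qs_i$, a short calculation gives $m_i\circ a_i = n_i\circ f'_i = f_i\circ e_i$, and openness of each $f_i$ then supplies $d_i\colon \Qs_i\to\As_i$ with $d_i\circ a_i = e_i$ and $f_i\circ d_i = m_i$. I would define $d := \lop(\vec{d_i})\circ m_0$; the equation $\lop(\vec{f_i})\circ d = m$ follows from functoriality of $\lop$, while $d\circ g = e$ telescopes as
\[ d\circ g = \lop(\vec{d_i\circ u_i})\circ n_0 = \lop(\vec{d_i\circ u_i\circ f'_i})\circ e_0 = \lop(\vec{e_i})\circ e_0 = e. \]

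The main obstacle is the bookkeeping around three interacting minimal decompositions --- those of $e$, of $m$, and of $m\circ g$ --- and in particular extracting the morphisms $u_i$ and $f'_i$ from distinct applications of minimality so that the resulting openness square for each $f_i$ actually commutes and the final telescope collapses correctly. The mono-cancellation properties from Lemma~\ref{l:fs-basics} are essential, both for deriving $l_i \circ f'_i = \id$ and for pushing the minimality arguments through; once this data is in place, the remainder is a mechanical diagram chase.
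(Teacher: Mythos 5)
Your proposal is correct and follows essentially the same route as the paper: reduce to openness via Lemma~\ref{l:pe-preserved}, take the three minimal decompositions (of $e$, of $m$, and of the composite path embedding $m\circ g=\lop(\vec{f_i})\circ e$), use Lemma~\ref{l:I4} together with minimality to build the componentwise square $m_i\circ(u_i\circ f'_i)=f_i\circ e_i$, open each $f_i$ to get $d_i$, and assemble the diagonal as $\lop(\vec{d_i})\circ m_0$, with the final telescope closed by the mono-cancellation identities $n_0=\lop(\vec{f'_i})\circ e_0$ and $m_0\circ g=\lop(\vec{u_i})\circ n_0$. The only cosmetic differences are notation and the fact that the retraction identity $l_i\circ f'_i=\id$ you extract is not actually needed for the final chase.
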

\begin{prf}
    Given for $i=1,\dots,n$, open pathwise-embeddings $f_i\colon (\As_i,\Ac_i) \to (\Bs_i,\Bc_i)$ in $\EM{\C_i}$, it is enough to check that $\lop(\vec{f_i})$ is open by Lemma~\ref{l:pe-preserved}. Assume that the outer square of path embeddings in the diagram below commutes, with the left-most and right-most morphisms being their minimal decompositions by \ref{ax:s2}.
    \[
        \begin{tikzcd}[column sep=3.5em] %
            (\Ps,\Pc)
                \ar[>->]{rr}{h}
                \ar[>->,swap]{d}{e_0}
                \ar[>->]{dr}{e'_0}
            &
            & (\Qs,\Qc)
                \ar[>->]{d}{g_0}
            \\
            \lop\bigl(\vec{(\Ps_i,\Pc_i)}\bigr)
                \ar[swap,>->]{d}{\lop(\vec{e_i})}
                \ar[dashed,swap]{r}{\lop(\vec{f'_i})}
            & \lop\bigl(\vec{(\Ps'_i,\Pc'_i)}\bigr)
                \ar[sloped,>->,swap]{rd}{\lop(\vec{e'_i})}
                \rar[dashed]{\lop(\vec{h_i})}
            & \lop\bigl(\vec{(\Qs_i,\Qc_i)}\bigr)
                \ar[>->]{d}{\lop(\vec{g_i})}
            \\
            \lop\bigl(\vec{(\As_i,\Ac_i)}\bigr)
                \ar{rr}{\lop(\vec{f_i})}
            &
            & \lop\bigl(\vec{(\Bs_i,\Bc_i)}\bigr)
        \end{tikzcd}
    \]
    The path embedding $\lop\bigl(\vec{f_i\circ e_i}\bigr) \circ e_0$ has a minimal decomposition going via
    $\lop(\vec{e'_i})$ %
    as shown above.
    Then, by minimality of this decomposition and by Lemma~\ref{l:I4}, for $i=1,\dots,n$, there exist $f'_i\colon (\Ps_i,\Pc_i) \to (\Ps'_i,\Pc'_i)$ and $h_i\colon (\Ps'_i,\Pc'_i) \to (\Qs_i,\Qc_i)$ such that $f_i \circ e_i = e'_i \circ f'_i$ and $e'_i = g_i \circ h_i$. Since $f_i$ is open and $f_i \circ e_i = g_i \circ h_i \circ f'_i$, there is a morphism $d_i\colon \Qc_i \to \Ac_i$ such that $d_i \circ h_i \circ f'_i = e_i$ and $g_i = d_i \circ f_i$. Finally, because the outer rectangle and the bottom rectangle commute and $\lop\left(\vec{g_i}\right)$ is a monomorphism (cf.\ Lemma~\ref{l:fs-basics}.1), the top rectangle commute as well. Consequently, $\lop(\vec{d_i})\circ g_0\colon (\Qs,\Qc) \to \lop\bigl(\vec{(\As_i,\Ac_i)}\bigr)$ is the required diagonal filler of the outer square.
\end{prf}

\subsection{Parametric relative right adjoints}
\label{s:parametric-relative-adjoints}

In this section we show how the axioms \ref{ax:s1} and~\ref{ax:s2} relate to mathematical concepts that have previously appeared in the literature, specifically parametric relative adjoints.
Parametric adjoints, also sometimes called local adjoints, originate in the work of Street~\cite{street2000petit}.
We use their equivalent definition due to Weber~\cite{weber2004generic, weber2007familial,berger2012monads}.
Relative adjoints are a common tool in the semantics of programming languages, see for example \cite{altenkirch2010monads}, and are connected to relative comonads which can be used to encapsulate the translations discussed in Section~\ref{s:enrichments}~\cite{abramsky2021relating}. We will be interested in a combination of these two notions.

We first review the required terminology. By $\CT \slice A$ we denote the 
\df{slice category}
consisting of pairs $(X,f)$, where $f\colon X\to A$ is a morphism in $\CT$. A morphism $(X,f) \to (Y,g)$ between objects in $\CT \slice A$ is a morphism $h\colon X\to Y$ such that $f = g \circ h$. A functor ${F\colon \CT \to \CS}$ is said to be a \df{parametric right adjoint} if, for each object $A$ of $\CC$, the functor
\[
    F_A : \CT \slice A \to \CS \slice F(A)
\]
sending $f\colon X\to A$ to $F(f)\colon F(X) \to F(A)$, is a right adjoint.
Finally, a functor $R\colon \CT \to \CU$ is a \df{relative right adjoint} between a category $\CT$ and a functor $I\colon \CS \to \CU$ if there is a functor $L\colon \CS \to \CT$ and a natural bijection between sets of morphisms
\begin{prooftree}
    \AxiomC{$L(S) \to T$ in $\CT$}
    \doubleLine
    \UnaryInfC{$I(S) \to R(T)$ in $\CU$}
\end{prooftree}

These definitions motivate our definition of parametric relative adjoints for categories of coalgebras with a selected class of path objects and a class of embeddings.

For simplicity we consider unary $\lop : \EM{\C} \to \EM{\D}$. 
We write $\Pa_{\C}$ and $\Pa_{\D}$ for the class of paths in $\EM{\C}$ and $\EM{\D}$ respectively.

For $D\in \EM{\D}$, let $I_D$ be the inclusion functor
\[I_D\colon \Pa_{\D} \emcm D \to \EM{\D} \emcm D\]
where $\EM{\D} \emcm D$ and ${\Pa_{\D} \emcm D}$ denote the full subcategories of the comma category $\EM{\D} \slice D$ consisting of pairs~$(X,e)$ where $e$ is an embedding or a path embedding, respectively.

$\lop$ is a \df{parametric relative right adjoint} if, for every $C \in \EM{\C}$, the functor
    \[
        \lop_{C}\colon \Pa_{\C} \emcm C \ee\longrightarrow \EM{\D} \emcm \lop(C),
    \]
which sends a path embedding $e$, with $e\colon X \embed C$, to the embedding $\lop(e)$ of type  $\lop(X) \embed \lop(C)$, is a relative right adjoint between $\Pa_{\C} \emcm C$ and $I_{\lop(C)}$. Following the above terminology, the last condition assumes the existence of a functor $L\colon \Pa_{\D} \emcm \lop(C) \to \Pa_{\C} \emcm C$
such that there is a natural bijection between morphisms:
\begin{prooftree}
    \AxiomC{$L(e) \to f$ in $\Pa_{\C} \emcm C$}
    \doubleLine
    \UnaryInfC{$I(e) \to \lop_{C}(f)$ in $\EM{\D} \emcm \lop(C)$}
\end{prooftree}

\begin{restatable}{proposition}{ParametricRelativeAdjoint}
    Assuming \ref{ax:s1}, $\lop$ satisfies \ref{ax:s2} if and only if $\lop$ is a parametric relative right adjoint.
\end{restatable}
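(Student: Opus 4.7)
My plan is to show that (S2) and the parametric relative right adjoint property encode the same universal factorization. Concretely, the minimal decomposition $e = \lop(e') \circ e_0$ of a path embedding $e\colon P \embed \lop(C)$ guaranteed by (S2) is precisely the data of the value $L(e) = e'$ together with the unit component (at the identity) of the natural bijection. I will treat the unary case as laid out in the paper; an $n$-ary chase is analogous.

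For $(\Leftarrow)$, assume $\lop$ is a parametric relative right adjoint, with functor $L \colon \Pa_{\D} \emcm \lop(C) \to \Pa_{\C} \emcm C$ and natural bijection $\phi$. Given a path embedding $e\colon P \embed \lop(C)$, I set $e' := L(e)$ and let $e_0 := \phi(\id_{L(e)})$. The triangle defining $\phi$ yields $e = \lop(e') \circ e_0$, and by \ref{ax:s1} together with Lemma~\ref{l:fs-basics}.3 the morphism $e_0$ is itself an embedding. Minimality is then immediate from $\phi$: any alternative factorization $e = \lop(g) \circ g_0$ through a path embedding $g\colon Q \embed C$ exhibits $g_0$ as a morphism $I(e) \to \lop_C(g)$ in $\EM{\D} \emcm \lop(C)$, and $\phi^{-1}(g_0)$ supplies the required $h\colon P' \to Q$ with $e' = g \circ h$, unique because $\phi$ is a bijection.

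For $(\Rightarrow)$, assume \ref{ax:s2}. I define $L$ on objects by sending $e$ to the path embedding $e'$ extracted from its minimal decomposition. Functoriality is forced by uniqueness: given $u\colon e_1 \to e_2$ with $e_1 = e_2 \circ u$, the identity $e_1 = \lop(e'_2) \circ (e_{2,0} \circ u)$ is another decomposition of $e_1$, so (S2) yields a unique $L(u)$ with $e'_1 = e'_2 \circ L(u)$, and this assignment respects identities and composition. The bijection $\phi$ at $f\colon X \embed C$ is defined by $h \mapsto \lop(h) \circ e_0$ in one direction, and in the other by sending $k\colon I(e) \to \lop_C(f)$ to the unique $h$ supplied by (S2) applied to the decomposition $e = \lop(f) \circ k$. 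Naturality in $f$ is a direct check from the formula for the forward map.

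The main subtlety, and the step I expect to need most care, is verifying that the two assignments are mutually inverse. The round trip starting from $h$ returns $h$ by the uniqueness clause of (S2). The reverse round trip starting from $k$ requires the equation $k = \lop(h) \circ e_0$, which is not literally part of (S2); here I would invoke \ref{ax:s1} to conclude that $\lop(f)$ is an embedding, hence a monomorphism by Lemma~\ref{l:fs-basics}.1, and cancel it from the identity $\lop(f) \circ (\lop(h) \circ e_0) = e = \lop(f) \circ k$. This is precisely where the assumption \ref{ax:s1} enters the argument.
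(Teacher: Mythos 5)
Your proof is correct and follows essentially the same route as the paper's: both directions identify $L(e)$ with the path embedding from the minimal decomposition and the unit with $e_0$, and both invoke \ref{ax:s1} at exactly the same point --- to make $\lop(f)$ a monomorphism so that the factorisation $k = \lop(h)\circ e_0$ can be recovered, which is indeed the one step not literally contained in \ref{ax:s2}. The only difference is organisational: the paper routes the forward direction through a general lemma characterising relative right adjoints by a universal property of the unit (which also discharges functoriality of $L$ and naturality in both variables), whereas you construct the bijection by hand; if you keep your version, remember to check naturality in $e$ as well as in $f$.
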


\subsection{Simplifying the axioms}
\label{s:simpler-axioms}

In this section we simplify axioms~\ref{ax:s1} and \ref{ax:s2} so that we can establish their validity easily, without having to analyse the lifted functor $\lop$.
As before, we assume that $\op$ lifts to $\lop$ as in Theorem~\ref{t:em-lifting}.

To begin with, we observe that \ref{ax:s1} is implied by a similar property of $\op$.

\begin{restatable}{proposition}{SOnePrime}
    \label{p:s1p}
    If $\op$ preserves embeddings then so does~$\lop$.
\end{restatable}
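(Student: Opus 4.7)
The plan is to unpack the construction of $\lop$ that underlies Theorem~\ref{t:em-lifting}. Via Linton's theorem, $\lop(\vec{(A_i, \alpha_i)})$ is realised as an equalizer in $\EM{\D}$ of a canonical parallel pair of cofree $\D$-coalgebra morphisms with common domain $\EMF{\D}(\op(\vec{A_i}))$. This yields a structural comparison morphism
\[
    \iota_{\vec{A}} \colon \lop(\vec{(A_i, \alpha_i)}) \longrightarrow \EMF{\D}(\op(\vec{A_i}))
\]
whose underlying morphism in $\CD$ is itself an equalizer in $\CD$ and hence a regular monomorphism. Since regular monos in (pointed) relational structures coincide with embeddings, $\EMU{\D}(\iota_{\vec{A}})$ is an embedding in $\CD$.

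Given embeddings $f_i \colon (A_i,\alpha_i) \embed (B_i,\beta_i)$ in each $\EM{\C_i}$, I would next build the naturality square
\[
    \begin{tikzcd}
        \lop(\vec{(A_i,\alpha_i)}) \dar[swap]{\lop(\vec{f_i})} \rar{\iota_{\vec{A}}} & \EMF{\D}(\op(\vec{A_i})) \dar{\EMF{\D}(\op(\vec{f_i}))} \\
        \lop(\vec{(B_i,\beta_i)}) \rar{\iota_{\vec{B}}} & \EMF{\D}(\op(\vec{B_i}))
    \end{tikzcd}
\]
which commutes because $\lop(\vec{f_i})$ is precisely the morphism induced between the two equalizers by functoriality. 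Applying the forgetful $\EMU{\D}$, the right-hand vertical becomes $\D(\op(\vec{f_i}))$, an embedding in $\CD$ because $\op$ preserves embeddings by hypothesis and $\D$ preserves embeddings by the assumption of Theorem~\ref{t:em-lifting}.

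To conclude, the composite $\D(\op(\vec{f_i})) \circ \EMU{\D}(\iota_{\vec{A}})$ is an embedding in $\CD$, as a composite of two embeddings. By commutativity this equals $\EMU{\D}(\iota_{\vec{B}}) \circ \EMU{\D}(\lop(\vec{f_i}))$. Since $\EMU{\D}(\iota_{\vec{B}})$ is itself an embedding, the standard fact that $g \circ f$ being an embedding forces $f$ to be one (the analogue of Lemma~\ref{l:fs-basics}.3 in $\CD$, which holds by an elementary argument via injectivity and reflection of relations) gives that $\EMU{\D}(\lop(\vec{f_i}))$ is an embedding in $\CD$. Hence $\lop(\vec{f_i})$ is an embedding in $\EM{\D}$.

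The main obstacle is justifying the first paragraph with enough care: extracting from Linton's theorem the precise equalizer presentation of $\lop$, verifying that it assembles into a natural comparison morphism $\iota_{\vec{A}}$, and confirming that its underlying morphism in $\CD$ is an equalizer. Once this is in place, the remainder is a routine diagram chase against basic closure properties of embeddings that the paper has already set up.
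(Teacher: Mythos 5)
Your proof follows essentially the same route as the paper's: the naturality square relating $\lop(\vec{f_i})$ to $\EMF{\D}(\op(\vec{f_i}))$ via the equalizer inclusions $\iota$, the observation that the right vertical is an embedding because $\op$ and $\D$ both preserve embeddings, and the final cancellation via the analogue of Lemma~\ref{l:fs-basics}.3. The one step where your justification goes wrong is the claim that the underlying morphism of $\iota_{\vec{A}}$ in $\CD$ is itself an equalizer in $\CD$. The forgetful functor $\EMU{\D}\colon \EM{\D} \to \CD$ is comonadic, so it creates \emph{colimits}, not limits; the equalizer in $\EM{\D}$ produced by Linton's theorem is in general the largest subcoalgebra contained in the underlying equalizer, not the underlying equalizer itself, so its underlying map need not be regular mono in $\CD$ for the reason you give. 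The fact you actually need --- that $\iota_{\vec{A}}$ is an embedding --- is still true, and the paper establishes it as Lemma~\ref{l:equalisers-embeddings} by a different argument: the (surjection, embedding) factorisation system lifts to $\EM{\D}$, and in any category with a proper factorisation system the $\mathcal M$-class contains all equalizers. Substituting that lemma for your first paragraph makes the rest of your argument go through verbatim, and it then coincides with the paper's proof.
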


By Proposition~\ref{p:bimorphisms} we observe that \ref{ax:s2} is implied by a simple condition which can be stated entirely in the base category of (pointed) relational structures.

\begin{restatable}{proposition}{STwoPrime}
    \label{p:s2p}
    \label{p:smoothness-bimorphisms}
    The axiom \ref{ax:s2} is implied by the following axiom:
    \begin{axioms}
        \item[\em\textbf{\namedlabel{ax:s2p}{(S2')}}]
        For any path $(\Ps,\Pc)$ in $\EM \D$ and coalgebras $(\As_i,\Ac_i)$ in $\EM{\C_i}$ for $1 \leq i \leq n$, every morphism 
        $\vphantom{\Big|}f\colon P \to \op(\vec{A_i})$
        which makes the following diagram commute
        \[
        \begin{tikzcd}[->, ampersand replacement=\&]
            \Ps \arrow[rr, "f"] \dar[swap]{\Pc} \& \& \op(\vec{\As_i}) \dar{\op(\vec{\Ac_i})} \\
            \D(\Ps) \rar{\D(f)} \& \D\op(\vec{\As_i}) \rar{\kappa} \& \op\bigl(\vec{\C_i(\As_i)}\bigr)
        \end{tikzcd}
        \]
        has a minimal decomposition through 
        \[ \op(\vec{e_i})\colon \op(\vec{\Ps_i}) \to \op(\vec{\As_i}),\] 
        for some path embeddings $e_i\colon (\Ps_i,\Pc_i) \embed (\As_i,\Ac_i)$.
    \end{axioms}
\end{restatable}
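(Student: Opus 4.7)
The plan is to reduce \ref{ax:s2} to a statement about base-category bimorphisms using Proposition~\ref{p:bimorphisms}, apply \ref{ax:s2p}, and then transport the resulting factorisation back to the category of coalgebras. Starting from a path embedding $e\colon (\Ps,\Pc) \embed \lop(\vec{(\As_i,\Ac_i)})$ in $\EM{\D}$, Proposition~\ref{p:bimorphisms} produces a morphism $e^\# \colon \Ps \to \op(\vec{\As_i})$ in $\CD$ satisfying the bimorphism diagram~\eqref{eq:bimorph}. Applying \ref{ax:s2p} to $e^\#$ yields path embeddings $e_i\colon (\Ps_i,\Pc_i) \embed (\As_i,\Ac_i)$ in each $\EM{\C_i}$ together with a morphism $f_0 \colon \Ps \to \op(\vec{\Ps_i})$ satisfying $e^\# = \op(\vec{e_i}) \circ f_0$; this will be the candidate minimal decomposition.

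The crucial step is to promote $f_0$ to a coalgebra morphism $e_0\colon (\Ps,\Pc) \to \lop(\vec{(\Ps_i,\Pc_i)})$ via Proposition~\ref{p:bimorphisms}, which amounts to verifying that $f_0$ satisfies the bimorphism square with $(\Ps_i,\Pc_i)$ in place of $(\As_i,\Ac_i)$. I would postcompose both candidate sides of that square with $\op(\vec{\C_i(e_i)})$ and rewrite them using naturality of $\kappa$ together with the coalgebra identity $\Ac_i \circ e_i = \C_i(e_i) \circ \Pc_i$; after these manipulations both sides reduce to the two sides of the bimorphism square for $e^\#$, which commutes by construction. Cancelling the common factor $\op(\vec{\C_i(e_i)})$ then demands that it be a monomorphism, and this is the main obstacle. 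It resolves under the running assumption that the $\C_i$ and $\op$ preserve embeddings (the latter being the premise of Proposition~\ref{p:s1p} that renders \ref{ax:s1} trivial): each $\C_i(e_i)$ is an embedding, hence so is $\op(\vec{\C_i(e_i)})$, and the base-category analogue of Lemma~\ref{l:fs-basics}.1 makes it a monomorphism.

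Having obtained $e_0$, the composition clause of Proposition~\ref{p:bimorphisms} yields $(\lop(\vec{e_i})\circ e_0)^\# = \op(\vec{e_i}) \circ f_0 = e^\#$, so $\lop(\vec{e_i}) \circ e_0 = e$. Lemma~\ref{l:fs-basics}.3 then shows $e_0$ is an embedding, and since $(\Ps,\Pc)$ is a path, it is in fact a path embedding. For the minimality clause of \ref{ax:s2}, any alternative decomposition $e = \lop(\vec{g_i}) \circ g_0$ with path embeddings $g_i\colon (\Qs_i,\Qc_i) \embed (\As_i,\Ac_i)$ translates to a base-category decomposition $e^\# = \op(\vec{g_i}) \circ g_0^\#$ of exactly the shape required by \ref{ax:s2p}; its minimality produces unique $h_i\colon \Ps_i \to \Qs_i$ in $\CC_i$ with $e_i = g_i \circ h_i$. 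A final cancellation — combining the coalgebra squares for $e_i$ and $g_i$ to derive $\C_i(g_i) \circ \Qc_i \circ h_i = \C_i(g_i) \circ \C_i(h_i) \circ \Pc_i$ and then cancelling the monomorphism $\C_i(g_i)$ — upgrades each $h_i$ to a coalgebra morphism, completing the check.
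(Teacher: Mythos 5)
Your proof is correct and follows essentially the same route as the paper: translate $e$ to $e^\#$ via Proposition~\ref{p:bimorphisms}, apply \ref{ax:s2p}, show that $f_0$ is a bimorphism by postcomposing with $\op\bigl(\vec{\C_i(e_i)}\bigr)$ and cancelling it as a monomorphism (this is precisely the paper's Lemma~\ref{l:multi-decomp}, which you prove inline), then transport the decomposition and its minimality back through the bijection $(-)^\#$. Your additional checks --- that $e_0$ is a path embedding via Lemma~\ref{l:fs-basics}.3, and that the comparison morphisms $h_i$ are genuinely coalgebra morphisms, obtained by cancelling the monomorphism $\C_i(g_i)$ --- are details the paper leaves implicit, and both are verified correctly.
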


In fact, \ref{ax:s2} is equivalent to the version of \ref{ax:s2p} where we only care about morphisms $f\colon \Ps \to \op(\vec{\As_i})$ which arise from embeddings via the correspondence in Proposition~\ref{p:bimorphisms}.

As a corollary, we obtain an FVM theorem for the full logic.
\begin{restatable}{theorem}{FVMfulllogic}
    \label{t:fvm-standard}
    Let $\C_1,\dots,\C_n$ and $\D$ be comonads on categories $\CC_1,\dots,\CC_n$ and $\CD$, respectively, and 
    $ \op\colon \prod\nolimits_i \CC_i \to \CD $ a functor which preserves embeddings. 
    If there exists a Kleisli law of type
    \[ \D\circ H \Rightarrow H \circ \prod\nolimits_i\C_i \]
    satisfying \ref{ax:s2p}, then
    \[ A_1 \lequiv{\C_1} B_1, \;\ldots,\; A_n \lequiv{\C_n} B_n \]
    implies
    \[ H(A_1,\dots,A_n) \;\lequiv{\D}\; H(B_1,\dots,B_n). \]
\end{restatable}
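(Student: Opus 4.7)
The plan is to package the machinery developed in Sections~\ref{s:lifting-operations} and~\ref{s:ope-preservation} together with the simplifications from Section~\ref{s:simpler-axioms}. First, I would invoke Theorem~\ref{t:em-lifting} on the given Kleisli law $\kappa$ (relying on the implicit background assumption that $\D$ preserves embeddings, as holds for our running examples) to produce a lifted functor $\lop\colon \prod_i \EM{\C_i} \to \EM{\D}$ together with a natural isomorphism
\[ \lop\bigl(\vec{\EMF{\C_i}(A_i)}\bigr) \;\cong\; \EMF{\D}\bigl(H(\vec{A_i})\bigr). \]
Next, I would verify that $\lop$ satisfies both smoothness axioms: since $H$ preserves embeddings by hypothesis, Proposition~\ref{p:s1p} yields \ref{ax:s1}; since $\kappa$ satisfies \ref{ax:s2p} by hypothesis, Proposition~\ref{p:s2p} yields \ref{ax:s2}. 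Consequently Theorem~\ref{t:smoothness} applies, so $\lop$ sends tuples of open pathwise-embeddings to open pathwise-embeddings.

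The remainder is a short diagram chase. Given $A_i \sequiv{\C_i} B_i$ for each $i$, unfold this equivalence to obtain a span of open pathwise-embeddings
\[ \EMF{\C_i}(A_i) \xleftarrow{f_i} X_i \xrightarrow{g_i} \EMF{\C_i}(B_i) \]
in $\EM{\C_i}$. Applying $\lop$ componentwise yields morphisms $\lop(\vec{f_i})$ and $\lop(\vec{g_i})$ from $\lop(\vec{X_i})$, which are open pathwise-embeddings by the previous step. Post-composing with the isomorphisms from the lifting produces the span
\[ \EMF{\D}\bigl(H(\vec{A_i})\bigr) \;\leftarrow\; \lop(\vec{X_i}) \;\rightarrow\; \EMF{\D}\bigl(H(\vec{B_i})\bigr), \]
whose legs remain open pathwise-embeddings since isomorphisms are trivially such and both properties are stable under composition (using Lemma~\ref{l:fs-basics} for the embedding side). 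This span witnesses $H(\vec{A_i}) \sequiv{\D} H(\vec{B_i})$.

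I do not anticipate any genuine obstacle: the theorem is designed precisely to package the preceding results, so the argument is largely bookkeeping. The only minor point worth double-checking is the stability of the open pathwise-embedding class under composition with isomorphisms, which is immediate from unpacking the definitions in Section~\ref{s:ope-preservation}. All the real work has already been done in Theorem~\ref{t:em-lifting} (lifting the operation), Theorem~\ref{t:smoothness} (preserving open pathwise-embeddings), and Propositions~\ref{p:s1p}--\ref{p:s2p} (reducing the smoothness axioms to conditions on the base category).
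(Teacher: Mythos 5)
Your proposal is correct and follows essentially the same route as the paper's own proof: lift $H$ via Theorem~\ref{t:em-lifting}, establish \ref{ax:s1} and \ref{ax:s2} through Propositions~\ref{p:s1p} and~\ref{p:s2p}, apply Theorem~\ref{t:smoothness} to the spans witnessing $A_i \lequiv{\C_i} B_i$, and use the isomorphism $\lop\bigl(\vec{\EMF{\C_i}(A_i)}\bigr) \cong \EMF{\D}\bigl(H(\vec{A_i})\bigr)$ to retype the resulting span. Your explicit remarks about the implicit assumption that $\D$ preserves embeddings and about stability of open pathwise-embeddings under composition with isomorphisms are points the paper glosses over, but they do not change the argument.
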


\begin{example}
  \label{e:coproducts-full}
  We can apply Theorem~\ref{t:fvm-standard} to show that $\sequiv{\Pk}$ is preserved by taking disjoint unions of structures.
  As the functor~$\uplus$ preserves embeddings, we only have to check~\ref{ax:s2p} for the Kleisli law~$\kappa_{A_1,A_2}\colon \Pk(A_1 \uplus A_2) \to \Pk(A_1) \uplus \Pk(A_2)$ defined in Example~\ref{ex:coproducts-pe}. 
  For a path $(\Ps,\Pc)$ in $\EM\Pk$, consider a morphism $f\colon \Ps \rightarrow A_1 \uplus A_2$ , where $(A_1,\Ac_1)$ and $(A_2,\Ac_2)$ are $\Pk$-coalgebras, and $f$ makes the following diagram of \ref{ax:s2p} commute.
    \[
      \begin{tikzcd}
          \Ps
            \ar{rr}{f}
            \ar[swap]{d}{\Pc}
          &
          &
          \As_1 \uplus \As_2
            \ar{d}{\alpha_1 \uplus \alpha_2}
          \\
          \Pk(\Ps)
            \ar{r}{\Pk(f)}
          &
          \Pk(\Ac_1 \uplus \Ac_2)
            \ar{r}{\kappa}
          &
          \Pk(\As_1) \uplus \Pk(\As_2)
      \end{tikzcd}
    \]
    Recall from Remark~\ref{r:coalg-order} that the coalgebra map $\Pc$ defines a forest order $\sqsubseteq_\Pc$ on the universe of $\Ps$, and similarly for coalgebra maps $\alpha_1,\alpha_2$. Furthermore, since $(\Ps,\Pc)$ is a path, $(P,\sqsubseteq_\Pc)$ is a finite linear order $x_1 \sqsubseteq_\Pc \dots \sqsubseteq_\Pc x_n$.
    
    We may assume without loss of generality that $f(x_n)$ is in $\As_1$. Then, $(\alpha_1 \uplus \alpha_2)(f(x_i))$ gives a word in $\Pk(\Ac_1)$, which forms a finite linear order in the $\sqsubseteq_{\alpha_i}$ order. By commutativity of the above diagram, this must be the same word as the down-to-right composition $\kappa(\Pk(f)(\Pc(x_n)))$. But, by definition, the latter is just the reduction of the word on elements $f(x_1), \dots, f(x_n)$ to the positions where $f(x_i) \in A_1$.
    
    The same reasoning applies for the largest $j$ such that $f(x_j)$ is in $\As_2$.
    Each of the two words obtained this way yields a path embedding $e_i\colon (P_i,\pi_i) \emb (A_i,\alpha_i)$ as the embedding of the induced substructure of $A_i$ on given word letters.
    Define a morphism $e_0\colon P \rightarrow P_1 \uplus P_2$ by sending an element to its image under $f$. By construction, $(e_1 \uplus e_2) \circ e_0$ forms a decomposition of $f$ and minimality of this decomposition is immediate, as any other decomposition of $f$ would contain $P_i$ as a subpath. We obtain the following FVM theorem for the $k$ variable fragment:
    \begin{align*} 
    &A_1 \lequiv{\Pk} B_1 \ete{and}  A_2 \lequiv{\Pk} B_2 \\
    &\tq{implies} 
    A_1 \uplus A_2 \lequiv{\Pk} B_1 \uplus B_2 
    \end{align*}

The same reasoning goes through for $\Pk$ and $\Ek$ with coproducts over any index set $I$ and for $\Mk$ with $\merge{R}$ from Example~\ref{ex:coproducts-with-choice-pe}.
\end{example}

\section{Abstract FVM theorems for products}
\label{sec:product-theorems}
In this section we show that FVM theorems for the operation of the categorical product of two structures are automatic, regardless of the chosen comonad. This shows the power of the categorical approach since the theorem applies to any situation where logical equivalence admits a comonadic characterisation.

Recall that the universe of the product $A_1 \times A_2$ of two structures $A_1,A_2$ in category $\Rel$ consists of pairs $(a,a')$ where $a \in A_1$ and $a' \in A_2$.
For an $r$-ary relation $R \in \sg$, the interpretation $R^{A_1 \times A_2}$ is defined as 
\[
R^{A_1 \times A_2}\bigl((a_1,a'_1),\dots,(a_r,a'_r)\bigr)
\]
if and only if
\[
R^{A_1}(a_1,\dots,a_r) \ete{and} R^{A_2}(a'_1,\dots,a'_r)
\]
This operation on the category of relational structures obeys the usual universal property of products. Namely, for any object $C$ and morphisms $h_1\colon C\to A_1$ and $h_2\colon C\to A_2$, there is a unique morphism $\overline h\colon C\to A_1\times A_2$ such that $h_i = \pi_i \circ \overline h$, for $i=1,2$, where $\pi_i\colon A_1\times A_2 \to A_i$ is the $i$th projection. In case of products of relational structures, $\overline h$ sends $c$ to the pair~$(h_1(c),h_2(c))$.
Products in $\Rels$ work similarly, with the distinguished element of $(\As,a) \times (\Bs,b)$ being the pair~$(a,b)$.

Next, we fix an arbitrary comonad $\C$ on $\Rel$ or $\Rels$.
Observe that, for objects $A_1,A_2$ of $\CC$, we have morphisms
\[
    \C(\pi_i) \colon \C(A_1\times A_2) \to \C(A_i)
\]
for $i=1,2$ and, by the universal property of products, also the morphism
$\kappa_{A_1,A_2} \colon \C(A_1\times A_2) \to \C(A_1) \times \C(A_2)$

Applying Theorem~\ref{t:fvm-pe} immediately yields the following FVM theorem for positive existential fragments.
\begin{proposition}
    For any comonad $\C$ over $\Rel$ or $\Rels$,
    \begin{align*} 
    &\As_1 \parrow{\C} \As_2 \ete{and} \Bs_1 \parrow{\C} \Bs_2 \\
    &\tq{implies} \As_1 \times \As_2 \parrow{\C} \Bs_1 \times \Bs_2
    \end{align*}
\end{proposition}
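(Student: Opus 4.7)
The proof plan is a one-line application of Theorem~\ref{t:fvm-pe}. I would instantiate that theorem with arity $n = 2$, set all three comonads $\C_1 = \C_2 = \D$ equal to the given $\C$, take the base category $\CC$ to be either $\Rel$ or $\Rels$, and choose the operation $H$ to be the binary product functor $\times\colon \CC \times \CC \to \CC$. The hypothesis of Theorem~\ref{t:fvm-pe} then reduces to producing, for every pair of objects $A_1, A_2$ in $\CC$, a morphism
\[ \kappa_{A_1,A_2}\colon \C(A_1 \times A_2) \longrightarrow \C(A_1) \times \C(A_2). \]

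Such a $\kappa_{A_1,A_2}$ is constructed exactly as indicated in the paragraph immediately preceding the proposition. Functoriality of $\C$ applied to the two projections $\pi_i\colon A_1 \times A_2 \to A_i$ yields morphisms $\C(\pi_i)\colon \C(A_1 \times A_2) \to \C(A_i)$, and then the universal property of the product $\C(A_1) \times \C(A_2)$ in $\CC$ supplies a unique mediating morphism $\kappa_{A_1,A_2}$ satisfying $\pi_i \circ \kappa_{A_1,A_2} = \C(\pi_i)$ for $i = 1, 2$. With this collection of morphisms in hand, the desired implication falls out directly from Theorem~\ref{t:fvm-pe}.

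I do not expect any genuine obstacle. The only thing worth emphasising is \emph{why} there is no obstacle, since this is the whole point of the section: unlike the FVM arguments for disjoint union (Example~\ref{ex:coproducts-pe}) or the modal merge operation (Example~\ref{ex:coproducts-with-choice-pe}), where $\kappa$ had to be defined by an explicit case analysis on the concrete data of the comonad, in the product case the witnessing morphism is delivered for free by the universal property of $\times$, uniformly in $\C$. Consequently no game-theoretic or combinatorial input about $\C$ is needed, which is precisely the categorical abstraction that the introductory remarks to this section advertise.
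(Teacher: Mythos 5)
Your proposal is correct and is essentially identical to the paper's own argument: the paper also obtains $\kappa_{A_1,A_2}\colon \C(A_1\times A_2)\to \C(A_1)\times\C(A_2)$ as the mediating morphism induced by $\C(\pi_1)$ and $\C(\pi_2)$ via the universal property of the product, and then invokes Theorem~\ref{t:fvm-pe} directly. Your closing observation about why no comonad-specific input is needed matches the point the paper itself emphasises at the start of Section~\ref{sec:product-theorems}.
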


A direct calculation reveals that $\kappa$ is natural in the choice of $A_1$ and $A_2$ and, furthermore, is a Kleisli law.

\begin{lemma}
    $\kappa$ is a Kleisli law.
\end{lemma}

Consequently, Theorem~\ref{t:fvm-counting} gives us an abstract FVM for counting fragments.
\begin{proposition}
    For any comonad $\C$ over $\Rel$ or $\Rels$,
    \begin{align*} 
    &\As_1 \cequiv{\C} \As_2 \ete{and} \Bs_1 \cequiv{\C} \Bs_2 \\
    &\tq{implies} \As_1 \times \As_2 \cequiv{\C} \Bs_1 \times \Bs_2
    \end{align*}
\end{proposition}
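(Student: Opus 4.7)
The plan is to apply Theorem~\ref{t:fvm-counting} directly, instantiating all of its parameters using data that has already been constructed in the build-up to this statement. Specifically, I take $n = 2$, set $\C_1 = \C_2 = \D = \C$, let $\CC_1 = \CC_2 = \CD$ be whichever of $\Rel$ or $\Rels$ is under consideration, and take the operation $H$ to be the binary product functor
\[
    \times \colon \CC \times \CC \to \CC.
\]
The required Kleisli law of type $\C \circ \times \Rightarrow \times \circ (\C \times \C)$ is provided by the natural transformation $\kappa$ whose components
\[
    \kappa_{A_1,A_2}\colon \C(A_1 \times A_2) \to \C(A_1) \times \C(A_2)
\]
were induced, just before the proposition, from the pair $\C(\pi_1), \C(\pi_2)$ via the universal property of the product, and whose naturality and Kleisli-law axioms \ref{ax:kl-law-counit} and \ref{ax:kl-law-comultiplication} are exactly the content of the immediately preceding lemma.

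With these ingredients in place, the hypotheses $\As_1 \cequiv{\C} \Bs_1$ and $\As_2 \cequiv{\C} \Bs_2$ fit precisely into the form required by Theorem~\ref{t:fvm-counting}, which then yields the conclusion
\[
    \As_1 \times \As_2 \;\cequiv{\C}\; \Bs_1 \times \Bs_2
\]
without any further argument. Because all of the real work — namely, abstracting the counting-logic FVM theorem away from the particulars of any comonad, and verifying that the canonical comparison map out of $\C$ applied to a product is a Kleisli law — has already been carried out, there is no genuine obstacle remaining; the proposition is simply a one-line corollary of the machinery set up earlier in the section and in Section~\ref{s:fvm-counting}.
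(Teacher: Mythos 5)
Your proposal is correct and matches the paper's own argument: the proposition is obtained by instantiating Theorem~\ref{t:fvm-counting} with the binary product functor and the canonical comparison map $\kappa_{A_1,A_2}\colon \C(A_1 \times A_2) \to \C(A_1) \times \C(A_2)$, whose Kleisli-law axioms are exactly the content of the preceding lemma. Nothing further is needed.
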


We also have an FVM theorem for products and the equivalence $\lequiv \C$.
Recall that in order to specify $\lequiv \C$ we need to fix a choice of paths $\Pa$ in the category $\EM \C$. As before, we assume that embeddings of coalgebras are the morphisms of coalgebras $(A,\alpha) \to (B,\beta)$ whose underlying morphisms $A \to B$ is an embedding of $\sg$-structures. Similarly, we say that a morphism of coalgebras $(A,\alpha) \to (B,\beta)$ is surjective if the underlying morphisms $A \to B$ is surjective.

As a corollary of Theorem~\ref{t:fvm-standard}, we obtain a general FVM theorem for products.
The proof proceeds by checking axioms \ref{ax:s1} and \ref{ax:s2} for the lifting of $\op(A_1, A_2) = A_1 \times A_2$, by checking the assumptions of Propositions~\ref{p:s1p} and~\ref{p:s2p}.

\begin{theorem}
    \label{thm:abstract-fvm-products-full}
    If $\C$ preserves embeddings and if for any surjective morphism of coalgebras $(A,\alpha) \to (B,\beta)$ in $\EM \C$,
    if $(A,\alpha)$ is a path then so is $(B,\beta)$, then
    \begin{align*} 
    &\As_1 \lequiv{\C} \As_2 \ete{and} \Bs_1 \lequiv{\C} \Bs_2 \\
    &\tq{implies} \As_1 \times \As_2 \lequiv{\C} \Bs_1 \times \Bs_2
    \end{align*}
\end{theorem}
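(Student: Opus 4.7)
The plan is to derive the result from Theorem~\ref{t:fvm-standard} applied to the Kleisli law $\kappa_{A_1,A_2}\colon \C(A_1\times A_2)\to \C(A_1)\times\C(A_2)$ established in the preceding lemma. By Propositions~\ref{p:s1p} and~\ref{p:s2p}, this reduces to verifying two things: that the binary product functor preserves embeddings on the base category, and that $\kappa$ satisfies \ref{ax:s2p}. The former is immediate from the componentwise definition of products in $\Rel$ and $\Rels$, since a pair of injective relation-reflecting maps yields an injective relation-reflecting map componentwise.

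For \ref{ax:s2p}, fix a path $(\Ps,\Pc)$ in $\EM\C$, coalgebras $(\As_i,\Ac_i)$ for $i=1,2$, and a morphism $f\colon P\to \As_1\times\As_2$ making the diagram of \ref{ax:s2p} commute. The first key step is to show that the components $f_i := \pi_i\circ f$ lift to coalgebra morphisms $(\Ps,\Pc)\to(\As_i,\Ac_i)$. Post-composing the commuting square with the projection $\pi_i\colon \C(\As_1)\times\C(\As_2)\to\C(\As_i)$, and using $\pi_i\circ\kappa = \C(\pi_i)$ (which is how $\kappa$ is built from the universal property of products) together with $\pi_i\circ(\Ac_1\times\Ac_2)=\Ac_i\circ\pi_i$, collapses the diagram to $\C(f_i)\circ\Pc = \Ac_i\circ f_i$, which is precisely the coalgebra-morphism condition.

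Next, since $\C$ preserves embeddings, $\EM\C$ inherits a proper (surjection, embedding) factorisation system, as noted in the remark following Lemma~\ref{l:fs-basics}. I would factor each $f_i$ inside $\EM\C$ as $f_i = e_i\circ q_i$, with $q_i\colon (\Ps,\Pc)\twoheadrightarrow(\Ps_i,\Pc_i)$ surjective and $e_i\colon (\Ps_i,\Pc_i)\emb(\As_i,\Ac_i)$ an embedding. The path-closure hypothesis of the theorem now applies: each $(\Ps_i,\Pc_i)$ is a path because $(\Ps,\Pc)$ is and $q_i$ is a surjective coalgebra morphism. Setting $e_0 := \langle q_1,q_2\rangle$, which is a coalgebra morphism because each $q_i$ is, produces a decomposition $f = (e_1\times e_2)\circ e_0$ with both $e_i$ path embeddings.

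It remains to check minimality. Given any competing decomposition $f = (g_1\times g_2)\circ g_0$ with path embeddings $g_i\colon (\Qs_i,\Qc_i)\emb(\As_i,\Ac_i)$, I would observe that $e_i\circ q_i = f_i = g_i\circ(\pi_i\circ g_0)$. The diagonal-filler property of the factorisation system, with $q_i$ an epi and $g_i$ a mono by Lemma~\ref{l:fs-basics}.1, then supplies a unique coalgebra morphism $h_i\colon (\Ps_i,\Pc_i)\to(\Qs_i,\Qc_i)$ with $g_i\circ h_i = e_i$. The main technical subtlety is coordinating the path-closure assumption with the lifted factorisation system to ensure that the intermediate structures $(\Ps_i,\Pc_i)$ really are paths; once that is arranged, everything else reduces to standard diagram chases with projections and universal properties.
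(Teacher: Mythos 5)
Your proposal is correct and follows essentially the same route as the paper: reduce to Theorem~\ref{t:fvm-standard} via Propositions~\ref{p:s1p} and~\ref{p:s2p}, check that products preserve embeddings componentwise, show $\pi_i\circ f$ is a coalgebra morphism using $\pi_i\circ\kappa=\C(\pi_i)$, take the (surjection, embedding) factorisation in $\EM\C$ so the path-closure hypothesis applies to the surjective parts, and establish minimality by a diagonal fill-in against the competing decomposition's projections. The only cosmetic difference is that the paper performs the fill-in at the level of underlying structures and then separately verifies the diagonal is a coalgebra morphism (using that $\C(m_i)$ is an embedding), whereas you stay inside the lifted factorisation system on $\EM\C$; both are sound.
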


\begin{remark}
Instead of surjective homomorphisms and embeddings we could have picked any proper factorisation system in sense of e.g.\ \cite{freyd1972categories} on an arbitrary category with products. The same reasoning would apply.
\end{remark}

We would like to emphasise that, as mentioned in Remark~\ref{r:inf-ops}, our abstract FVM theorems proved in this section work equally well for infinitary products.

\begin{example}
    \label{ex:ek-pk-mk-products}
In the case of $\Ek$,
it is immediate that $\Ek$ preserves embeddings. Furthermore, for a surjective coalgebra morphism $(\Ps,\Pc) \to (\As,\Ac)$ in $\EM \Ek$ such that $(\Ps,\Pc)$ is a path, we also have that $(\As,\Ac)$ is a path since, in view of Remark~\ref{r:coalg-order}, we map a finite linear order onto a forest.

The same is true for our example comonads $\Pk$, $\Mk$, yielding product FVM theorems for the bounded quantifier rank, bounded variable and bounded modal depth fragments and their positive existential and counting variants.
\end{example}

\begin{example}
More generally, the assumptions of Theorem~\ref{thm:abstract-fvm-products-full} hold for any comonad $\C$ which preserves embeddings and such that the category $\EM \C$ is an \df{arboreal category}, in sense of \cite{AbramskyR21}, by Lemma 3.5 therein.

In particular, all FVM theorems for products from this section hold for \df{hybrid logic} because of the hybrid comonad of~\cite{AbramskyMarsden2022}, \df{guarded logics} captured by guarded comonads in~\cite{AbramskyM21}, the \df{bounded conjunction} and \df{bounded quantifier rank} fragments of the $k$-variable first-order logic captured in \cite{montacute2021pebble} and \cite{Paine20}, respectively.
\end{example}

\section{Adding equality and other enrichments}
\label{s:enrichments}

Recall that the relations $\sequiv{\Ek}$ and $\sequiv{\Pk}$ express logical equivalence with respect to the logic \emph{without} equality.
The standard way to add equality to the language is to extend the signature $\sg$ with an additional binary relation symbol $I(\cdot,\cdot)$, and provide a \df{translation}
$\trI \colon \Rel \to \R(\sg\cup \{I\})$
which interprets this new relational symbol as equality, i.e. $\trI(A)$ is the $\sg\cup \{I\}$-structure extension of $A$ by setting $I(a,b)$ if and only if $a = b$. Since our game comonads are defined uniformly for any signature, they are also defined over $\R(\sg\cup \{I\})$, giving us:

\begin{itemize}
    \item $\trI(A) \sequiv \Ek \trI(B)$ iff $A$ and $B$ are logically equivalent w.r.t.\ the fragment of first-order logic consisting of formulas of quantifier depth ${\leq}\, k$ with equality, and
    \item $\trI(A) \sequiv \Pk \trI(B)$ iff $A$ and $B$ are logically equivalent w.r.t.\ the fragment of first-order logic consisting of formulas in $k$-variables with equality.
\end{itemize}

Moreover, an operation which commutes with the translation $\trI$ lifts FVM theorems for fragments without equality to fragments with equality. For example, disjoint unions satisfy that $\trI(A \uplus B) = \trI(A) \uplus \trI(B)$ and, therefore, if $\trI(A_1) \sequiv \Ek \trI(B_1)$, and $\trI(A_2) \sequiv \Ek \trI(B_2)$, then
\begin{align*}
    \trI(A_1 \uplus A_2)  &\enspace=\enspace\mkern8mu \trI(A_1) \uplus \trI(A_2) \\
    &\enspace\sequiv{\Ek}
    \trI(B_1) \uplus \trI(B_2) = \trI(B_1 \uplus B_2)
\end{align*}
which shows that $A_1 \uplus A_2$ and $B_1 \uplus B_2$ are logically equivalent w.r.t.\ formulas of quantifier depth ${\leq}\, k$ with equality.

In general, similar translations can be used for other logic extensions.
For example, in \cite{BednarczykUrbanczyk2022comonadicdescriptionlogics} a translation is used to obtain description logics using $\Mk$.
Nevertheless, the above technique applies verbatim to varying translations, operations and logics.
We arrive at the following simple but important fact, that greatly increases the applicability of the results in the previous sections.
\begin{restatable}{theorem}{CommutativityWithTranslations}
\label{thm:translations}
    For $i \in \{1,\dots,n+1 \}$, let $\L_i$ and $\L'_i$ be logics and $\tr_i$ a translation such that ${A \lequiv{\L_i} B}$ \,iff\, ${\tr_i(A) \lequiv{\L'_i} \tr_i(B)}$. Further, let $\op'$ be an $n$-ary operation satisfying the \emph{FVM theorem from $\L'_1,\dots,\L'_n$ to $\L'_{n+1}$}, with respect to translated structures. That is:
    \[ \tr_1(A_1) \lequiv{\L'_1} \tr_1(B_1),\;\ldots,\;\tr_n(A_n) \lequiv{\L'_n} \tr_n(B_n)  \]
    implies
    \[ \op'\bigl(\vec{\tr_i(A_i)}\bigr) \lequiv{\L'_{n+1}} \op'\bigl(\vec{\tr_i(B_i)}\bigr) \]
    If $\op$ commutes with the translations in the sense that 
    $\op'(\vec{\tr_i(A_i)}) \cong \tr_{n+1}(\op(\vec{A_i}))$,
    then $\op$ satisfies the FVM theorem from $\L_1,\dots,\L_n$ to $\L_{n+1}$. That is:
    \[ A_1 \lequiv{\L_1} B_1, \;\ldots,\; A_n \lequiv{\L_n} B_n\]
    implies
    \[ \op\bigl(\veci A\bigr) \lequiv{\L_{n+1}} \op\bigl(\veci B\bigr) \]
\end{restatable}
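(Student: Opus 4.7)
The proof is essentially a single diagram chase that assembles the hypotheses in the only way they can fit together, so my plan is to set up this chase clearly and flag the one genuinely assumed background fact along the way. Start by fixing arbitrary structures with $A_i \lequiv{\L_i} B_i$ for $i = 1, \ldots, n$, and apply the hypothesis on each $\tr_i$ in the easy direction to obtain $\tr_i(A_i) \lequiv{\L'_i} \tr_i(B_i)$ for every $i \leq n$.

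Next, feed this family of equivalences into the assumed FVM theorem for $\op'$ with respect to the primed logics, which immediately yields
\[
    \op'\bigl(\vec{\tr_i(A_i)}\bigr) \;\lequiv{\L'_{n+1}}\; \op'\bigl(\vec{\tr_i(B_i)}\bigr).
\]
Now invoke the commutativity hypothesis $\op'(\vec{\tr_i(X_i)}) \cong \tr_{n+1}(\op(\vec{X_i}))$, instantiated at both $\vec{A_i}$ and $\vec{B_i}$, to transport this equivalence along the two isomorphisms and deduce
\[
    \tr_{n+1}\bigl(\op(\vec{A_i})\bigr) \;\lequiv{\L'_{n+1}}\; \tr_{n+1}\bigl(\op(\vec{B_i})\bigr).
\]
Finally, apply the hypothesis on $\tr_{n+1}$ in the reverse direction to conclude $\op(\vec{A_i}) \lequiv{\L_{n+1}} \op(\vec{B_i})$, as required.

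The only non-bookkeeping point in the argument is the middle transport step: it silently uses that $\lequiv{\L'_{n+1}}$ is invariant under isomorphism of structures. This is standard for any logic under consideration in the paper (indeed, in all the logics discussed $\lequiv{\L}$ factors through isomorphism, and each of the comonadic equivalences $\parrow{\C}$, $\pequiv{\C}$, $\cequiv{\C}$, $\sequiv{\C}$ is trivially isomorphism-invariant since morphisms compose with isomorphisms), so I would just remark on it in one sentence rather than formalise a separate lemma. I expect no real obstacle: the content of the statement is exactly the observation that translations can be pre- and post-composed with an FVM implication, and the proof is a four-step chain of implications with one isomorphism-transport in the middle.
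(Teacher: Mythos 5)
Your proposal is correct and follows essentially the same four-step chain as the paper's own proof: translate each equivalence forward, apply the FVM theorem for $\op'$, transport along the isomorphisms $\op'(\vec{\tr_i(A_i)}) \cong \tr_{n+1}(\op(\vec{A_i}))$, and translate back via $\tr_{n+1}$. Your explicit remark that the transport step relies on isomorphism-invariance of $\lequiv{\L'_{n+1}}$ is a point the paper leaves implicit in its ``straightforward calculation,'' but it is the same argument.
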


\begin{example}
Let $\trCon\colon \Rel \to \R(\sgCon)$ be a translation of $\sg$-structures into the extension $\sgCon$ of $\sg\cup\{I\}$ with an extra binary relation $\Con(\cdot,\cdot)$. The interpretation of $\Con$ in $\trCon(A)$ consists of all pairs~$(a,b)$ appearing in the same component of the Gaifman graph of $A$, that is, pairs~$(a,b)$ such that there is a path $a \leftrightsquigarrow x_1 \leftrightsquigarrow \dots \leftrightsquigarrow x_n \leftrightsquigarrow b$ where $x \leftrightsquigarrow y$ holds whenever $x$ and $y$ are in a common tuple in an $R^A$ of some~$R\in \sg$.

It is easy to see that $\trCon(A) \sequiv \Ek \trCon(B)$ expresses logical equivalence in the bounded quantifier depth fragment extended with the connectivity predicate, akin to~\cite{bojanczyk2021separatorlogic,schirsiebvigny2022connectivity}. Since $\trCon(A \uplus B) \cong \trCon(A) \uplus \trCon(B)$, we obtain, by Theorem~\ref{thm:translations} and Example~\ref{e:coproducts-full}, an FVM theorem for disjoint unions and the quantifier rank $k$ fragment of said logic.
While still being polynomial-time computable, the connectivity relation is not definable in bounded quantifier depth fragments of first-order logic,
and so this is a proper extension.
\end{example}

\begin{example}
    Continuing from Example~\ref{ex:ek-pk-mk-products}, consider $\tr\colon \Rels \to \Rs(\sg^G)$ where $\sg^G$ is the extension of the modal signature $\sg$ with the \df{global relation} $G(a,b)$, true in $\tr(A,a)$ for any pair of elements. Then, it is immediate that $\tr((A,a) \times (B,b)) \cong \tr(A,a) \times \tr(B,b)$, giving us by Theorem~\ref{thm:translations} an FVM theorem for products and modal logic with global modalities.
\end{example}

\begin{example}

As an example of Theorem~\ref{thm:translations} where the two operations differ, we consider the notion of \textit{weak (bi)simulation} where transitions along a distinguished relation $S \in \sg$ are considered to occur silently. 
We can model weak bisimulation by considering bisimulation between translated structures where $\tr\colon \Rels \to \Rels$ replaces all relations $R$ with their closure under sequences of prior and subsequent $S$ transitions.

For pointed structures $\As_1 = (A_1,a_1)$ and $\As_2 = (A_2,a_2)$, there is an operation~$\As_1 \vee \As_2$ which adds a new initial point~$\star$ to $A_1 \uplus A_2$.  We extend the transitions with $R(\star, (i,x))$, for $R\in \sg$, whenever $R^{A_i}(a_i,x)$ for some $i\in \{1,2\}$ and $x \in A_i$. 
This construction satisfies 
$ \tr(\As_1 \merge{S} \As_2) \cong \tr(\As_1) \vee \tr(\As_2)$
by design, and there is a Kleisli law
\[\kappa_{\As_1,\As_2}\colon \Mk(\As_1 \vee \As_2)\rightarrow \Mk(\As_1) \vee \Mk(\As_2),\]
via a similar construction to that used in Example~\ref{ex:coproducts-with-choice-pe}. This  yields an FVM theorem:
\newcommand{\wksim}{\Rrightarrow_{W_k}}
\begin{align*} 
&\As_1 \wksim \Bs_1 \ete{and} \As_2 \wksim \Bs_2 \\ 
&\tq{implies}
\As_1 \merge{S} \As_2 \wksim \Bs_1 \merge{S} \Bs_2 
\end{align*}
where $\As \wksim \Bs$ indicates that $\As$ weakly simulates $\Bs$ up to depth~$k$.
\end{example}

\section{Conclusion}

We presented a categorical approach to the composition method, and specifically Feferman--Vaught--Mostowski theorems. 
We exploit game comonads to encapsulate the logics and their model comparison games. 
Surprising connections to classical constructions in category theory, and especially the monad theory of bilinear maps emerged from this approach, cf.\ Section~\ref{s:lifting-operations}. %

For finite model theorists, our work provides a novel high-level account of many FVM theorems, abstracting away from individual logics and constructions.
Furthermore, concrete instances of these theorems are verified purely semantically, by finding a suitable collection homomorphisms forming a Kleisli law satisfying \ref{ax:s2p}, instead of the usual delicate verification that strategies of model comparison games compose.
For game comonads, we provide a much needed tool that enables us to handle logical relationships between structures as they are transformed or viewed in terms of different logics.

The FVM results in this paper, combined with judicious use of the techniques described in Section~\ref{s:enrichments} can be pushed significantly further. 
FVM theorems and the compositional method are of particular significance in the setting of monadic second-order (MSO) logic~\cite{gurevich1985monadic}. Exploiting the results we have presented for the composition method, a follow up paper will give a comonadic semantics for MSO, and develop a semantic account of Courcelle's algorithmic meta-theorems~\cite{courcelle2012graph}.

The original theorem of Feferman--Vaught~\cite{feferman1959first} is very flexible, including incorporating structure on the indices of families of models to be combined by an operation, as well as the models themselves. This aspect is currently outside the scope of our comonadic methods. 
An ad-hoc adaptation of our approach also gives a proof of the FVM theorem for free amalgamations, but this does not follow directly from our theorems. Developing these extensions is left to future work.

The present work bears some resemblance to Turi and Plotkin's bialgebraic semantics~\cite{turi1997towards}, a categorical model of structural operational semantics (SOS)~\cite{plotkin1981}.
Turi and Plotkin noticed that the SOS rules assigning behaviour to syntax could be abstracted as a certain distributive law $\lambda$.
An algebra encodes the composition operations of the syntax, and a coalgebra the behaviour.
If this pair is suitably compatible with~$\lambda$, forming a so-called $\lambda$-bialgebra, then crucially bisimulation is a congruence with respect to the composition operations.
Both bialgebraic semantics and our work presented in this paper give
a categorical account of well-behaved composition operations, with the interaction between composition and observable behaviour mediated by some form of distributive law. There are also essential differences: bialgebraic semantics typically focuses on assigning behaviour to syntax,
whereas FVM theorems encompass operations on models.
Our FVM theorems are parametric in a choice of logic or observable behaviour, while the notion of bisimulation in bialgebraic semantics is fixed by the coalgebra signature functor.
The natural presence of positive existential and counting quantifier variants of FVM theorems, and the incorporation of resource parameters, do not seem to have a direct analogue in bialgebraic semantics. The similarities are intriguing, and exploring the relationship between bialgebraic semantics and our approach to FVM theorems is left to future work.

\subsubsection*{Acknowledgements}
We would like to thank Clemens Berger for suggesting the connection with parametric adjoints and Nathanael Arkor for his suggestions on terminology. We are also grateful to the members of the EPSRC project ``Resources and co-resources'' for their feedback.

\bibliographystyle{IEEEtran}
\bibliography{fvmc}

\appendices 

\counterwithin{theorem}{section}
\counterwithin{proposition}{section}
\counterwithin{lemma}{section}
\counterwithin{definition}{section}

\section{FVM theorems for coproducts}
\label{s:proofs-fvm-thms-coproducts}

In this section we give a detailed account of the FVM theorems for coproducts/disjoint union discussed in  Examples~\ref{ex:coproducts-pe}, \ref{ex:coprod-etc-counting}, \ref{e:coproducts-full}.
We explicitly prove FVM theorems for coproducts of arbitrary collection of structures with respect to logic equivalences captured by $\Pk$. The argument for $\Ek$ is similar.

Recall that the (categorical) coproduct of $\coprod_i A_i$ of a collection of $\sg$-structures $\{A_i\}_{i \in I}$ indexed by a set $I$ is defined as the $\sg$-structure with the universe being the disjoint union of the underlying universes of $A_i$. We an encode the universe of $\coprod_i A_i$ as the set of pairs $(i,a)$ such that $i \in I$ and $a \in A_i$. For a $r$-ary relation $R \in \sg$, the interpretation $R^{\coprod_i A_i}$ is defined as
\[ R^{\coprod_i A_i}((i_1,a_1),\dots,(i_r,a_r)) \]
if and only if there exists an $i$ equal to all the $i_j$, and 
\[ R^{A_i}(a_1,\dots,a_r)  \]
The coproduct operation on objects of $\Rel$ extends to a functor taking collections of morphisms $f_i\colon A_i \rightarrow B_i$ to a morphism $\coprod_i f_i\colon \coprod_i A_i \rightarrow \coprod_i B_i$ where 
\[ (\coprod_i f_i) (i,a) = (i,f_i(a)) \]
We define a $\kappa\colon \Pk(\coprod_i A_i) \rightarrow \coprod_i \Pk(A_i)$ as we did for the binary case in Example~\ref{ex:coproducts-pe}.
Elements in the domain of $\kappa$ are words of the form
\[ w = \left[(p_1,(i_1,a_1)),\dots,(p_n,(i_n,a_n))\right]\]
where for all $j \in \{1,\dots,n\}$, $i_j \in I$ and $a_i \in A^{i_j}$. We define $\kappa(w) = (i,\nu(w))$ where $i = i_n$ and $\nu(w)$ is the restriction of $w$ to the word of pairs $(p_j,a_j)$ such that $i_j = i$. 
Equivalently, when employing Haskell-style list comprehension notation\footnote{
An expression of the form $[x_j \mid \varphi(j)]$ applied to $s = [x_1,\dots,x_n]$ is the restriction of $s$ to only the letters $x_j$ such that $\varphi(j)$ holds.}, 
\[ \kappa(w) = \left(i,[(p_j,a_j) \mid i_j = i\;] \right). \]
We must show that $\kappa$ as defined is a morphism of $\sg$-structures. To check this, suppose that $R^{\Pk(\coprod_i A_i)}(w_1,\dots,w_r)$ for $r$-ary relation $R \in \sg$. By condition \ref{cond:peb-compatible} in the definition of $\Pk$, %
it holds that $R^{\coprod_i A_i}(\counit(w_1),\dots , \counit(w_r))$. Further from the definition of coproducts of $\sg$-structures, there exists an $i \in I$ such that all the $\counit(w_j)$ are in the $i$-th component $A_i$ of $\coprod_i A_i$.
Therefore $\kappa(w_j) = (i,\nu(w_j))$ for all $j \in \{1,\dots,n\}$.
By condition \ref{cond:peb-compare} in the definition of $\Pk$, %
the $w_j$ are pairwise comparable in the prefix order. Since each $w_j$ ends in a pebbling of an element in the $A_i$ component, we obtain that the $\nu(w_j)$ are pairwise comparable in the prefix order. 
Moreover, since the last pair of $w$ is the same as the last pair of $\nu(w)$, conditions \ref{cond:peb-active} and \ref{cond:peb-compatible} in the definition of $\Pk$ %
hold for $(\nu(w_1),\dots,\nu(w_r))$. 
Therefore, $R^{\Pk(A_i)}(\nu(w_1),\dots,\nu(w_r))$ and $R^{\coprod_i \Pk(A_i)}((i,\nu(w_1)),\dots,(i,\nu(w_r))$

Since the choice of the collection $\{A_i\}$ used when defining $\kappa$ was arbitrary, we obtain a collection of morphisms as required in Theorem~\ref{t:fvm-pe} and, consequently an FVM theorem for coproducts and $\pequiv{\Pk}$ 

\begin{theorem}
\label{t:coproduct-arbitrary-fvm-pe}
Given a collections of $\sg$-structures $\{A_i\}_{i \in I}$ and $\{B_i\}_{i \in I}$ indexed by a common set $I$. Then, 
\[
        \textstyle
        (\forall i\in I.
        \quad
        A_i \parrow{\Pk} B_i)  \qtq{implies} \coprod_i A_i \parrow{\Pk} \coprod_i B_i.
    \]
\end{theorem}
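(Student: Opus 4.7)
The plan is to invoke Theorem~\ref{t:fvm-pe} directly, viewing the $I$-indexed coproduct as an infinitary operation $H\colon \prod_{i \in I} \Rel \to \Rel$ and using Remark~\ref{r:inf-ops} which states that the abstract FVM theorems apply verbatim to infinitary operations. All the comonads involved are taken to be $\Pk$.

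First I would observe that the two ingredients required by Theorem~\ref{t:fvm-pe} have already been assembled in the discussion preceding the statement: (a) the action of $\coprod_i(-)$ on a family of morphisms $\{f_i\colon A_i \to B_i\}_{i \in I}$ is given explicitly by $(\coprod_i f_i)(i,a) = (i,f_i(a))$, making $\coprod_i(-)$ a functor of the required type; and (b) for every collection $\{A_i\}_{i \in I}$ the morphism $\kappa_{\{A_i\}}\colon \Pk(\coprod_i A_i) \to \coprod_i \Pk(A_i)$ has been constructed and verified to be a homomorphism of $\sg$-structures. Since the construction of $\kappa$ depended only on the generic collection $\{A_i\}_{i\in I}$, it yields a collection of morphisms of exactly the form required by \eqref{eq:kappa-collection} (in its infinitary incarnation).

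Once these ingredients are in place, the conclusion is an immediate application of Theorem~\ref{t:fvm-pe}. Concretely, given morphisms $f_i\colon \Pk(A_i) \to B_i$ witnessing $A_i \parrow{\Pk} B_i$ for each $i \in I$, the composite
\[
  \Pk\bigl(\coprod\nolimits_i A_i\bigr)
  \xrightarrow{\kappa_{\{A_i\}}}
  \coprod\nolimits_i \Pk(A_i)
  \xrightarrow{\coprod_i f_i}
  \coprod\nolimits_i B_i
\]
is a morphism of $\sg$-structures and witnesses $\coprod_i A_i \parrow{\Pk} \coprod_i B_i$, following exactly the composition pattern used in the proof of Theorem~\ref{t:fvm-pe}.

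I expect no serious obstacle here: the combinatorial work has already been discharged in the verification that $\kappa$ preserves relations, and the rest is categorical bookkeeping. The only minor point to be careful about is making explicit appeal to Remark~\ref{r:inf-ops}, since Theorem~\ref{t:fvm-pe} as stated is formulated for finite-arity operations $H$, whereas here the index set $I$ may be infinite.
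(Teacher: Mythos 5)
Your proposal is correct and follows essentially the same route as the paper: the paper likewise constructs $\kappa_{\{A_i\}}\colon \Pk(\coprod_i A_i) \to \coprod_i \Pk(A_i)$, verifies it is a homomorphism, and then concludes by a direct (infinitary) application of Theorem~\ref{t:fvm-pe}, whose proof produces exactly the composite $(\coprod_i f_i) \circ \kappa_{\{A_i\}}$ you describe. Your remark about appealing to Remark~\ref{r:inf-ops} for the infinite index set is the right point of care and matches the paper's implicit use of it.
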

Next, we observe that $\kappa$ is natural in the choice of $\{A_i\}_i$ and we show the following.
\begin{lemma}
    $\kappa$ is a Kleisli law.
\end{lemma}
\begin{prf}
Axiom \ref{ax:kl-law-counit} states that if the last pebbled element of $w$ in $\Pk(\coprod_i A_i)$ is $a_n$ in the $A_i$ component of $\coprod_i A_i$, i.e. 
\[ \counit(w) = (i,a), \] 
then $a_n$ is the last element of $\nu(w)$ where $(i,\nu(w)) = \kappa(w)$. 
This is clear as by definition 
\[ \nu(w) = [(p_j,a_j) \mid i_j = i\;]. \]
We have that 
\[ \kappa \circ \Pk(\kappa) \circ \delta(w) = \left(i,[(p_j,\nu(w_j))] \mid w_j \in S_i(w)] \right) \] 
where $S_i(w)$ is the set of prefixes of $w$ in $\Pk(\coprod A_i)$ which end in pebbling an element of $A_i$.
Axiom \ref{ax:kl-law-comultiplication} states that taking prefixes of the restriction $\nu(w)$ of $w$ to letters in the component $A_i$ is the same as listing the $A_i$ component restrictions $\nu(w_j)$ of each of the prefixes $w_j$ in $S_i(w)$. 
This is clear by inspection.
\end{prf}

Since $\kappa$ is a Kleisli law, we can apply Theorem~\ref{t:fvm-counting} to obtain the FVM theorem for coproducts and the $k$-variable counting fragment.
\begin{theorem}
    Given the same assumptions as in Theorem~\ref{t:coproduct-arbitrary-fvm-pe},
    \[
        \textstyle
        (\forall i\in I. \quad
        A_i \cequiv{\Pk} B_i) 
        \qtq{implies}
        \coprod_i A_i \cequiv{\Pk} \coprod_i B_i.
    \]
\end{theorem}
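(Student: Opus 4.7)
The plan is to apply Theorem~\ref{t:fvm-counting} directly, using the Kleisli law $\kappa\colon \Pk(\coprod_i A_i) \to \coprod_i \Pk(A_i)$ that was just constructed and verified in the preceding lemma. Since $\coprod_i$ is an $|I|$-ary operation with $I$ potentially infinite, I invoke the infinitary version of the theorem as sanctioned by Remark~\ref{r:inf-ops}.

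Concretely, the data I feed into Theorem~\ref{t:fvm-counting} is: the indexed family of source comonads is $\{\Pk\}_{i \in I}$ on $\Rel$, the target comonad is $\Pk$ on $\Rel$, and the functor is $H = \coprod_i\colon \prod_{i\in I} \Rel \to \Rel$. The required Kleisli law of type $\Pk \circ \coprod_i \Rightarrow \coprod_i \circ \prod_i \Pk$ is exactly the natural transformation $\kappa$ whose components were given by restricting a word to the component selected by its last letter. Naturality in the family $\{A_i\}$ is a direct consequence of the fact that $\kappa$ preserves the component index and acts by list comprehension on the underlying letters; compliance with \ref{ax:kl-law-counit} and \ref{ax:kl-law-comultiplication} was verified in the preceding lemma by inspection. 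Applying Theorem~\ref{t:fvm-counting} then yields the desired implication.

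The only point that deserves attention, rather than an obstacle, is checking that the infinitary extension permitted by Remark~\ref{r:inf-ops} covers this situation. The verification of $\kappa$ and the categorical reasoning in Theorem~\ref{t:fvm-counting} are entirely pointwise in the index set $I$: the counit and comultiplication of the product comonad $\prod_i \Pk$ are defined componentwise, and the diagonal chase in the proof of Theorem~\ref{t:fvm-counting} uses only functoriality of $H$ and the Kleisli law axioms. Hence nothing depends on $I$ being finite, and the theorem follows as a pure corollary of the two preceding results.
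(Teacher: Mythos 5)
Your proof is correct and matches the paper's own argument, which likewise obtains this theorem as an immediate corollary of Theorem~\ref{t:fvm-counting} applied to the Kleisli law $\kappa$ established in the preceding lemma (with the infinitary case covered by Remark~\ref{r:inf-ops}). Your additional remarks on instantiating the data and on the componentwise nature of the product comonad are consistent with, and slightly more explicit than, the paper's one-line justification.
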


\begin{lemma}
The operation $\coprod_i$ preserves embeddings.
\end{lemma}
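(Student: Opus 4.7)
The plan is to verify directly from the definitions that the componentwise action of $\coprod_i$ on a family of embeddings yields an embedding. Given embeddings $f_i\colon A_i \embed B_i$ for $i \in I$, the induced morphism $\coprod_i f_i\colon \coprod_i A_i \to \coprod_i B_i$ acts by $(i,a) \mapsto (i, f_i(a))$, and I need to check two things: that it is injective, and that it reflects relations.

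For injectivity, suppose $(\coprod_i f_i)(i,a) = (\coprod_i f_i)(j,a')$, i.e.\ $(i, f_i(a)) = (j, f_j(a'))$. Then $i = j$, so the second coordinate reduces to $f_i(a) = f_i(a')$, and injectivity of $f_i$ gives $a = a'$. This is immediate from the componentwise definition.

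For relation reflection, fix an $r$-ary $R \in \sigma$ and elements $(j_1,a_1), \dots, (j_r,a_r)$ of $\coprod_i A_i$, and assume
\[
R^{\coprod_i B_i}\bigl((j_1,f_{j_1}(a_1)), \dots, (j_r,f_{j_r}(a_r))\bigr).
\]
By the definition of the coproduct interpretation given at the start of Appendix~\ref{s:proofs-fvm-thms-coproducts}, this forces a common index $i$ with $j_1 = \dots = j_r = i$ and $R^{B_i}(f_i(a_1), \dots, f_i(a_r))$. Since $f_i$ is an embedding, it reflects $R$, so $R^{A_i}(a_1, \dots, a_r)$, whence $R^{\coprod_i A_i}((i,a_1), \dots, (i,a_r))$, which is the required conclusion.

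There is no real obstacle here: the lemma amounts to unfolding the definitions of coproduct of $\sigma$-structures and of embedding. The only point requiring a moment's thought is that the hypothesis on the image \emph{forces} all tuple entries to lie in the same component of $\coprod_i B_i$, and hence in the same component of $\coprod_i A_i$; this is exactly what lets us transport the reflection of $R$ from the individual $f_i$ back to $\coprod_i f_i$.
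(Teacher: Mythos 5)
Your proof is correct and follows essentially the same route as the paper's: direct verification of injectivity from the componentwise definition, and relation reflection by observing that a related tuple in $\coprod_i B_i$ must lie in a single component, then invoking reflection for the corresponding $f_i$. The only cosmetic difference is that the paper presents the reflection step as a chain of equivalences while you argue just the one direction actually needed.
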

\begin{proof}
Consider the collection of embeddings $\{f_i\colon A_i \embed B_i\}$ indexed by $I$.

To see that $\coprod_i f_i$ is injective, suppose 
\[ (\coprod_i f_i) (l,a) = (\coprod_i f_i)(j,a') . \] 
Since $(\coprod_i f_i)(l,a)$ is equal to $(l,f_l(a))$, we have that $l = j$ and $(l,f_l(a)) = (l,f_l(a'))$. By injectivity of $f_l$, $a = a'$, so $(l,a) = (j,a')$ and $\coprod_i f_i$ is injective.

To see that $\coprod_i f_i$ reflects relations, suppose that 
$R^{\coprod_i A_i}((i_1,a_1),\dots,(i_r,a_r))$, then: 
\begin{align*}
  &R^{\coprod_i A_i}((i_1,a_1),\dots,(i_r,a_r)) \\
  &\Leftrightarrow R^{A_j}(a_1,\dots,a_r) \\
  &\phantom{\Leftrightarrow}\ \text{ for some $j \in I$ with $j = i_1 = \dots = i_r$}\\
  &\Leftrightarrow R^{B_j}\left(f_j(a_1),\dots,f_j(a_r)\right) \text{ $f_j$ is an embedding}\\
  &\Leftrightarrow R^{\coprod_i B_i}\left((j,f_j(a_1)),\dots,(j,f_j(a_r))\right)  \\
  &\Leftrightarrow R^{\coprod_i B_i}\left((i_1,f_j(a_1)),\dots,(i_r,f_j(a_r))\right) \\
  &\Leftrightarrow R^{\coprod_i B_i}\left(\coprod_i f_i(i_1,a_1),\dots,\coprod_i f_i(i_r,a_r)\right) 
\end{align*}
\end{proof}

\begin{lemma}
For any path $(\Ps,\Pc)$ in $\EM \Pk$ and coalgebras $(\As_i,\Ac_i)$ in $\EM{\Pk}$, for $i \in I$, every morphism $f\colon P \to \coprod_{i} A_i$ which makes the following diagram commute
        \[
        \begin{tikzcd}[->, ampersand replacement=\&]
            \Ps \arrow[rr, "f"] \dar[swap]{\Pc} \& \& \coprod_{i} {\As_i} \dar{\coprod_i \Ac_i} \\
            \Pk(\Ps) \rar{\Pk(f)} \& \Pk(\coprod_i \As_i) \rar{\kappa} \& \coprod_i \Pk(\As_i)
        \end{tikzcd}
        \]
        has a minimal decomposition through $\coprod_i e_i \colon \coprod_i \Ps_i \to \coprod_i \As_i$, for some path embeddings $e_i\colon (\Ps_i,\Pc_i) \embed (\As_i,\Ac_i)$, with $i \in I$.
\end{lemma}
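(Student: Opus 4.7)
The plan is to run the same construction as in Example~\ref{e:coproducts-full}, now with the binary index set $\{1,2\}$ replaced by the arbitrary index set $I$. The overall shape is: read off from the diagram which $i \in I$ each element of the path lands in, carve $P$ up accordingly into path coalgebras $(P_i,\pi_i)$, use commutativity to show that each $P_i$ does embed into $A_i$ in $\EM{\Pk}$, and finally argue that the resulting factorisation is minimal.

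First I would invoke Remark~\ref{r:coalg-order} to write $P = \{x_1 \sqsubset_\pi x_2 \sqsubset_\pi \dots \sqsubset_\pi x_n\}$ as a finite linear order, so that $\pi(x_j)$ is the word obtained from the pebbling of the prefix $x_1 \sqsubset_\pi \dots \sqsubset_\pi x_j$. Write $f(x_j) = (i_j, a_j)$ with $a_j \in A_{i_j}$. Commutativity of the square evaluated at $x_n$ gives $\alpha_{i_n}(a_n) = \kappa(\Pk(f)(\pi(x_n)))$, which by the definition of $\kappa$ equals $[(p_j, a_j) : i_j = i_n]$, restricted to the positions $j$ with $i_j = i_n$ (the last such position being $n$ itself). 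More generally, by naturality of $\pi$ (i.e.\ the coalgebra laws) applied at the prefix ending in $x_j$, for any $j$ one obtains $\alpha_{i_j}(a_j) = [(p_k, a_k) : k \le j,\ i_k = i_j]$.

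For each $i \in I$ appearing as some $i_j$, let $j_i$ be the largest index with $i_{j_i} = i$, and let $P_i := \{x_j : j \le j_i,\ i_j = i\}$, viewed as an induced substructure of $P$. The equation above shows that the coalgebra map $\pi$ restricts to a coalgebra map $\pi_i \colon P_i \to \Pk(P_i)$ making $(P_i,\pi_i)$ into a path coalgebra (a finite linear order under $\sqsubseteq_{\pi_i}$), and that $e_i \colon P_i \embed A_i$ defined by $x_j \mapsto a_j$ is a coalgebra morphism because the word $\alpha_i(e_i(x_j))$ agrees letter-for-letter with $\Pk(e_i)(\pi_i(x_j))$. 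Since the restriction of an embedding of the underlying $\sg$-structures to an induced substructure is still an embedding, each $e_i$ is a path embedding. For indices $i \in I$ not in the image of $f$ we take $P_i$ to be empty. Then define $e_0 \colon P \to \coprod_i P_i$ by $x_j \mapsto (i_j, x_j)$; this is a coalgebra morphism by construction, and by design $(\coprod_i e_i)\circ e_0 = f$.

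The last step, which I expect to be the main obstacle, is verifying minimality. Suppose $f$ factors as $\coprod_i g_i \circ g_0$ through path embeddings $g_i \colon (Q_i,\rho_i) \embed (A_i, \alpha_i)$. For each $j$, we have $g_0(x_j) = (i_j, y_j)$ for some $y_j \in Q_{i_j}$ with $g_{i_j}(y_j) = a_j$. Because $g_{i_j}$ is injective, the correspondence $x_j \mapsto y_j$ is forced, and the analogous commutativity argument, now applied to $g_0$ at the prefix ending in $x_{j_i}$, shows that $\{y_j : j \le j_i,\ i_j = i\}$ forms a linearly ordered prefix of the path $\alpha_i(a_{j_i})$ inside $Q_i$. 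Define $h_i \colon P_i \to Q_i$ by $x_j \mapsto y_j$; it is a path embedding since the linear-order structure on $P_i$ is pulled back from this prefix, and $g_i \circ h_i = e_i$ by construction. Uniqueness of the $h_i$ follows because each $g_i$ is a monomorphism (Lemma~\ref{l:fs-basics}.1). This gives the required minimal decomposition, and since the construction made no assumption on $I$, it works uniformly for arbitrary index sets, completing the verification of \ref{ax:s2p} and hence, by Theorem~\ref{t:fvm-standard}, the FVM theorem for arbitrary coproducts and $\lequiv{\Pk}$.
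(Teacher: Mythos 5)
Your overall strategy is the same as the paper's (which carries out exactly this construction, in binary form, in Example~\ref{e:coproducts-full} and in general in the appendix): read off from the commutative square that $\alpha_{i_j}(a_j)$ is the restriction of the pebbled word to the positions landing in component $i_j$, carve out the components $P_i$, and get minimality from the factorisation through images. The word-level analysis, the definition of $e_0$, and your minimality argument (which is a correct fleshing-out of what the paper dismisses as ``immediate'') are all fine.

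There is, however, one genuine gap: you define $P_i$ as an \emph{induced substructure of $P$} and then claim that $e_i\colon x_j \mapsto a_j$ is an embedding because ``the restriction of an embedding to an induced substructure is still an embedding.'' But $f$ is an arbitrary morphism making the diagram commute -- it is a homomorphism, not an embedding, and nothing in the hypotheses forces it to reflect relations. (Injectivity of $e_i$ does follow, as you can extract from the word lengths in $\alpha_i(a_j)$, but relation reflection does not.) Concretely, take $P$ a two-element path with $R^P = \emptyset$ and $f$ landing in a single component $A_1$ with $R^{A_1}(a_1,a_2)$; then your $e_1$ preserves but does not reflect $R$, so it is not a path embedding and your decomposition fails to witness \ref{ax:s2p}. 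The fix is exactly what the paper does: take $P_i$ to be the \emph{induced substructure of $A_i$} on the image $\{a_j \mid i_j = i\}$ (equivalently, equip your $P_i$ with the relational structure pulled back from $A_i$ rather than the one inherited from $P$), so that $e_i$ is the inclusion of an induced substructure and hence automatically an embedding; $e_0$ then sends $x_j$ to $(i_j, a_j)$, and it remains a homomorphism because $f$ is one. With that correction the rest of your argument, including minimality via injectivity of the $g_i$, goes through unchanged.
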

\begin{prf}
We need to define a decomposition of $f$ consisting of 
\[ \Ps \xrightarrow{e_0} \coprod_i \Ps_i \xrightarrow{\coprod_i e_i} \coprod_i \As_i \]
where $e_i\colon(\Ps_i,\Pc_i) \embed (\As_i,\Ac_i)$ are embeddings of coalgebras.  
Unpacking the diagram that $f$ satisfies, the image of $\Ps$ under $\Pk(f) \circ \pi$ contains the words $w$ such that for every $i \in I$, the restriction $\nu(w)$ by $\kappa$ to the $\As_i$ component yields a word in the image of $\Ac_i$.
This allows us to take $\Ps_i$ to be the image $f(x)$ of $x \in P$ which yields a path with respect to $\sqsubseteq_{\Ac_i}$ in $A_i$. 
Since $\Ps_i$ is a substructure of $\As_i$, we obtain embeddings $e_i\colon (\Ps_i,\Pc_i) \embed (\As_i,\Ac_i)$ 
and $e_0\colon \Ps \rightarrow \coprod_i \Ps_i$ sends $x$ to $(i,f(x))$ whenever $f(x)$ is in the $\As_i$ component.
Clearly, by construction $f = (\coprod_i e_i) \circ e_0$.
The minimality of this decomposition is immediate, as any other decomposition of $f = (\coprod_i m_i) \circ m_0$ would have to factor through the image of $f$ in $\As_i$, i.e.\ $\Ps_i$, as a subpath.
\end{prf}
The above lemmas show that $\coprod_i$ and $\kappa$ satisfy the conditions of Theorem~\ref{t:fvm-standard}, so we obtain:
\begin{theorem}
    Given the same assumptions as in Theorem~\ref{t:coproduct-arbitrary-fvm-pe},
    \[
        \textstyle
        (\forall i\in I. \quad
        A_i \sequiv{\Pk} B_i) 
        \qtq{implies}
        \coprod_i A_i \sequiv{\Pk} \coprod_i B_i.
    \]   
\end{theorem}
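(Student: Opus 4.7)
The proof proposal is to assemble the three preceding lemmas into an application of Theorem~\ref{t:fvm-standard}. That theorem takes as hypotheses, for an $n$-ary (or here, $I$-ary, by Remark~\ref{r:inf-ops}) operation $H$: (i) $H$ preserves embeddings, (ii) there is a Kleisli law $\kappa : \D \circ H \Rightarrow H \circ \prod_i \C_i$, and (iii) this Kleisli law satisfies \ref{ax:s2p}. We have already established each of these ingredients in the three lemmas immediately preceding the theorem, specialized to $H = \coprod_i$ and $\C_i = \D = \Pk$.

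First, I would recall that the preceding lemma exhibits $\kappa : \Pk(\coprod_i A_i) \to \coprod_i \Pk(A_i)$ and shows it is a Kleisli law, giving ingredient (ii). Next, the lemma showing that $\coprod_i f_i$ is injective and reflects relations whenever each $f_i$ is supplies ingredient (i). Finally, the lemma verifying \ref{ax:s2p} for $\kappa$ -- where from a commuting square one decomposes $f\colon P \to \coprod_i A_i$ minimally through $\coprod_i e_i$ by taking $P_i$ to be the substructure of $A_i$ on the image of those $x \in P$ mapped into the $A_i$-component -- gives ingredient (iii).

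With these three ingredients in hand, Theorem~\ref{t:fvm-standard} applies verbatim (in the infinitary form justified by Remark~\ref{r:inf-ops}), yielding the desired implication
\[
(\forall i \in I.\ A_i \sequiv{\Pk} B_i) \qtq{implies} \coprod_i A_i \sequiv{\Pk} \coprod_i B_i.
\]
No additional work is required beyond citing these results, since the preservation of open pathwise-embeddings by the lifted functor $\widehat{\coprod_i}$ is exactly what Theorem~\ref{t:fvm-standard} delivers from \ref{ax:s1}, \ref{ax:s2p}, and the Kleisli law.

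The only step that demanded genuine content was the verification of \ref{ax:s2p}, and even there the combinatorics is mild: the key observation is that in a path $(\Ps,\Pc)$ the forest order $\sqsubseteq_\Pc$ is a finite linear order, so its image under $f$ must, in each component $A_i$ it visits, induce a linearly ordered sub-path $\Ps_i$ of $(A_i,\alpha_i)$, making both the path-embeddings $e_i$ and the ``dispatch'' morphism $e_0\colon P \to \coprod_i P_i$ canonical, and minimality automatic. So the present theorem is a clean corollary of the abstract machinery and its three prior coproduct-specific lemmas.
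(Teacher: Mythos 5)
Your proposal is correct and follows exactly the paper's own route: the paper likewise concludes this theorem by observing that the three preceding lemmas (the Kleisli law $\kappa$, preservation of embeddings by $\coprod_i$, and the verification of \ref{ax:s2p}) supply precisely the hypotheses of Theorem~\ref{t:fvm-standard}. Nothing is missing.
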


A similar $\kappa\colon\Ek(\coprod_i A_i) \rightarrow \coprod_i \Ek(A_i)$ exists satisfying the necessary requirements and demonstrating that $\parrow{\Ek}$, $\pequiv{\Ek}$, $\cequiv{\Ek}$, and $\sequiv{\Ek}$ are preserved by taking arbitrary coproducts of $\sg$-structures.

\section{Proofs omitted from Section~\ref{s:fvm-positive}}
\label{s:proofs-fvm-positive}

\FVMpe*
\begin{prf}
    Let $f_i\colon \C_i(A_i) \to B_i$ be the morphism witnessing the relation $A_i \parrow{\C_i} B_i$, for $i\in\{1,\dots,n\}$. Since $\op$ is a functor, the morphism $\ol f$ defined as the composition
    \begin{equation*}
        \D(H(A_1,\dots,A_n)) \xrightarrow{\op(f_1,\dots,f_n) \circ \kappa_{A_1,\dots,A_n}} \op(B_1,\dots,B_n)
    \end{equation*}
    witnesses the relation $H(A_1,\dots,A_n) \parrow{\D} H(B_1,\dots,B_n)$.

    The corresponding statement for $\pequiv{\C}$ is obtained by applying the same argument also to the collection of morphisms $g_i\colon \C_i(B_i) \to A_i$ that witness $B_i \parrow{\C_i} A_i$, for $i\in \{1,\dots,n\}$.
\end{prf}

\section{Proofs omitted from Section~\ref{s:fvm-counting}}
\label{s:proofs-fvm-counting}

We first prove a generalisation of Lemma~\ref{l:klei-ax-clone-form}. Assume that $\op\colon \CC_1 \times \dots \times \CC_n \to \CD$ is an $n$-ary functor and $\kappa$ is a collection of morphisms
\begin{align*}
    \{& \,\D(H(A_1,\dots,A_n)) \xrightarrow{\kappa_{A_1,\dots,A_n}}  H(\C_1(A_1),\dots,\C_n(A_n)) \mid \\ 
      & A_1\in \CC_1,\, \dots, \, A_n\in \CC_n\}.
\end{align*}

\begin{lemma}
    \label{l:kl-law-comult-coext-form}
    Given $\op$ and $\kappa$ satisfying \ref{ax:kl-law-counit}, the following are equivalent:
    \begin{itemize}
        \item For every tuple of morphisms \[ f_1\colon \C_1(A_1) \to B_1, \ \dots, \ f_n\colon \C_n(A_n) \to B_n \] we have
            \begin{equation}
                \label{eq:kl-law-coext-nary}
                \begin{aligned}
                &H(f_1^*,\dots,f_n^*) \circ \kappa_{A_1,\dots,A_n} =\\
                &\kappa_{B_1,\dots,B_n} \circ (H(f_1,\dots,f_n)\circ \kappa_{A_1,\dots,A_n})^*
                \end{aligned}
            \end{equation}
        \item $\kappa$ is natural in $A_1,\dots,A_n$ and satisfies~\ref{ax:kl-law-comultiplication}.
    \end{itemize}
\end{lemma}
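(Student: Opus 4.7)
The plan is to prove equivalence by converting between the Kleisli-form coextension identity \eqref{eq:kl-law-coext-nary} and the standard functorial formulation (naturality plus \ref{ax:kl-law-comultiplication}). The main tool is the pair of elementary identities
\[ f^* = \C(f) \circ \delta \qtq{and} (g \circ \counit)^* = \C(g), \]
which follow directly from the comonad-in-Kleisli-form axioms \eqref{eq:comonad-axioms} and the definition $\C(h) = (h \circ \counit)^*$. I will also use that \ref{ax:kl-law-counit} reads componentwise as $H(\counit_1,\dots,\counit_n) \circ \kappa_{A_1,\dots,A_n} = \counit_{H(A_1,\dots,A_n)}$.

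For the direction $(2) \Rightarrow (1)$, I would unfold the right-hand side of \eqref{eq:kl-law-coext-nary} using $h^* = \D(h) \circ \delta$, obtaining
\[ \kappa_{B_1,\dots,B_n} \circ \D H(f_1,\dots,f_n) \circ \D \kappa_{A_1,\dots,A_n} \circ \delta_{H(A_1,\dots,A_n)}. \]
Naturality of $\kappa$ in the tuple $(A_i)$, applied to the morphisms $f_i \colon \C_i(A_i) \to B_i$, lets me commute the leftmost two factors to get $H(\C_1 f_1,\dots,\C_n f_n) \circ \kappa_{\C_1 A_1,\dots,\C_n A_n} \circ \D\kappa \circ \delta$. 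Axiom \ref{ax:kl-law-comultiplication} then rewrites the remaining tail as $H(\delta_{A_1},\dots,\delta_{A_n}) \circ \kappa_{A_1,\dots,A_n}$. Functoriality of $H$ collapses $H(\C_i f_i) \circ H(\delta_{A_i}) = H(\C_i(f_i) \circ \delta_{A_i}) = H(f_i^*)$, matching the left-hand side of \eqref{eq:kl-law-coext-nary}.

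For the converse $(1) \Rightarrow (2)$, I would specialise \eqref{eq:kl-law-coext-nary} twice. First, to derive naturality, pick an arbitrary tuple $g_i \colon A_i \to B_i$ in the base categories and set $f_i := g_i \circ \counit_{A_i}$, so that $f_i^* = \C_i(g_i)$. Rewriting \eqref{eq:kl-law-coext-nary} with these choices, the right-hand side contains $(H(g_1,\dots,g_n) \circ H(\counit_1,\dots,\counit_n) \circ \kappa)^*$; assumption \ref{ax:kl-law-counit} turns the composite $H(\counit_1,\dots,\counit_n) \circ \kappa$ into $\counit_{H(A_1,\dots,A_n)}$, and then $(H(g_1,\dots,g_n) \circ \counit)^* = \D H(g_1,\dots,g_n)$. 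The resulting equation is exactly the naturality square for $\kappa$ at $(g_1,\dots,g_n)$.

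Second, to recover \ref{ax:kl-law-comultiplication}, substitute $f_i := \id_{\C_i(A_i)}$ into \eqref{eq:kl-law-coext-nary}. The left-hand side becomes $H(\delta_{A_1},\dots,\delta_{A_n}) \circ \kappa_{A_1,\dots,A_n}$, since $\id^* = \delta$. On the right, $H(\id,\dots,\id) \circ \kappa = \kappa$, and expanding $\kappa^* = \D\kappa \circ \delta$ produces $\kappa_{\C_1 A_1,\dots,\C_n A_n} \circ \D\kappa \circ \delta_{H(A_1,\dots,A_n)}$, which is precisely the boundary of the hexagonal diagram \ref{ax:kl-law-comultiplication}. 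The only subtlety to watch is the bookkeeping of the subscripts $A_i$ versus $\C_i(A_i)$ on the instances of $\kappa$; everything else is a mechanical consequence of the comonad axioms and functoriality of $H$.
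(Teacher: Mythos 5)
Your proposal is correct and follows essentially the same route as the paper's proof: the same specialisations $f_i = g_i \circ \counit$ and $f_i = \id_{\C_i(A_i)}$ for the forward direction, and the same chain of rewrites via naturality and \ref{ax:kl-law-comultiplication} for the converse (the paper merely reads that last computation from $H(\vec{f_i^*})\circ\kappa$ towards $\kappa\circ(H(\vec{f_i})\circ\kappa)^*$ rather than the reverse). No gaps.
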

\begin{prf}
    To start with, let us recall some standard facts about coextension that we use without mentioning, and which are true for any comonad $(\C, \counit, (-)^*)$:
    \begin{align*}
        \C(f) &= (\C(A) \xrightarrow{\counit} A \xrightarrow{f} B)^*
        &&(\forall f\colon A \to B) \\
        \delta_A &= (\C(A) \xrightarrow{\id_{\C(A)}} \C(A))^*
        &&(\forall A \in \CC) \\
        f^* &= \C(A) \xrightarrow{\delta_A} \C^2(A) \xrightarrow{\C(f)} \C(B)
        &&(\forall  f\colon \C(A) \to B)
    \end{align*}

    First, we show that \eqref{eq:kl-law-coext-nary} implies that $\kappa$ is natural, that is, we check the following
    \begin{align*}
        &H(\C_1(g_1),\dots,\C_n(g_n)) \circ \kappa_{A_1,\dots,A_n} =\\ 
        &\kappa_{B_1,\dots,B_n} \circ \D(H(g_1,\dots,g_n))
    \end{align*}
    for any choice of $g_1\colon A_1 \to B_1, \, \dots, \, g_n\colon A_n \to B_n$. For $i\in \{1,\dots,n\}$ define
    \[
        f_i = \C_i(A_i) \xrightarrow{\counit} A_i \xrightarrow{g_i} B_i,
    \]
    then it follows that
    \begin{align*}
        H(\C_1(g_1),\dots,\C_n(g_n))
        &= H(f_1^*,\dots,f_n^*)
    \end{align*}
    and similarly, with the vector notation introduced in Section~\ref{s:kleisli-laws},
    \begin{align*}
        &\D(H(g_1,\dots,g_n)) \\
        &= \left(\D(H(\vec{A_i})) \xrightarrow{\counit} H(\vec{A_i}) \xrightarrow{H(\vec{g_i})} H(\vec{B_i})\right)^*
        \\
        &= \left(\D(H(\vec{A_i}) \xrightarrow{\kappa} H(\vec{\C_i(A_i)}) \xrightarrow{H(\vec{\counit_{A_i}})} H(\vec{A_i}) \xrightarrow{H(\vec{g_i})} H(\vec{B_i}) \right)^*
        \\
        &= \left(\D(H(\vec{A_i}) \xrightarrow{\kappa} H(\vec{\C_i(A_i)}) \xrightarrow{H(\vec{f_i})} H(\vec{B_i}) \right)^*.
    \end{align*}
    Consequently, \eqref{eq:kl-law-coext-nary} directly entails naturality of $\kappa$. 
    Similarly, to check \ref{ax:kl-law-comultiplication}, for $i\in \{1,\dots,n\}$ we set
    \[
        f_i = \C_i(A_i) \xrightarrow{\id} \C_i(A_i) .
    \]
    Then, we have the following:
    \begin{align*}
        &\left(H(\vec{f_i}) \circ \kappa_{\vec{A_i}}\right)^*\\
        &=
        \left(\D(H(\vec{A_i}))
            \xrightarrow{\kappa} H(\vec{\C_i(A_i)})
            \xrightarrow{H(\id,\dots,\id)} H(\vec{\C_i(A_i)})
        \right)^*
        \\
        &=
        \left(\D(H(\vec{A_i})) \xrightarrow{\kappa} H(\vec{\C_i(A_i)}))\right)^*
        \\
        &=
        \D(H(\vec{A_i}))
            \xrightarrow{\delta} \D^2(H(\vec{A_i}))
            \xrightarrow{\D(\kappa)} \D(H(\vec{\C_i(A_i)}))
    \end{align*}
    Consequently, \ref{ax:kl-law-comultiplication} follows from \eqref{eq:kl-law-coext-nary} since
    \[
        \kappa \circ \D(\kappa) \circ \delta = \kappa \circ \bigl(H(\vec{f_i}) \circ \kappa\bigr)^* = H\bigl(\vec{f_i^*}\bigr) \circ \kappa = H\bigl(\vec{\delta_{A_i}}\bigr) \circ \kappa.
    \]

    Lastly, we show that from naturality of $\kappa$ and \ref{ax:kl-law-comultiplication} we obtain that \eqref{eq:kl-law-coext-nary} holds for any choice of morphisms 
    \[ f_1\colon \C_1(A_1) \to B_1,\, \dots, \, f_n\colon \C_n(A_n) \to B_n . \] 
    The desired equality follows by a direct computation:
    \begin{align*}
        & \ H\bigl(\vec{f^*_i}\bigr) \circ \kappa \\
        = & \ H\bigl(\vec{\C_i(f_i)}\bigr) \circ H(\delta,\dots,\delta) \circ \kappa & \text{definition and functoriality} \\
        = & \ H\bigl(\vec{\C_i(f_i)}\bigr) \circ \kappa \circ \D(\kappa) \circ \delta & \ref{ax:kl-law-comultiplication} \\
        = & \ \kappa \circ \D\bigl(H(\vec{f_i})\bigr) \circ \D(\kappa) \circ \delta & \text{naturality of } \kappa \\
        = & \ \kappa \circ \bigl(H(\vec{f_i})\circ \kappa\bigr)^* & \text{definition}
    \end{align*}
\end{prf}

\FVMcounting*
\begin{prf}
    Let
    \begin{align*}
        f_1\colon \C_1(A_1) &\to B_1 \\
        \vdots                       \\
        f_n\colon \C_n(A_n) &\to B_n
    \end{align*}
    and
    \begin{align*}
        g_1\colon \C_1(B_1) &\to A_1 \\
        \vdots                       \\
        g_n\colon \C_n(B_n) &\to A_n
    \end{align*}
    be the pairs of morphisms witnessing 
    \[ A_1 \cequiv{\C_1} B_1,\, \dots, \, A_n \cequiv{\C_n} B_n, \]
    respectively. This means that, for every $i\in \{1,\dots,n\}$, the morphisms $f_i,g_i$ are inverse to each other in $\Kl{\C_i}$, i.e.\ that $g_i \kirc f_i = \counit_{A_i}$ and $f_i \kirc g_i = \counit_{B_i}$.

    We wish to show that
    \[
    \D(H(A_1,\dots,A_n))
            \xrightarrow{H(f_1,\dots,f_n) \circ \kappa_{A_1,\dots,A_n}} H(B_1,\dots,B_n)
    \]
    and
    \[
        \D(H(B_1,\dots,B_n))
            \xrightarrow{H(g_1,\dots,g_n) \circ \kappa_{B_1,\dots,B_n}} H(A_1,\dots,A_n)
    \]
    are inverse to each other in $\Kl \D$.
    We follow exactly the same lines as for the unary case to show this.
    By Lemma~\ref{l:kl-law-comult-coext-form} we can assume that \eqref{eq:kl-law-coext-nary} holds for $\kappa$. Therefore, we have:
    \begin{align*}
        &\bigl(H(\vec{g_i}) \circ \kappa_{\vec{B_i}}\bigr) \kirc \bigl(H(\vec{f_i})\circ \kappa_{\vec{A_i}}\bigr)\\
        &= H(\vec{g_i})\circ \kappa_{\vec{B_i}} \circ \bigl(H(\vec{f_i})\circ \kappa_{\vec{A_i}}\bigr)^* \tag{definition of $\kirc$}\\
        &= H(\vec{g_i})\circ H(\vec{f^*_i}) \circ \kappa_{\vec{A_i}}  \tag{by \eqref{eq:kl-law-coext-nary}}\\
        &= H\bigl(\vec{g_i\circ f^*_i}\bigr) \circ \kappa_{\vec{A_i}} \tag{functoriality of $H$}\\
        &= H(\vec{\counit_{A_i}}) \circ \kappa_{\vec{A_i}} \tag{$g_i \kirc f_i = \counit_{A_i}$}\\
        &= \counit_{H(\vec{A_i})} \colon \D\bigl(H(\vec{A_i})\bigr) \to H(\vec{A_i}) \tag{by \ref{ax:kl-law-counit}}
    \end{align*}
    The same argument, verbatim, yields that $(H(\vec{f_i})\circ \kappa_{\vec{A_i}}) \kirc (H(\vec{g_i})\circ \kappa_{\vec{B_i}}) = \counit_{H(\vec{B_i})}$ and, therefore, we obtain that $H(\vec{A_i}) \cequiv{\D} H(\vec{B_i})$.
\end{prf}

Note that a proof of the previous fact can be carried without invoking Lemma~\ref{l:kl-law-comult-coext-form}, straight from axioms \ref{ax:kl-law-counit} and \ref{ax:kl-law-comultiplication}.

\section{Proofs omitted from Section~\ref{s:fvm-standard}}
\label{s:proofs-fvm-standard}

\subsection{Proofs omitted from Section~\ref{s:lifting-operations}}
\label{s:proofs-of-lifting-operations}
Recall that a morphism $e\colon E \to A$ is an \df{equaliser} of a pair of morphisms $f\colon A\to B$ and $g\colon A\to B$ if $f \circ e = g \circ e$ and, for any other $h\colon C \to A$ such that $f \circ h = g \circ h$, there is a unique $\ol h\colon C \to E$ such that $e \circ \ol h = h$. %
\[
    \begin{tikzcd}
        E \ar{r}{e} & A \ar[yshift=0.3em]{r}{f}\ar[yshift=-0.3em,swap]{r}{g} & B \\
        C \ar[swap]{ru}{h} \ar[dashed]{u}{\ol h}
    \end{tikzcd}
\]

Equalisers do not need to exist in a given category but they do in our main examples, that is, in categories $\Rel$ and $\Rels$.
Namely, the equaliser of $\sg$-structure homomorphisms $f,g\colon A \to B$ is the inclusion of the induced substructure on $\{ x \in A \mid f(x) = g(x)\}$ into~$A$.
Moreover, the following standard fact from category theory ensures that we also often have equalisers in $\EM \C$.

\begin{lemma}
    \label{l:emb-preserving-creates-equalisers}
    If a comonad $\C$ on $\Rel$ or on $\Rels$ preserves embeddings then $\EM{\C}$ has equalisers.
\end{lemma}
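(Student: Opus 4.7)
The plan is to construct equalisers in $\EM{\C}$ explicitly, as the largest sub-coalgebra on which the two morphisms agree. Given parallel coalgebra morphisms $f, g : (A, \alpha) \to (B, \beta)$, I first form the equaliser $e : E \hookrightarrow A$ of $f$ and $g$ in the base category; concretely, $E$ is the substructure of $A$ on the subset $\{a \in A \mid f(a) = g(a)\}$ with relations (and, in $\Rels$, distinguished element) induced from $A$. The point is that $E$ need not itself carry a restricted coalgebra structure: although $\alpha \circ e$ equalises $\C f$ and $\C g$ in the base, it need not factor through $\C(e) : \C(E) \hookrightarrow \C(A)$. Accordingly, I pass to the largest sub-coalgebra $E^{*} \hookrightarrow A$ whose carrier lies inside $E$.

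The key step is to show such a maximal sub-coalgebra exists. Say that a substructure $S \hookrightarrow A$ is a \emph{sub-coalgebra} if $\alpha$ factors as $S \to \C(S) \hookrightarrow \C(A)$. I will prove that sub-coalgebras are closed under arbitrary unions: given a family $\{S_i\}$ with union $S$, each inclusion $S_i \hookrightarrow S$ is an embedding, so by hypothesis $\C(S_i) \hookrightarrow \C(S)$ is an embedding, identifying $\C(S_i)$ with an induced substructure of $\C(S)$. Hence for any $s \in S_i$, $\alpha(s) \in \C(S_i) \subseteq \C(S)$, giving the required restricted coalgebra map; that it is a homomorphism follows since it agrees with $\alpha$ along the embedding into $\C(A)$. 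Taking the union of \emph{all} sub-coalgebras of $(A, \alpha)$ contained in $E$ then yields $E^{*}$.

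It remains to verify the universal property. By construction, the inclusion $E^{*} \hookrightarrow A$ is a coalgebra morphism equalising $f$ and $g$. Given any coalgebra morphism $h : (X, \xi) \to (A, \alpha)$ with $f h = g h$, its set-theoretic image $h(X)$ is contained in $E$ as a substructure of $A$. I will show that $h(X)$ is itself a sub-coalgebra: factoring $h = i \circ q$ through its image, $q : X \twoheadrightarrow h(X)$ surjective and $i : h(X) \hookrightarrow A$ the inclusion, preservation of embeddings gives that $\C(i)$ identifies $\C(h(X))$ with an induced substructure of $\C(A)$. The coalgebra identity $\alpha h = \C(h) \xi = \C(i) \C(q) \xi$ then shows every $\alpha(h(x))$ lies inside $\C(h(X))$, so $\alpha$ restricts. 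Consequently $h(X) \subseteq E^{*}$, and $h$ factors uniquely through $E^{*} \hookrightarrow A$; the factorisation is automatically a coalgebra morphism because $\C(i)$ is monic, which lets one cancel it from the coalgebra square.

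The main obstacle is that the construction of $E^{*}$ and the argument that images of coalgebra morphisms are sub-coalgebras both crucially rely on the hypothesis: only because $\C$ sends embeddings to embeddings can one faithfully read off elements of $\C(S)$ inside $\C(A)$ as an induced substructure, and hence meaningfully ``restrict'' $\alpha$ to $S$. Everything else is a routine verification, using that embeddings in the base are monomorphisms (Lemma~\ref{l:fs-basics}) to cancel $\C(i)$ and transfer the coalgebra axioms.
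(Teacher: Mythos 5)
Your construction is correct, but it takes a genuinely different route from the paper. The paper does not build the equaliser by hand: it simply invokes Linton's theorem, which guarantees that $\EM{\C}$ has equalisers whenever the base category has coproducts, a proper factorisation system $(\mathcal E,\mathcal M)$, is $\mathcal M$-well-powered, and $\C$ preserves $\mathcal M$-morphisms; the proof then consists of checking these four hypotheses for $\Rel$ and $\Rels$ with $(\mathcal E,\mathcal M) = (\text{surjections},\text{embeddings})$. What you have written is, in effect, the concrete argument that Linton's abstraction packages: the equaliser of $f,g\colon (A,\alpha)\to(B,\beta)$ is the largest sub-coalgebra of $(A,\alpha)$ contained in the base-category equaliser $E$, obtained as the union of all sub-coalgebras inside $E$, with the universal property verified via the image factorisation of a competing cone and cancellation of the monomorphism $\C(i)$. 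Your version is self-contained and makes the equaliser explicit (which is mildly useful, since the lifted operation $\lop$ is defined via exactly these equalisers), and it is leaner in one respect: you never need coproducts, only well-poweredness (used tacitly when you take the union over \emph{all} sub-coalgebras, legitimate here since a relational structure has only a set of substructures) and the image factorisation. The paper's version is shorter and modular, offloading the construction to a citation and making visible exactly which abstract properties of $\Rel$ and $\Rels$ are being used, which matters for the remark that the results extend to other categories satisfying the same hypotheses. The only points you leave implicit --- that the restricted map $\alpha_{E^*}\colon E^*\to\C(E^*)$ satisfies the counit and comultiplication laws, and that the induced factorisation of $h$ is a coalgebra morphism --- do follow by the mono-cancellation you indicate, so there is no gap.
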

\begin{proof}
    Linton's theorem \cite{linton1969coequalizers} says that $\EM \C$ has equalisers if $\CC$ has coproducts, a proper factorisation system $(\mathcal E,\mathcal M)$, is $\mathcal M$-well-powered and $\C$ preserves $\mathcal M$-morphisms. We check that this is true in case when $\CD = \Rel$ and $(\mathcal E,\mathcal M) = (\text{surjective homomorphisms}, \text{embeddings})$:
    \begin{itemize}
        \item $\Rel$ has all coproducts, given by taking disjoint unions of any collection of $\sg$-structures. Coproducts in $\Rels$ are the disjoint unions glued at the designated points.
        \item $\mathcal M$-well-powered requires that no object has a proper class of non-isomorphic substructures but this trivially holds about (pointed) $\sg$-structures as every structure $A$ has a bounded number of substructures, depending on the cardinality of $A$.
        \item Lastly, the fact that (surjective homomorphisms, embeddings) forms a proper factorisation system follows abstractly from the fact that surjective homomorphisms and embeddings are precisely epimorphisms and extremal monomorphisms, respectively, and that $\Rel$ and $\Rels$ are complete wellpowered categories and as such  (epi, extremal mono) is a factorisation system, see e.g.\ \cite{adamek1994locally}.
        For a direct proof, just observe that every homomorphism of (pointed) $\sg$-structures $f\colon A\to B$ factors as $m \circ q$ where $q$ is a surjective homomorphism and $m$ is an embedding. It is immediate that this decomposition satisfies the uniqueness properties required of factorisation systems in sense of~\cite{freyd1972categories}.
        \qedhere
    \end{itemize}
\end{proof}
Note that the above proof would work even more generally, e.g.\ for categories with multiple designated points, so long the assumptions of Linton's theorem are satisfied.

We now come back to our main task. We wish to ``lift'' an $n$-ary functor $\op\colon \prod_i \CC_i \to \CD$ to an operation between the categories of coalgebras:
\[
    \lop\colon \prod_i \EM{\C_i} \to \EM \D
\]
We assume we have a Kleisli law $\kappa\colon \D \circ \op \Rightarrow \op \circ \prod_i \C_i$ and that $\D$ preserves embeddings. Then, for every for coalgebras $(\As_1,\alpha_1)\in \EM{\C_1}$, \dots, $(\As_n,\alpha_n)\in \EM{\C_n}$, the following equaliser exists by Lemma~\ref{l:emb-preserving-creates-equalisers}.
\begin{equation}
    \begin{tikzcd}[column sep=2.5em]
        \lop(\vec{\alpha_i})
            \rar{\iota_{\vec{\alpha_i}}}
        & \EMF{\D}\bigl(\op(\vec{A_i})\bigr)
            \ar[yshift=0.5em]{rr}{\EMF{\D}(\kappa)\circ \delta}
            \ar[swap,yshift=-0.5em]{rr}{\EMF{\D}(\op(\vec{\alpha_i}))}
        &
        & \EMF{\D}\bigl(\op(\vec{\C_i(\As_i)})\bigr)
    \end{tikzcd}
    \label{eq:op-lift}
\end{equation}

\begin{remark}
    Above and in the following we often just write $\alpha$ instead of $(A,\alpha)$ in expressions that involve a coalgebra $(A,\alpha) \in \EM \C$. In particular, the above $\iota_{\vec{\alpha_i}}$ is parametrised by the coalgebras $(A_1,\alpha_1),\dots,(A_n,\alpha_n)$ and the same is true for the resulting coalgebra $\lop(\vec{\alpha_i})$.
\end{remark}

\EMLifting*
\begin{prf}
    Since $\EM \D$ has equalisers (by Lemma~\ref{l:emb-preserving-creates-equalisers}) and we have a Kleisli law of the right shape, the equaliser in~\eqref{eq:op-lift} always exists. Next, we observe that the mapping on objects $\vec{\alpha_i} \mapsto \lop(\vec{\alpha_i})$ extends to a functor. Given coalgebra morphisms $f_i \colon (A_i, \alpha_i) \to (B_i, \beta_i)$, for $i\in \{1,\dots,n\}$, obtain a diagram of the following form.
    \begin{equation}
        \begin{tikzcd}[column sep=2.5em]
            \lop(\vec{\alpha_i})
                \dar[dashed,swap]{\lop(f_i)}
                \rar{\iota_{\vec{\alpha_i}}}
            & \EMF{\D}(\op(\vec{\As_i}))
                \dar[swap]{\EMF{\D}(\op(\vec{f_i}))}
            & \\
            \lop(\vec{\beta_i})
                \rar[swap]{\iota_{\vec{\beta_i}}}
            & \EMF{\D}(\op(\vec{\Bs_i}))
                \rar[yshift=0.5em]{\EMF{\D}(\kappa)\circ \delta}
                \rar[swap,yshift=-0.5em]{\EMF{\D}(\op(\vec{\beta_i}))}
            & \EMF{\D}(\op(\vec{\C_i(\Bs_i)}))
        \end{tikzcd}
        \label{eq:hatOp-functorial}
    \end{equation}
    It follows from the fact that $f_1,\dots,f_n$ are coalgebra morphisms that $\EMF{\D}(\op(\vec{f_i})) \circ \iota_{\vec{\alpha_i}}$ coequalisers the parallel morphisms at the bottom, i.e.\ that $\EMF{\D}(\kappa)\circ \delta \circ \EMF{\D}(\op(\vec{f_i})) \circ \iota_{\vec{\alpha_i}} = \EMF{\D}(\op(\vec{\beta_i})) \circ \EMF{\D}(\op(\vec{f_i})) \circ \iota_{\vec{\alpha_i}}$. Then, by the universal property there exists a unique morphism $\lop(\vec{\alpha_i}) \to \lop(\vec{\beta_i})$, which we denote by $\lop(f_i)$, and which makes the square above commute.

    Furthermore, one can verify that $\EMF \D (\op(\vec{A_i}))$ equalises the diagram that defines $\lop (\vec{\EMF{\C_i}(A_i)})$, giving us $\lop (\vec{\EMF{\C_i}(A_i)}) \cong \EMF \D (\op(\vec{A_i}))$, cf.~\cite[Theorem 7.4]{jaklmarsdenshah2022bim}.
\end{prf}

\paragraph*{Bimorphisms}
Before proving Proposition~\ref{p:bimorphisms}, we introduce an instructive terminology and notation. Given coalgebras
\[(A,\alpha)\in \EM \D, \, (B_1,\beta_1)\in \EM{C_1}, \, \dots, \, (B_n,\beta_n) \in \EM{\C_n},\]
we say that a morphism
\[
    g \colon A \to \opveci B
\]
is a \df{bimorphism} $g\colon \alpha \to [\vec{\beta_i}]$ if it makes the diagram in \eqref{eq:bimorph} commute for $f^\#$ replaced by $g$, that is, if the following commutes.
\begin{equation*}
    \begin{tikzcd}[->, ampersand replacement=\&,]
    \As \arrow[rr, "g"] \dar[swap]{\alpha} \& \& \op(\vec{\Bs_i}) \dar{\op(\vec{\beta_i})} \\
    \D(\As) \rar{\D(g)} \& \D\op(\vec{\Bs_i}) \rar{\kappa} \& \op(\vec{\C_i(\Bs_i)})
\end{tikzcd}
\end{equation*}

We observe that bimorphisms are closed under composition by coalgebra morphisms from both left and right.
\begin{lemma}
    \label{l:bimorph-composition}
Given a bimorphism $f\colon \alpha \to [\veci \beta]$, a morphism of $\D$-coalgebras $h\colon (A',\alpha') \to (A,\alpha)$ and $\C_i$-coalgebra morphisms $g_i\colon (B_i, \beta_i) \to (B'_i, \beta'_i)$, for $i\in \{1,\dots,n\}$, the composite $\op(\veci g) \circ f \circ h$ is a bimorphism $\alpha' \to [\veci{\beta'}]$.
\end{lemma}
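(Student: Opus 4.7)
The plan is to verify the bimorphism condition for $f' := \op(\vec{g_i}) \circ f \circ h$ by a direct diagram chase, pasting together three commuting cells: the naturality square for $\kappa$ on the tuple $\vec{g_i}$, the bimorphism square for $f$, and the two coalgebra morphism squares (for $h$ on the left and for the $g_i$ on the right). The required identity is
\[ \kappa \circ \D(f') \circ \alpha' \;=\; \op(\vec{\beta'_i}) \circ f'. \]

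First I would expand $\D(f') = \D(\op(\vec{g_i})) \circ \D(f) \circ \D(h)$ and use that $h$ is a $\D$-coalgebra morphism, $\D(h)\circ \alpha' = \alpha\circ h$, to move $h$ past $\alpha'$. Next, I would apply naturality of $\kappa$ (as a natural transformation $\D \circ \op \Rightarrow \op \circ \prod_i \C_i$, which is part of being a Kleisli law) along the tuple $\vec{g_i}$, allowing me to rewrite $\kappa \circ \D(\op(\vec{g_i}))$ as $\op(\vec{\C_i(g_i)}) \circ \kappa$. Then the bimorphism hypothesis on $f$ substitutes $\op(\vec{\beta_i}) \circ f$ for $\kappa \circ \D(f) \circ \alpha$. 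Finally, the coalgebra morphism conditions $\C_i(g_i) \circ \beta_i = \beta'_i \circ g_i$, together with functoriality of $\op$, yield $\op(\vec{\C_i(g_i)}) \circ \op(\vec{\beta_i}) = \op(\vec{\beta'_i}) \circ \op(\vec{g_i})$, which collapses the composite to $\op(\vec{\beta'_i}) \circ \op(\vec{g_i}) \circ f \circ h = \op(\vec{\beta'_i}) \circ f'$, as required.

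There is no real obstacle here; the argument is a pure diagram chase in which each rewriting step uses exactly one of the standing hypotheses. The only point to take care of is the orientation of the coalgebra morphism equations and the fact that naturality of $\kappa$ is to be applied in the base category $\CD$ to the underlying morphisms of the $g_i$ (not to them as coalgebra morphisms); this is legitimate because a Kleisli law is by definition a natural transformation at the level of the underlying functors, as in Lemma~\ref{l:klei-ax-clone-form}.
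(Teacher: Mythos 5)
Your proof is correct: the diagram chase you describe (move $\alpha'$ past $h$ using the coalgebra-morphism square, apply naturality of $\kappa$ to the underlying morphisms $\vec{g_i}$, substitute the bimorphism square for $f$, then use the coalgebra-morphism squares for the $g_i$ and functoriality of $\op$) establishes exactly the required identity $\kappa \circ \D(f')\circ \alpha' = \op(\vec{\beta'_i})\circ f'$. The paper states this lemma as an unproved observation, so your argument is precisely the routine verification the authors had in mind.
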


Next, let $W_{\veci \beta}$ be the underlying object of the $\D$-coalgebra $\lop(\veci \beta)$.  Define
\[
    \univ_{\veci \beta} \colon W_{\veci \beta} \to H\bigl(\veci B\bigr)
\]
as the composition of underlying morphism of
\[ \iota_{\veci \beta} \colon \lop\bigl(\veci \beta\bigr) \to \EMF \D\bigl(\op(\veci B)\bigr) \] 
with the counit morphism 
\[ \counit \colon \D\bigl(H(\veci B)\bigr) \to H(\veci B). \] The following lemma shows that $\univ_{\veci \beta}$ is a universal bimorphism.
\begin{lemma}
    The following holds for $\univ_{\veci \beta}$.
    \begin{enumerate}
        \item $\univ_{\veci \beta}$ is a bimorphism $\lop\bigl(\veci \beta\bigr) \to \bigl[\veci \beta\bigr]$.
        \item For any bimorphism $f \colon \alpha \to \bigl[\veci \beta \bigr]$ there is a \emph{unique} $\D$-coalgebra morphism $f^o \colon (A,\alpha) \to \lop\bigl(\veci \beta \bigr)$ such that 
        \[ f = \univ_{\veci \beta} \circ f^o \]
        in the underlying category $\CD$.
        \item The collection of morphisms $\univ_{\veci \beta}$ is natural in $\veci \beta$. That is, for any tuple of coalgebra morphisms $f_i\colon \beta_i \to \beta_i'$, for $i\in \{1,\dots,n\}$,
        \[
            \univ_{\veci{\beta'}} \circ W_{\veci f} = \op\bigl(\veci f\bigr) \circ \univ_{\veci \beta},
        \]
        where $W_{\veci f} \colon W_{\veci \beta} \to W_{\veci{\beta'}}$ is the underlying morphism of $\lop\bigl(\veci f\bigr)$.
    \end{enumerate}
    \label{l:bimorph-basics}
\end{lemma}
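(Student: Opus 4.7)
The plan is to exploit two structural facts about $\iota_{\vec{\beta_i}}$ throughout: (a) it is the equaliser defining $\lop(\vec{\beta_i})$ in~\eqref{eq:op-lift}; and (b) it is a coalgebra morphism into the cofree coalgebra $\EMF\D(H(\vec{B_i}))$, so under the cofree adjunction it is completely determined by its underlying counit-composite, i.e.\ by $\univ_{\vec{\beta_i}}$. Concretely, if $\gamma_{\vec{\beta_i}}$ denotes the coalgebra structure carried by $W_{\vec{\beta_i}}$, then $\iota_{\vec{\beta_i}} = \D(\univ_{\vec{\beta_i}}) \circ \gamma_{\vec{\beta_i}}$ and $\counit \circ \iota_{\vec{\beta_i}} = \univ_{\vec{\beta_i}}$.

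For part~(1), I would start from the equaliser equation $\EMF\D(\kappa) \circ \delta \circ \iota = \EMF\D(H(\vec{\beta_i})) \circ \iota$ and post-compose with $\counit$. By naturality of $\counit$ and the comonad identity $\counit \circ \delta = \id$, the left side collapses to $\kappa \circ \iota$, while the right side becomes $H(\vec{\beta_i}) \circ \counit \circ \iota = H(\vec{\beta_i}) \circ \univ$. Rewriting $\iota$ on the left via $\iota = \D(\univ) \circ \gamma$ yields $\kappa \circ \D(\univ) \circ \gamma = H(\vec{\beta_i}) \circ \univ$, which is precisely the bimorphism square in diagram~\eqref{eq:bimorph}.

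For part~(2), given a bimorphism $f \colon \alpha \to [\vec{\beta_i}]$, the cofree adjunction produces the unique coalgebra morphism $\tilde f := \D(f) \circ \alpha \colon (A,\alpha) \to \EMF\D(H(\vec{B_i}))$ lying above $f$. The key step is to show $\tilde f$ equalises the two parallel arrows of~\eqref{eq:op-lift}. To do so, apply $\D(-)$ to the bimorphism equation $\kappa \circ \D(f) \circ \alpha = H(\vec{\beta_i}) \circ f$, post-compose with $\alpha$, and simplify using naturality of $\delta$ together with the coalgebra coassociativity axiom $\D(\alpha) \circ \alpha = \delta \circ \alpha$; the result is exactly the equaliser condition. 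The universal property of $\iota$ then yields a unique coalgebra morphism $f^o \colon (A,\alpha) \to \lop(\vec{\beta_i})$ with $\iota \circ f^o = \tilde f$, and hence $\univ \circ f^o = \counit \circ \iota \circ f^o = \counit \circ \D(f) \circ \alpha = f$ by the counit law. Uniqueness of $f^o$ follows because $\iota$ is a monomorphism (equalisers are monic, cf.\ Lemma~\ref{l:fs-basics}.1) and because the cofree adjunction admits at most one lift above $f$.

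For part~(3), I would directly invoke the commutative square already established in~\eqref{eq:hatOp-functorial}, namely $\iota_{\vec{\beta'_i}} \circ \lop(\vec{f_i}) = \EMF\D(H(\vec{f_i})) \circ \iota_{\vec{\beta_i}}$. Post-composing with $\counit$ and applying naturality of $\counit$ converts this into $\univ_{\vec{\beta'_i}} \circ W_{\vec{f_i}} = H(\vec{f_i}) \circ \univ_{\vec{\beta_i}}$, which is the required naturality square. The main obstacle in the whole argument is really bookkeeping: keeping track of which equations live in $\EM\D$ versus in the base category $\CD$, and recognising that composition with $\counit$ is the translation dictionary between the two. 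Once that is fixed, each clause reduces to chasing the equaliser universal property and the cofree adjunction.
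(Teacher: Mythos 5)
Your proof is correct, but it is more self-contained than the paper's: for items~(1) and~(2) the paper simply cites Theorem~7.1 of \cite{jaklmarsdenshah2022bim}, whereas you derive both directly from the equaliser presentation \eqref{eq:op-lift} and the cofree adjunction. Your key observations -- that $\iota_{\vec{\beta_i}} = \D(\univ_{\vec{\beta_i}}) \circ \gamma$ because $\iota_{\vec{\beta_i}}$ is a coalgebra morphism into a cofree coalgebra, that post-composing the equaliser equation with $\counit$ and using $\counit \circ \delta = \id$ yields exactly the bimorphism square of \eqref{eq:bimorph}, and that the transpose $\tilde f = \D(f) \circ \alpha$ of a bimorphism equalises the parallel pair via coassociativity and naturality of $\delta$ -- are precisely the content of the cited external result, so you have in effect inlined that proof. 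The uniqueness argument in~(2) is also sound, though the pointer to Lemma~\ref{l:fs-basics}.1 is slightly indirect: the fact you actually need is that equalisers are monomorphisms, which holds in any category (the paper's Lemma on equalisers being embeddings, combined with Lemma~\ref{l:fs-basics}.1, gives the same conclusion by a longer route). For item~(3) your argument coincides with the paper's: post-compose the defining square \eqref{eq:hatOp-functorial} with $\counit$ and use its naturality. The trade-off is the usual one: the citation keeps the appendix short, while your version makes the lemma verifiable without consulting the companion paper.
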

\begin{prf}
    Items 1 and 2 are precisely Theorem~7.1 in~\cite{jaklmarsdenshah2022bim}. Item 3 follows from the fact that $\counit$ is a natural transformation and from the definition of $\lop(\veci f)$ in terms of the universal property of equalisers, cf.\ \eqref{eq:hatOp-functorial} above.
\end{prf}

\Bimorphisms*
\begin{prf}
    We establish the first part of the statement, that there is a bijection between coalgebra morphisms 
    \[ (A,\alpha) \to \lop\bigl(\vec{(B_i, \beta_i)}\bigr) \] 
    and bimorphisms $\alpha \to \bigl[\veci \beta\bigr]$. Let $f$ be a coalgebra morphism of the former type. Define $f^\#$ as the composition
    \[
        f^\# = A \xrightarrow{f} W_{\veci \beta} \xrightarrow{\univ_{\veci \beta}} H\bigl(\veci B\bigr)
    \]
    Since $\univ_{\veci \beta}$ is a bimorphism (by Lemma~\ref{l:bimorph-basics}.1), so is $f^\#$ by Lemma~\ref{l:bimorph-composition}. Furthermore, $(f^\#)^o = f$ by the universal property of $\univ_{\veci \beta}$ (Lemma~\ref{l:bimorph-basics}.2). On the other hand, for a bimorphism $g \colon \alpha \to \left[\veci \beta\right]$ we clearly have $(g^o)^\# = g$, again by the universal property of $\univ_{\veci \beta}$. This establishes the required bijection between coalgebra morphisms and bimorphisms.

    For the second property, let $f \colon (A,\alpha) \to \lop\bigl(\vec{(B_i,\beta_i)}\bigr)$ and $h \colon (A',\alpha') \to (A,\alpha)$ be $\D$-coalgebra morphisms. Then,
    \[
        (f \circ h)^\# = \univ_{\veci \beta} \circ f \circ h = f^\# \circ h.
    \]
    Similarly, for coalgebra morphisms $g_i \colon (B_i, \beta_i) \to (B'_i, \beta'_i)$, for $i\in \{1,\dots,n\}$, we have
    \[
        (\lop(\veci g) \circ f)^\# = \univ_{\veci{\beta'}} \circ W_{\veci g} \circ f = \op(\veci g) \circ \univ_{\veci \beta} \circ f = \op(\veci g) \circ f^\#
    \]
    where the equality in the middle follows from Lemma~\ref{l:bimorph-basics}.3. We see that $\bigl(\lop(\vec{g_i}) \circ f \circ h\bigr)^\# = \op(\vec{g_i}) \circ f^\# \circ h$ now follows from these two observations and Lemma~\ref{l:bimorph-composition}, which ensures that $f \circ h$ is a bimorphism.
\end{prf}

\subsection{Proofs omitted from Section~\ref{s:parametric-relative-adjoints}}

Recall that a functor $R\colon \CT \to \CS$ has a left adjoint if and only if, for every object $b\in \CS$, there is an object $L(b)$ in $\CT$ and a morphism $\eta_b\colon b \to R(L(b))$ such that, for any other $f\colon b \to R(a)$ there is a unique $f^*\colon L(b) \to a$ such that
\[
    \begin{tikzcd}
        b \rar{\eta_b}\ar[swap]{dr}{f} & R(L(b)) \dar{R(f^*)} \\
        & R(a)
    \end{tikzcd}
\]

A similar statement holds for relative adjoints too, with a proof that follows essentially the same steps.

\begin{proposition}
    \label{p:rel-adj-univ}
    Let $R\colon \CT \to \CU$ and $I\colon \CS \to \CU$ be functors and further assume that there is
    \begin{itemize}
        \item a mapping on objects $L\colon \obj(\CS) \to \obj(\CT)$ and
        \item a morphism $\eta_a\colon I(a) \to RL(a)$, for every $a\in \CS$, such that for any $f\colon I(a) \to R(b)$ there is a unique ${\ol f\colon L(a) \to b}$ such that the following diagram commutes.
        \[
        \begin{tikzcd}
            I(a) \ar[swap]{rd}{f}\rar{\eta_a} & RL(a) \dar{\ol f} & & L(a) \dar[dashed]{\ol f}\\
            & R(b) & & b
        \end{tikzcd}
        \]
    \end{itemize}
    Then, the mapping $L$ extends to a functor $L\colon \CS \to \CT$ and there is a natural bijection
        \begin{equation*}
            \label{eq:nat-bij}
            \rho : \hom(L(A),B) \xrightarrow{\ee\cong} \hom(I(A), R(B)).
        \end{equation*}
    witnessing that $R$ is a relative right adjoint.
\end{proposition}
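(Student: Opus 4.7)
The plan is to follow the standard argument that characterises left adjoints by a universal property, adapted to the relative setting where $R$ is compared through the functor $I\colon \CS \to \CU$ rather than through the identity on $\CS$.

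First, I would extend $L$ to a functor. Given $g\colon a \to a'$ in $\CS$, the composite $\eta_{a'} \circ I(g) \colon I(a) \to RL(a')$ is a morphism of the form to which the universal property applies, so there is a unique $\ol{\eta_{a'} \circ I(g)} \colon L(a) \to L(a')$; define $L(g)$ to be this morphism. Preservation of identities follows because both $L(\id_a)$ and $\id_{L(a)}$ are candidates for $\ol{\eta_a}$, and uniqueness forces them to coincide. For composition, given $g\colon a \to a'$ and $g'\colon a' \to a''$, one checks that both $L(g' \circ g)$ and $L(g') \circ L(g)$ satisfy the defining universal property for $\ol{\eta_{a''} \circ I(g' \circ g)}$, by a short diagram chase using naturality of $\eta$ (established by construction) and functoriality of $R$ and $I$; uniqueness again yields equality.

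Second, I would define the candidate bijection. In one direction, for $h\colon L(a) \to b$ in $\CT$ set
\[ \rho(h) \;=\; R(h) \circ \eta_a \colon I(a) \to R(b). \]
In the other direction, for $f\colon I(a) \to R(b)$ in $\CU$ set $\rho^{-1}(f) = \ol f$, the unique morphism supplied by the hypothesis. The identities $\rho \circ \rho^{-1} = \id$ and $\rho^{-1} \circ \rho = \id$ are immediate: the former because the defining equation $R(\ol f) \circ \eta_a = f$ is precisely $\rho(\ol f) = f$, the latter because any $h\colon L(a) \to b$ satisfies $R(h) \circ \eta_a = \rho(h)$, and uniqueness in the universal property forces $\ol{\rho(h)} = h$.

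Third, I would verify naturality of $\rho$ in both variables. Naturality in $B$ is routine: for $k\colon b \to b'$ in $\CT$, one has $\rho(k \circ h) = R(k \circ h) \circ \eta_a = R(k) \circ R(h) \circ \eta_a = R(k) \circ \rho(h)$. Naturality in $A$ requires checking that, for $g\colon a' \to a$ in $\CS$, $\rho(h \circ L(g)) = \rho(h) \circ I(g)$. Expanding via the definition of $L(g)$ and using the defining equation $R(L(g)) \circ \eta_{a'} = \eta_a \circ I(g)$, this follows from functoriality of $R$.

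The overall argument is essentially bookkeeping, and no step is genuinely hard. The only point that requires care is verifying functoriality of $L$ on morphisms --- specifically the composition axiom --- because one must correctly identify both candidate morphisms as solutions to the same universal property; this is where the uniqueness clause of the hypothesis does the real work. Once $L$ is functorial, the construction of $\rho$ and its naturality are direct consequences of the definition of $\eta$ as a universal arrow relative to $I$.
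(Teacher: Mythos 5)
Your proposal is correct and follows essentially the same route as the paper's own proof: extend $L$ via the universal property applied to $\eta_{a'}\circ I(g)$, define $\rho(h)=R(h)\circ\eta_a$ with inverse $f\mapsto\ol f$, and verify bijectivity by uniqueness and naturality by the defining equation of $L$ on morphisms. The only cosmetic difference is that you check naturality separately in each variable while the paper verifies a single combined naturality square.
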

\begin{prf}
    We first observe that $L$ extends to a functor. Given $h\colon a\to a'$ in $\CS$, we set $L(h)$ to be the morphism $L(a) \to L(b)$ obtained by the universal property of $\eta_a$ for the morphism
    \[ I(a) \xrightarrow{I(h)} I(a') \xrightarrow{\eta_{a'}} RL(a'). \]
    As a by-product we also obtain that $\eta$ is natural in $a$.

    Next, we define $\rho$ as follows:
    \begin{align*}
        L(a) \xrightarrow{z} b  & \qq{\overset{\rho}\longmapsto} I(a) \xrightarrow{\eta_a} RL(a) \xrightarrow{R(z)} R(b)
        \\
        I(a) \xrightarrow{f} R(b)  & \qq{\overset{\rho^{-1}}\longmapsto} L(a) \xrightarrow{\ol f} b
    \end{align*}
    where $\ol f$ is the morphism obtained by the universal property of $\eta_a$.

    It remains to check that $\rho$ is natural and componentwise bijective. For the latter, let $z$ be a morphism $L(A) \to b$ in $\CT$. Then, $\rho^{-1}(\rho(z)) = \rho^{-1}(R(z) \circ \eta_e) = \ol{R(z) \circ \eta_e}$. By the universal property of $\eta_a$, it is the unique $s \colon L(a) \to b$ such that $R(s) \circ \eta_a = R(z) \circ \eta_a$. But this equation is already solved by $s = z$, hence $\rho^{-1}(\rho(z)) = z$. Conversely, for an $f$ of the form $I(a) \to R(b)$, $\rho(\rho^{-1}(f)) = R(\ol f) \circ \eta_a$ which is equal to $f$, as required.

    For naturality of $\rho$, let $h\colon a'\to a$ and $g\colon b\to b'$ be morphisms in $\CS$ and $\CT$, respectively. For commutativity of
    \[
        \begin{tikzcd}
            \hom(L(a), b) \rar{\rho}\dar[swap]{\hom(L(h), g)} & \hom(I(a), R(b))\dar{\hom(I(h), R(g))}\\
            \hom(L(a'), b') \rar{\rho} & \hom(I(a'), R(b'))
        \end{tikzcd}
    \]
    let $f$ be a morphism in the top left set, i.e.\ $f\colon L(a) \to b$. Following the right-to-down path, $f$ is mapped to
    \[ I(a') \xrightarrow{I(h)} I(a) \xrightarrow{\eta_a} RL(a) \xrightarrow{R(f)} R(b) \xrightarrow{R(g)} R(b') \]
    and, following the down-to-right path, $f$ is mapped to
    \[ I(a') \xrightarrow{\eta_{a'}} RL(a') \xrightarrow{R(L(h))} RL(a) \xrightarrow{R(f)} R(b) \xrightarrow{R(g)} R(b').\]
    However, these two morphism are equal by naturality of $\eta$.
\end{prf}

\ParametricRelativeAdjoint*
\begin{prf}
    $(\Rightarrow)$ Assuming (S2) and fixing $D\in \EM \D$, we define the relative left adjoint $L_D\colon \Pa_\C \emcm \lop(D) \to \Pa_\D \emcm D$ of~$\lop_D$. Following Proposition~\ref{p:rel-adj-univ}, it is enough to define the appropriate mapping on objects
    \[ L_D\colon \obj(\Pa_\C \emcm \lop(D)) \to \obj(\Pa_\D \emcm D)\]
    and a collection of morphisms $\eta_e\colon I(e) \to \lop_D L_D(e)$, for every object $e$ of $\Pa_\C \emcm \lop(D)$ i.e. an embedding $e\colon \Pc \embed \lop(D)$ in $\EM \C$.

    Given $e\colon \Pc \embed \lop(D)$, we denote by
    \[ \Pc \xrightarrow{\ee{g_e}} \lop(\Pc_e) \xrightarrow{\lop(h_e)} \lop(D) \]
    its minimal decomposition. Define $L_D(e)$ as $h_e$ and $\eta_e$ as $g_e$. The fact that $g_e$ with $\lop(h_e)$ is a decomposition of $e$ expresses the fact that $\eta_e$ is indeed a morphism $I(e) \to \lop_D(L_D(e))$ in $\EM \C \emcm \lop(D)$, that is, we have the following.
    \begin{equation}
    \begin{tikzcd}
        \Pc \ar{rr}{\eta_e}\ar[swap]{dr}{e} & & \lop(\Pc_e) \ar{dl}{\lop(L_D(e))} \\
        & \lop(D)
    \end{tikzcd}
    \label{eq:min-decomp-f}
    \end{equation}

    Next, we check the universal property of $\eta_e$. Any morphism $z\colon I(e) \to \lop_D(g)$ in~\mbox{$\Pa_\C \emcm \lop(D)$}, by definition, corresponds to having a commutative triangle in $\EM \C$
    \[
    \begin{tikzcd}
        \Pc \ar{rr}{z}\ar[swap]{dr}{e} & & \lop(\Qc) \ar{dl}{\lop(g)} \\
        & \lop(D)
    \end{tikzcd}
    \]
    for a path embedding $g\colon \Qc \embed D$. Consequently, we have a commutative square:
    \[
    \begin{tikzcd}
        \Pc \dar[swap]{z}\rar{g_e} & \lop(\Pc_e) \dar{\lop(h_e)} \\
        \lop(\Qc) \rar{\lop(g)} & \lop(D)
    \end{tikzcd}
    \]
    By minimality of the decomposition \eqref{eq:min-decomp-f}, there is a morphism $z^*\colon \Pc_e\to \Qc$ such that
    \[
    \begin{tikzcd}
        \Pc_e \ar{rr}{z^*}\ar[swap]{dr}{h_e} & & \Qc \ar{dl}{g} \\
        & D
    \end{tikzcd}
    \]
    i.e.\ $z^*$ is a morphism $\Pc_e \to \Qc$ in $\Pa_\D \emcm D$. Furthermore, by (S1), $\lop$ preserves embeddings. Hence, $\lop(g)$ is a monomorphism and the previous two diagrams give us that $z = \lop(z^*) \circ g_e$. In other words, the following diagram commutes in~$\EM \C \emcm \lop(D)$.
    \[
        \begin{tikzcd}[column sep=1.3em]
        I(e) \ar{rr}{\eta_e}\ar[swap]{dr}{z} & & \lop_D(L_D(e)) \ar{dl}{\lop_D(z^*)} \\
        & \lop_D(g)
    \end{tikzcd}
    \]
    Unicity of $z^*$ follows from the fact that $g$ is a monomorphism.

    \medskip
    $(\Leftarrow)$ Assume $L_D$ is the relative left adjoint of $\lop_D$ from $\Pa_\D \emcm D$ to the embedding functor $I\colon \Pa_\C \emcm \lop(D) \hookrightarrow \EM \C \emcm \lop(D)$, for a fixed $a\in \EM \D$, and let $\tau$ be the corresponding natural bijection
    \[
    \tau\colon \hom(L_D(u),e) \xrightarrow{\ee\cong} \hom(I(u), \lop_D(e)).
    \]

    Next, given a path embedding $e\colon \Pc \embed \lop(D)$, observe that it is an object of~\mbox{$\Pa_\C \emcm \lop(D)$}. Set
    \[ e'\colon \Pc' \embed D  \qtq{and} e_0\colon \Pc \to \lop(\Pc')\]
    to be $L_D(e)$ and the underlying morphism of the unit $\eta_e\colon I(e) \to \lop_D L_D(e)$ of the relative adjunction, respectively, i.e.\ $e_0 = \tau(\id\colon L_D(e) \to L_D(e))$. Observe that, the fact that $\eta_e$ is a morphism in $\Pa_\C \emcm \lop(D)$ says precisely that $e$ is equal to the composition
    \[ \Pc \xrightarrow{\ee{e_0}} \lop(\Pc') \xrightarrow{\lop(L_D(e))} \lop(D). \]
    For minimality of this decomposition, consider an alternative decomposition of~$e$
    \[ \Pc \xrightarrow{\ee{g_0}} \lop(\Qc) \xrightarrow{\lop(g_1)} \lop(D) \]
    where $g_1$ is a path embedding $g_1\colon \Qc \embed D$ in $\EM \D$. The fact that it is a decomposition of $e$ expresses the fact that $g_0$ is a morphism $I(e) \to \lop_D(g_1)$ in~$\EM \C \emcm \lop(D)$. Therefore, the morphism $\tau^{-1}(g_0)$ gives $L_D(e) \to g_1$ in $\Pa_\D \emcm D$. In other words, the underlying morphism of coalgebras of $\tau^{-1}(g_0)$ is of type $\Pc' \to \Qc$ and $e'$ is equal to $g_1 \circ \tau^{-1}(g_0)$ in~$\EM \D$.
\end{prf}

\subsection{Proofs omitted from Section~\ref{s:simpler-axioms}}

We need the following standard fact about equalisers.
\begin{lemma}
    \label{l:equalisers-embeddings}
    Assume that $\C$ is a comonad on the category of (pointed) relational structures which preserves embeddings. If $\iota \colon E \to A$ is an equaliser of a pair of morphisms $A \rightrightarrows B$ in $\EM \C$ then $\iota$ is an embedding.
\end{lemma}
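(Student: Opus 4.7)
The plan is to deduce this from two completely general facts about proper factorisation systems, avoiding any direct construction of the equaliser in $\EM \C$.

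The first fact I would establish is that in any proper factorisation system $(\mathcal E, \mathcal M)$, every regular monomorphism lies in $\mathcal M$. Indeed, suppose $m = \mathrm{equ}(f, g)$ and factor $m = m' \circ q$ with $q \in \mathcal E$ and $m' \in \mathcal M$. Since $\mathcal E \subseteq \text{epi}$ by properness, from $f \circ m' \circ q = g \circ m' \circ q$ we obtain $f \circ m' = g \circ m'$, so the universal property of the equaliser $m$ supplies $u$ with $m \circ u = m'$. Using that $m'$ is a monomorphism, the equation $m' \circ (q \circ u) = m' = m' \circ \id$ forces $q \circ u = \id$; using that $m$ is a monomorphism, $m \circ (u \circ q) = m' \circ q = m$ forces $u \circ q = \id$. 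Hence $q$ is an isomorphism and $m \in \mathcal M$.

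The second fact is exactly what the paper recalled just before the lemma: because $\C$ preserves embeddings, $\EM \C$ carries the proper factorisation system $(\ol{\mathcal E}, \ol{\mathcal M})$ lifted from (surjective homomorphisms, embeddings) on the base category, where a coalgebra morphism belongs to $\ol{\mathcal M}$ precisely if its underlying homomorphism is an embedding of $\sg$-structures.

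Combining the two: the equaliser $\iota$ is by definition a regular monomorphism in $\EM \C$, so by the first fact $\iota \in \ol{\mathcal M}$, meaning its underlying homomorphism is an embedding, and hence by the definition of embeddings in $\EM \C$ the morphism $\iota$ is itself an embedding. I expect no significant obstacle here; the only non-trivial ingredient is the existence of the lifted factorisation system $(\ol{\mathcal E}, \ol{\mathcal M})$, which the paper has already cited as standard (and can be verified by a routine check that the componentwise $(\mathcal E, \mathcal M)$-factorisation of a coalgebra morphism in the base lifts uniquely to a coalgebra factorisation, using preservation of embeddings by $\C$ to endow the image object with a coalgebra structure via the orthogonality diagonal).
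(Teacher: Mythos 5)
Your proposal is correct and follows essentially the same route as the paper: the paper likewise invokes the lifted proper factorisation system $(\ol{\mathcal E},\ol{\mathcal M})$ on $\EM{\C}$ (via Lemma~\ref{l:EM-factorisation}) and then appeals to the standard fact that in a proper factorisation system the right class contains all equalisers, citing Freyd--Kelly for it. The only difference is that you spell out the proof of that cited fact explicitly, and your argument for it is correct (the final step tacitly uses that $\mathcal M$ contains isomorphisms and is closed under composition, which is part of the definition of a factorisation system).
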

\begin{prf}
    Recall that the factorisation of maps into surjective homomorphisms followed by embeddings lifts to the category $\EM \C$, cf.\ Lemma~\ref{l:EM-factorisation}. In fact, this is a factorisation system $(\mathcal E, \mathcal M)$ with $\mathcal E \sue \text{epis}$ and $\mathcal M \sue \text{monos}$, in sense of \cite{freyd1972categories}. It then follows that embeddings in $\EM \C$ contain all equalisers, see e.g.\ Proposition 2.1.4 in~\cite{freyd1972categories}.
\end{prf}

\SOnePrime*
\begin{prf}
    Let \[e_1\colon (A_1,\alpha_1) \to (B_1,\beta_1), \, \dots, \, e_n \colon (A_n,\alpha_n) \to (B_n,\beta_n) \] be embeddings in $\EM{\C_1},\dots,\EM{\C_n}$, respectively. Recall that $\lop(e_1,\dots,e_n)$ is defined by the universal property of equalisers. In particular, it makes the following diagram commute.
    \[
        \begin{tikzcd}
            \lop(\vec{\Ac_i})\dar[swap]{\lop(\vec{e_i})} \rar{\iota_{\vec{\Ac_i}}} & \EMF{\D}\bigl(\op(\vec{\As_i})\bigr) \dar{\EMF{\D}\left(\op\left(\vec{U(e_i)}\right)\right)} \\
            \lop(\vec{\Ac'_i}) \rar{\iota_{\vec{\Ac'_i}}} & \EMF{\D}\bigl(\op(\vec{\As'_i})\bigr)
        \end{tikzcd}
    \]
    From $e_1,\dots,e_n$ being embeddings we know that $\EMF{\D}\bigl(\op(\vec{U(e_i)})\bigr)$ is an embedding as well because $\D$ preserves embeddings. Furthermore, by Lemma~\ref{l:equalisers-embeddings}, $\iota_{\vec{\Ac_i}}$ is an embedding since it is an equaliser and, consequently, by Lemma~\ref{l:fs-basics}.3, $\lop(e_1,\dots,e_n)$ is an embedding too.
\end{prf}

Recall the definition of bimorphisms and the notation we introduced for them in Section~\ref{s:proofs-of-lifting-operations}.

\begin{lemma}
    \label{l:multi-decomp}
    Assume that both $\D$ and $\op$ preserve embeddings.
    Then, given a bimorphism $f\colon \Ac \to [\vec{\Bc_i}]$ which decomposes as $f_0\colon \As \to \op(\vec{\As_i})$ followed by $\op\left(\vec{e_i}\right)$, for some coalgebra embeddings $e_i\colon \Ac_i \embed \Bc_i$, for $1 \leq i \leq n$, the morphism $f_0$ is a bimorphism $\Ac \to \left[\vec{\Ac_i}\right]$.
\end{lemma}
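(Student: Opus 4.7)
The plan is to verify the bimorphism square \eqref{eq:bimorph} for $f_0$ with respect to the coalgebras $(\As_i,\Ac_i)$ by pushing the given bimorphism square for $f = \op(\vec{e_i})\circ f_0$ along the embeddings $e_i$. Concretely, I will postcompose both legs of the square I want for $f_0$ with $\op(\vec{\C_i(e_i)})$, rewrite each side so that it becomes one of the two legs of the bimorphism square for $f$, and then cancel the postcomposed morphism on the grounds that it is a monomorphism.

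First I would rewrite the left-hand leg. Since each $e_i$ is a $\C_i$-coalgebra morphism, $\Bc_i\circ e_i = \C_i(e_i)\circ \Ac_i$, and functoriality of $\op$ gives
\[
\op(\vec{\C_i(e_i)})\circ \op(\vec{\Ac_i})\circ f_0 \,=\, \op(\vec{\Bc_i})\circ \op(\vec{e_i})\circ f_0 \,=\, \op(\vec{\Bc_i})\circ f.
\]
For the right-hand leg I would invoke naturality of $\kappa$ applied to the morphisms $e_i\colon \As_i\to \Bs_i$ in the base category (Lemma~\ref{l:klei-ax-clone-form} in its $n$-ary incarnation), yielding
\[
\op(\vec{\C_i(e_i)})\circ \kappa_{\vec{\As_i}}\circ \D(f_0)\circ \alpha \,=\, \kappa_{\vec{\Bs_i}}\circ \D(\op(\vec{e_i}))\circ \D(f_0)\circ \alpha \,=\, \kappa_{\vec{\Bs_i}}\circ \D(f)\circ \alpha.
\]
The two right-hand sides coincide precisely because $f$ is a bimorphism $\Ac\to[\vec{\Bc_i}]$, so postcomposing the required square for $f_0$ with $\op(\vec{\C_i(e_i)})$ yields a valid equation.

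The only remaining step, which is the real content, is to cancel $\op(\vec{\C_i(e_i)})$. Each $e_i$ is an embedding by assumption and each $\C_i$ preserves embeddings (the standing assumption on our game comonads), so each $\C_i(e_i)$ is an embedding; then $\op$ preserving embeddings makes $\op(\vec{\C_i(e_i)})$ an embedding, and therefore a monomorphism by Lemma~\ref{l:fs-basics}.1. I expect this cancellation step, tying together the two embedding-preservation hypotheses, to be the only subtle point; everything else is bookkeeping using functoriality, naturality of $\kappa$, and the coalgebra axiom for $e_i$.
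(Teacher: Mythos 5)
Your proposal is correct and takes essentially the same route as the paper: postcompose the desired square for $f_0$ with $\op(\vec{\C_i(e_i)})$, reduce both legs to those of the bimorphism square for $f$ using the coalgebra-morphism property of the $e_i$ and naturality of $\kappa$, and then cancel the postcomposed map because it is an embedding and hence a monomorphism (Lemma~\ref{l:fs-basics}.1). The only difference is presentational (the paper phrases this as a diagram chase), and your attribution of the cancellation step to the $\C_i$ and $\op$ preserving embeddings is if anything the more precisely typed reading of the hypothesis the paper invokes.
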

\begin{prf}
    We show that the left oblong in the diagram below commutes.
    \[
    \begin{tikzcd}[column sep=3.5em]
        \As
            \rar{f_0}
            \ar{dd}{\alpha}
        & \opveci\As
            \rar{\opveci e}
            \ar{d}{\opveci\alpha}
        & \opveci\Bs
            \ar{d}{\opveci \beta}
        \\
        & \opvec{\D(\As_i)}
            \rar{\opvec{\D(e_i)}}
        & \opvec{\D(\Bs_i)}
        \\
        \D(P)
            \rar{\D(f_0)}
        & \D(\opveci \As)
            \uar{\kappa}
            \rar{\D(\opveci e)}
        & \D(\opveci \Bs)
            \uar{\kappa}
    \end{tikzcd}
    \]
    Since $f$ is a bimorphism, we know that the outer square commutes. Further, the two squares on the right commute by naturality of $\kappa$ and the assumption that $e_i$, for $1\leq i \leq n$, is a coalgebra morphism. A simple diagram chasing implies that 
    \[ \opvec{\D(e_i)} \circ \opveci\alpha \circ f_0 = \opvec{\D(e_i)}\circ \kappa \circ \D(f_0)\circ \alpha .\]
    Since $\D$ and $H$ preserve embeddings, $\opvec{\D(e_i)}$ is an embedding and hence a mono (cf.\ Lemma~\ref{l:fs-basics}.1), proving that $f_0$ is a bimorphism.
\end{prf}

\STwoPrime*
\begin{prf}
    Let $e\colon (P,\pi) \embed \lop\bigl(\vec{(\As,\Ac_i)}\bigr)$ be a path embedding and let $f = e^\#$ be be the corresponding homomorphism of (pointed) $\sg$-structures $P \to H(\vec{\As_i})$, obtained by Proposition~\ref{p:bimorphisms}. %

    By \ref{ax:s2p}, $f$ has a minimal decomposition as $\op(\vec{e_i})\circ f_0$ for some $f_0\colon \Ps \to \op(\vec{\Ps_i})$ and path embeddings 
    \[ e_i\colon (\Ps_i,\Pc_i) \embed (\As_i,\Ac_i), \]
    with $i \in \{1,\dots, n\}$. It follows by Lemma~\ref{l:multi-decomp} that $f_0$ is a bimorphism 
    \[ \Pc \to \left[\vec{\Pc_i}\right]. \]
    Therefore, by Proposition~\ref{p:bimorphisms}, there exists a coalgebra morphism $e_0\colon \pi \to \lop\left(\vec{\pi_i}\right)$ such that $e_0^\# = f_0$ and, furthermore,
    \[
        e^\# = f = \op\left(\vec{e_i}\right)\circ f_0 = \op\left(\vec{e_i}\right)\circ e_0^\# = \bigl(\lop\left(\vec{e_i}\right) \circ e_0\bigr)^\#
    \]
    which proves that $e = \lop\left(\vec{e_i}\right) \circ e_0$, since $(-)^\#$ is a bijection between coalgebra morphisms $(P, \pi) \to \lop\left(\vec{\alpha_i}\right)$ and bimorphisms $\pi \to \left[\vec{\alpha_i}\right]$.

    Lastly, we show the minimality of this decomposition. If $e$ decomposes as $g_0\colon \Pc \to \lop\left(\vec{\Qc_i}\right)$ followed by $\lop\left(\vec{g_i}\right)$, with embeddings $g_i\colon (\Qs_i,\Qc_i) \embed (\As_i,\Ac_i)$, for $i \in \{1,\dots, n\}$, then by Proposition~\ref{p:bimorphisms},
    \[
        f = e^\# = \bigl(\lop\left(\vec{g_i}\right) \circ g_0\bigr)^\# = \op\left(\vec{g_i}\right) \circ g_0^\#
    \]
    Therefore, by minimality of the decomposition $f$ as $\op\left(\vec{e_i}\right) \circ f_0$, there exist coalgebra morphisms $h_i \colon (\Ps_i, \Pc_i) \to (\Qs_i, \Qc_i)$ such that $e_i = g_i \circ h_i$, for every $i\in \{1,\dots,n\}$.
\end{prf}

\FVMfulllogic*
\begin{prf}
    Given a collection of open pathwise-embeddings
    \begin{align*}
        \EMF{\C_1}(A_1) \xleftarrow{f_1} (R_1,&\rho_1) \xrightarrow{g_1} \EMF{\C_1}(B_1) \\
        &\vdots \\
        \EMF{\C_n}(A_n) \xleftarrow{f_n} (R_n,&\rho_n) \xrightarrow{g_n} \EMF{\C_n}(B_n)
    \end{align*}
    in $\EM{\C_1},\dots,\EM{\C_n}$, respectively, we wish to construct a pair of open pathwise embeddings
    \begin{align}
        \EMF{\D}(\op(A_1,\dots,A_n)) \xleftarrow{f} (R,\rho) \xrightarrow{g} \EMF{\D}(\op(B_1,\dots,B_n))
        \label{eq:free-op-cospan}
    \end{align}
    in $\EM \D$.

    Since there is a Kleisli law $\kappa \colon \D\circ \op \Rightarrow \op \circ \prod_i\C_i$, we know by Theorem~\ref{t:em-lifting} that $H$ lifts to $\lop \colon \prod_i \EM{\C_i} \to \EM \D$.
    Also, since $\op$ preserves embeddings, we know that \ref{ax:s1} holds by Proposition~\ref{p:s1p}.
    Furthermore, \ref{ax:s2} holds too by Proposition~\ref{p:s2p} because we assumed that $\kappa$ satisfies \ref{ax:s2p}.
    Consequently, by Theorem~\ref{t:smoothness}, $\lop$ sends tuples of open pathwise-embeddings to open pathwise-embeddings.

    We obtain a pair of open pathwise-embeddings
    \begin{center}
    \begin{tikzcd}[column sep=2em, row sep=3em]
        & \lop(\EMF{\C_1}(A_1),\dots, \EMF{\C_n}(A_n)) \\
        \lop(\vec{(R_i,\rho_i)})
        \ar{ru}{\lop(f_1,\dots,f_n)}
        \ar[swap]{rd}{\lop(g_1,\dots,g_n)} & \\      
        & \lop(\EMF{\C_1}(B_1),\dots, \EMF{\C_n}(B_n))
    \end{tikzcd}
    \end{center}
    as the morphisms of coalgebras $f_1,\dots,f_n$ and $g_1,\dots,g_n$ are open pathwise-embeddings too.

    Finally, we recall that, by Theorem~\ref{t:em-lifting},
    \[
        \EMF{\D}(\op(A_1,\dots,A_n))
        \cong
        \lop(\EMF{\C_1}(A_1),\dots, \EMF{\C_n}(A_n))
    \]
    and
    \[
        \EMF{\D}(\op(B_1,\dots,B_n))
        \cong
        \lop(\EMF{\C_1}(B_1),\dots, \EMF{\C_n}(B_n)).
    \]
    This gives us that $\lop(f_1,\dots,f_n)$ and $\lop(g_1,\dots,g_n)$ are of the required type, as in \eqref{eq:free-op-cospan}.
\end{prf}

\section{Proofs omitted from Section~\ref{sec:product-theorems}}
\label{s:proofs-fvm-thms-products}

Here we show FVM theorems for products of arbitrary collection of structures. Note that Section~\ref{sec:product-theorems} is stated in terms of binary products only but here we work with infinitary products, to support the claim preceding Example~\ref{ex:ek-pk-mk-products}.

As with binary products, the FVM theorems for the $\pequiv \C$ and $\cequiv \C$ relations we do not need to require anything about the comonad $\C$ on $\Rel$. For the FVM theorem for $\lequiv \C$ we need to have that $\C$ preserves embeddings and assure that codomains of surjective morphisms from paths are paths as well, in the category $\EM \C$.

Recall that the (categorical) product $\prod_i \As_i$ of a collection of $\sg$-structures $\{\As_i\}_{i\in I}$, indexed by a set $I$, is defined as the structure with the universe being the product of the underlying universes $\prod_i A_i$. Further, for a relation $R$ in $\sg$, define its realisation in the product by
\[
    R^{\prod_i A_i}(\ol a_1,\dots, \ol a_n) \iff (\forall i) \quad R^{A_i}(a_{1,i},\, \dots, \, a_{n,i})
\]
Recall that the product of a collection objects $\{A_i\}_{i\in I}$ in a category is given by an object $\prod_i A_i$ and a collection of projection morphisms $\pi_i\colon \prod_i A_i \to A_i$ which are \df{universal} with this property. This means that for any other object $B$ and a collection of morphisms $f_i\colon B \to A_i$ there is a unique morphism $\ol f\colon B \to \prod_i A_i$ such that $f_i = \pi_i \circ \ol f$, for every $i\in I$.

Observe that for any product $\prod_i A_i$, we have a collection of maps
\[
    \C\left(\prod\nolimits_i A_i\right) \xrightarrow{\C\left(\pi_i\right)} \C(A_i) \qquad (\forall i)
\]
therefore, by universality, there exists a unique morphism
\[
    \textstyle
    \kappa\colon \C\left(\prod_i A_i\right) \to \prod_i \C(A_i).
\]

Since the choice of $\{A_i\}_i$ used in defining $\kappa$ was arbitrary, we obtain a collection of morphisms as required in Theorem~\ref{t:fvm-pe} and, consequently, an FVM theorem for products and $\pequiv{\C}$.
\begin{theorem}
    \label{t:product-fvm-pe-categorical}
    Given collections of $\sg$-structures $\{A_i\}_{i\in I}$ and $\{B_i\}_{i\in I}$, indexed by a common set $I$, and a comonad $\C$ on $\Rel$. Then,
    \[
        \textstyle
        \left( \forall i\in I.\; A_i \parrow{\C_i} B_i \right) \qtq{implies} \prod_i A_i \parrow{\C_i} \prod_i B_i.
    \]
\end{theorem}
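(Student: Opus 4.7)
The plan is to apply the infinitary version of Theorem~\ref{t:fvm-pe}, whose validity is asserted in Remark~\ref{r:inf-ops}, to the product functor $\prod_i \colon \prod_{i \in I} \Rel \to \Rel$, with each input comonad and the output comonad all taken to be $\C$. To do so, it suffices to exhibit, for every $I$-indexed family $\{A_i\}_{i \in I}$ of $\sg$-structures, a morphism of the form
\[
\kappa_{\{A_i\}_i}\colon \C\left(\prod\nolimits_i A_i\right) \to \prod\nolimits_i \C(A_i).
\]

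First I would observe that $\prod_i$ is indeed a functor on the product category: the universal property of products assigns to any family of morphisms $\{f_i\colon A_i \to B_i\}_{i \in I}$ a unique map $\prod_i A_i \to \prod_i B_i$, and standard diagram-chasing gives functoriality. Second, the required $\kappa$ is precisely the map constructed already in the discussion preceding the theorem statement, obtained by applying the universal property of $\prod_i \C(A_i)$ to the cone $\{\C(\pi_i)\colon \C(\prod_i A_i) \to \C(A_i)\}_{i \in I}$. Since this definition makes no choices specific to the family $\{A_i\}_i$, we obtain such a morphism for every family, assembling them into the collection required by the (infinitary analogue of) \eqref{eq:kappa-collection}.

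With these ingredients in place, the theorem follows immediately from Theorem~\ref{t:fvm-pe}: given witnesses $f_i\colon \C(A_i) \to B_i$ of the hypotheses $A_i \parrow{\C} B_i$, the composite $(\prod_i f_i) \circ \kappa_{\{A_i\}_i}\colon \C(\prod_i A_i) \to \prod_i B_i$ serves as the desired witness of $\prod_i A_i \parrow{\C} \prod_i B_i$. There is no genuine obstacle to surmount here; the whole argument is forced by universality and applies to \emph{every} comonad on $\Rel$, which is exactly what makes the FVM theorem for products so uniform. The only mild care needed is to invoke the infinitary form of Theorem~\ref{t:fvm-pe}, which Remark~\ref{r:inf-ops} permits without modification to its proof.
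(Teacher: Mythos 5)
Your proposal is correct and follows exactly the paper's own argument: the Kleisli-law-free map $\kappa$ is obtained from the universal property of $\prod_i \C(A_i)$ applied to the cone $\{\C(\pi_i)\}_{i\in I}$, and the conclusion is the infinitary instance of Theorem~\ref{t:fvm-pe} with witness $(\prod_i f_i)\circ\kappa$. Nothing further is needed.
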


Next, we observe that $\kappa$ is natural in the choice of $\{ A_i \}_i$ and we can also show the following.

\begin{lemma}
    $\kappa$ is a Kleisli law.
\end{lemma}
\begin{prf}
    First, observe that by the definition of products and how $\kappa$ is defined, we have that 
    \begin{equation}
    \label{eq:prod-prop}
    \C(\pi_i) = \pi_i \circ \kappa
    \end{equation} 
    for every $i$. From this we show commutativity of the following diagram.
    \[
        \begin{tikzcd}
            \C\left(\prod_i A_i\right) \ar{rr}{\kappa} \ar[swap]{d}{\counit} & & \prod_i \C(A_i) \ar{d}{\prod_i \counit} \\
            \prod_i A_i \ar[swap]{dr}{\pi_i} & & \prod_i A_i \ar{dl}{\pi_i} \\
            & A_i
        \end{tikzcd}
    \]
    By universality of the products, this implies the counit axiom of Kleisli laws that is \ref{ax:kl-law-counit} for the operation $\op(\vec{A_i}) = \prod_i A_i$. To show commutativity of the pentagon above, compute:
    \begin{align*}
        &\pi_i \circ \left(\prod\nolimits_i \counit\right) \circ \kappa \\
        &= \counit \circ \pi_i \circ \kappa & \text{products} \\
        &= \counit \circ \C(\pi_i) & \eqref{eq:prod-prop} \\
        &= \pi_i \circ \counit & \text{naturality}
    \end{align*}
    Next, to show \ref{ax:kl-law-comultiplication} for the operation $\op(\vec{A_i}) = \prod_i A_i$, we employ the same strategy. We wish to show commutativity of the following diagram.
    \[
        \begin{tikzcd}[column sep=5em]
            \C(\prod_i A_i) \ar{r}{\kappa} \ar[swap]{d}{\delta} & \prod_i \C(A_i) \ar{dd}{\prod_i \delta} \\
            \C^2(\prod_i A_i) \ar[swap]{d}{\C(\kappa)} \\
            \C(\prod_i \C(A_i)) \ar[swap]{d}{\kappa} & \prod_i \C^2(A_i) \ar{d}{\pi_i} \\
            \prod_i \C^2(A_i) \ar{r}{\pi_i} & \C^2(A_i)
        \end{tikzcd}
    \]
    To do so, we compute:
    \begin{align*}
        &\pi_i \circ \left(\prod\nolimits_i \delta\right) \circ \kappa  \\
        &= \delta \circ \pi_i \circ \kappa & \text{products} \\
        &= \delta \circ \C(\pi_i) & \eqref{eq:prod-prop} \\
        &= \C^2(\pi_i) \circ \delta & \text{naturality} \\
        &= \C(\pi_i) \circ \C(\kappa) \circ \delta  & \eqref{eq:prod-prop} \\
        &= \pi_i \circ \kappa \circ \C(\kappa) \circ \delta  & \eqref{eq:prod-prop} 
    \end{align*}
\end{prf}

As a corollary of this fact and Theorem~\ref{t:fvm-counting}, we obtain the FVM theorem for products and the counting fragment.
\begin{theorem}
    Given the same assumptions as in Theorem~\ref{t:product-fvm-pe-categorical},
    \[
        \textstyle
        \left( \forall i\in I.\; A_i \cequiv{\C} B_i \right)
        \qtq{implies}
        \prod_i A_i \cequiv{\C} \prod_i B_i.
    \]
\end{theorem}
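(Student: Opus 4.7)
The plan is to observe that this result is essentially immediate from the infrastructure already assembled. Viewing the $I$-indexed product as an $I$-ary operation $H\colon \prod_{i \in I} \Rel \to \Rel$ defined by $H(\vec{A_i}) = \prod_i A_i$, we have already exhibited in the preceding discussion a collection of morphisms $\kappa_{\vec{A_i}}\colon \C(\prod_i A_i) \to \prod_i \C(A_i)$, induced by the universal property of the product from the family $\{\C(\pi_j)\}_{j \in I}$. The previous lemma established that this $\kappa$ is a Kleisli law, that is, it is natural and satisfies \ref{ax:kl-law-counit} and \ref{ax:kl-law-comultiplication} with respect to the comonad $\C$ on both the source and target side (so $\C_i = \C$ for every $i \in I$ and $\D = \C$).

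First I would invoke Theorem~\ref{t:fvm-counting} applied to this Kleisli law. Although Theorem~\ref{t:fvm-counting} is formally stated for finite-arity operations, Remark~\ref{r:inf-ops} explicitly records that all the abstract FVM theorems hold verbatim for infinitary operations. Since the argument of Theorem~\ref{t:fvm-counting} only manipulates the Kleisli-law diagrams componentwise and composes Kleisli morphisms, it goes through unchanged for the $I$-indexed product. From assumed witnesses $f_i, g_i$ of $A_i \cequiv{\C} B_i$ in $\Kl \C$, the proof produces the pair of Kleisli morphisms
\[
    H(\vec{f_i}) \circ \kappa_{\vec{A_i}} \qtq{and} H(\vec{g_i}) \circ \kappa_{\vec{B_i}}
\]
and shows via \ref{ax:kl-law-counit} and \ref{ax:kl-law-comultiplication} that they are mutually inverse in $\Kl \C$, yielding $\prod_i A_i \cequiv{\C} \prod_i B_i$.

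There is no real obstacle here — the whole content lies in checking that the preceding lemma ($\kappa$ is a Kleisli law) combined with the infinitary form of Theorem~\ref{t:fvm-counting} suffices. The only mildly nontrivial point to flag is that the result is genuinely parametric in $\C$, requiring no assumption beyond $\C$ being a comonad on $\Rel$ (or $\Rels$), which is precisely the strength of the categorical approach emphasised throughout Section~\ref{sec:product-theorems}.
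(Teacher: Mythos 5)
Your proposal is correct and matches the paper's own argument exactly: the paper also derives this theorem as an immediate corollary of the preceding lemma (that the product-induced $\kappa$ is a Kleisli law) together with Theorem~\ref{t:fvm-counting}, with Remark~\ref{r:inf-ops} covering the infinitary arity. Nothing further is needed.
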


Lastly, we show the FVM theorem for products and the equivalence $\lequiv \C$. 
As before, we assume we have fixed a collection of paths $\Pa$ in the category $\EM \C$ in order to specify $\lequiv \C$. Surjective morhisms and embeddings in $\EM \C$ are defined as usual.

We need to do a little bit of preliminary work before proving the FVM theorem. First, recall that every homomorphism of $\sg$-structures $f\colon A \to B$ factors as
\[
    A \xrightarrow{q} f[A] \xrightarrow{e} B
\]
where $f[A]$ is the substructure of $B$ induced by the image of $A$ under $f$. Importantly, the second map $e$ is an embedding and $q$ is just a surjective homomorphism.

It is a standard categorical fact that this factorisation of homomorphisms lifts to the category $\EM \C$ if our comonad preserves embeddings.
\begin{lemma}
    \label{l:EM-factorisation}
    Assume that $\C$ preserves embeddings and that $f\colon (A,\alpha) \to (B,\beta)$ is a morphism of $\C$-coalgebras. Then, $f$ factorises as
    \[
        (A,\alpha) \xrightarrow{q} (f[A], \beta') \xrightarrow{e} (B,\beta)
    \]
    for some coalgebra $(f[A],\beta')$, where $q\colon A \to f[A]$ and $e\colon f[A] \emb B$ is the factorisation of the homomorphism of $\sg$-structures $f\colon A\to B$ as above.
\end{lemma}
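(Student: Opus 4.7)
The plan is to exploit the orthogonality between surjective homomorphisms and embeddings at the level of underlying (pointed) $\sg$-structures, and then lift it to coalgebras using the hypothesis that $\C$ preserves embeddings. First, factor the underlying homomorphism $f$ as $A \xrightarrow{q} f[A] \xrightarrow{e} B$, where $q$ is the surjective corestriction and $e$ the inclusion of the induced substructure. Since $\C$ preserves embeddings, $\C(e)\colon \C(f[A]) \emb \C(B)$ is an embedding, hence a monomorphism by Lemma~\ref{l:fs-basics}.1. Also, $q$ is a surjective homomorphism and thus an epimorphism in $\Rel$ (resp.\ $\Rels$).

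The coalgebra morphism condition $\beta \circ f = \C(f) \circ \alpha$ rewrites as $\beta \circ e \circ q = \C(e) \circ \C(q) \circ \alpha$. The diagonal fill-in property of the proper factorisation system (surjective homomorphisms, embeddings) then supplies a unique morphism $\beta'\colon f[A] \to \C(f[A])$ such that
\[
    \beta' \circ q \;=\; \C(q) \circ \alpha
    \qquad\text{and}\qquad
    \C(e) \circ \beta' \;=\; \beta \circ e.
\]
These two equations are exactly the statements that $q$ and $e$ are coalgebra morphisms, once we have checked that $\beta'$ is a coalgebra structure.

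The main work, and the step I expect to be the most delicate, is verifying the two coalgebra axioms~\eqref{eq:axioms-coalg} for $\beta'$. The idea is to pre-compose with the epimorphism $q$ and conclude by cancellation. For the counit axiom,
\[
    \counit \circ \beta' \circ q \;=\; \counit \circ \C(q) \circ \alpha \;=\; q \circ \counit \circ \alpha \;=\; q,
\]
using naturality of $\counit$ and the counit axiom for $\alpha$; since $q$ is epi, this gives $\counit \circ \beta' = \id_{f[A]}$. For coassociativity, both composites pre-composed with $q$ reduce to $\C^2(q)\circ \C(\alpha)\circ \alpha$: indeed
\[
    \delta \circ \beta' \circ q \;=\; \delta \circ \C(q) \circ \alpha \;=\; \C^2(q) \circ \delta \circ \alpha \;=\; \C^2(q)\circ \C(\alpha) \circ \alpha
\]
by naturality of $\delta$ and coassociativity of $\alpha$, and
\[
    \C(\beta') \circ \beta' \circ q \;=\; \C(\beta') \circ \C(q) \circ \alpha \;=\; \C(\beta' \circ q) \circ \alpha \;=\; \C(\C(q) \circ \alpha) \circ \alpha,
\]
which equals the same expression by functoriality. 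Cancelling $q$ yields $\delta \circ \beta' = \C(\beta') \circ \beta'$, completing the verification that $(f[A],\beta')$ is a coalgebra and hence that $f = e \circ q$ is the desired factorisation in $\EM{\C}$.
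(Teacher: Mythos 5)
Your proposal is correct and follows essentially the same route as the paper: factor the underlying homomorphism as a surjection followed by an embedding, obtain $\beta'$ from the diagonal fill-in against the embedding $\C(e)$ (which is where the hypothesis that $\C$ preserves embeddings enters), and verify the coalgebra axioms by cancelling the epimorphism $q$. The only difference in emphasis is that the paper proves the fill-in property elementwise as a separate claim and leaves the coalgebra-axiom chase implicit, whereas you cite the fill-in abstractly and spell out the chase — which is exactly the "simple diagram chasing" the paper alludes to.
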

\begin{prf}
    Assume that $f \colon A \to B$ is factored as $e \circ q$ as above. This means that we have the solid arrows in the following diagram commute.
    \[
        \begin{tikzcd}
            A \ar[swap]{d}{\alpha}\ar{r}{q} & f[A] \ar{r}{e} \ar[dashed]{d}{\beta'} & B \ar{d}{\beta} \\
            \C(A) \ar{r}{\C(q)} & \C(f[A]) \ar{r}{\C(e)} & \C(B)
        \end{tikzcd}
    \]
    The following well-known fact and also our assumption that $\C$ preserves embeddings justifies that there exists a morphism $\beta' \colon f[A] \to \C(f[A])$ which makes the whole diagram above commute.
    \begin{claim}
        \label{cl:diagonal-factorisation}
        In $\Rel$ or $\Rels$, for a surjective homomorphism $q\colon A \twoheadrightarrow B$, an embedding $e \colon C \emb D$ and morphisms $f,g$ making the following diagram commute
        \[
            \begin{tikzcd}
                A \ar[swap]{d}{f}\ar[->>]{r}{q} & B \ar{d}{g} \\
                C \ar[>->]{r}{e} & D
            \end{tikzcd}
        \]
        there exists a (necessarily unique) morphism $d\colon B \to C$ such that $f = d \circ q$ and $g = e \circ d$.
    \end{claim}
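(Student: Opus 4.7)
The plan is to construct $d$ pointwise on underlying sets by exploiting surjectivity of $q$, and then to argue that well-definedness of this construction together with the homomorphism property both fall out of the defining features of an embedding: injectivity and reflection of relations.

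First I would define $d$ on the underlying set of $B$ as follows. For $b \in B$, surjectivity of $q$ gives some $a \in A$ with $q(a) = b$; set $d(b) := f(a)$. Well-definedness uses injectivity of $e$: if $q(a) = q(a')$, then by commutativity of the outer square $e(f(a)) = g(q(a)) = g(q(a')) = e(f(a'))$, and injectivity of $e$ forces $f(a) = f(a')$. The equation $d \circ q = f$ then holds by construction, and $e \circ d = g$ follows because each $b$ is of the form $q(a)$, giving $e(d(b)) = e(f(a)) = g(q(a)) = g(b)$.

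Next I would verify that $d$ is a $\sg$-structure homomorphism. For an $n$-ary $R \in \sg$ and a tuple with $R^B(b_1, \ldots, b_n)$, since $g$ is a homomorphism we have $R^D(g(b_1), \ldots, g(b_n))$, i.e.\ $R^D(e(d(b_1)), \ldots, e(d(b_n)))$. The second defining property of an embedding is that it reflects relations, so this transfers back to $R^C(d(b_1), \ldots, d(b_n))$, as required. In the pointed case of $\Rels$, with distinguished points $a_0, b_0, c_0, d_0$ preserved by $q, f, g, e$, we get $d(b_0) = d(q(a_0)) = f(a_0) = c_0$, so $d$ is a morphism of pointed structures.

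Finally, uniqueness of $d$ is immediate from surjectivity of $q$: any $d'\colon B \to C$ satisfying $d' \circ q = f$ must agree with $d$ on each $b = q(a)$, and every $b \in B$ arises this way. I do not expect any serious obstacle here; the only conceptual observation is that the two defining features of an embedding — being injective and reflecting relations — align exactly with the two places where an embedding must be inverted in the argument (well-definedness and the homomorphism condition, respectively). In particular, no appeal to the abstract factorisation-system machinery of $\Rel$ or $\Rels$ is required, since the verification is entirely pointwise.
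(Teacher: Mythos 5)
Your proof is correct and follows essentially the same route as the paper's: define $d$ pointwise via surjectivity of $q$, use injectivity of $e$ for well-definedness, and use reflection of relations by $e$ for the homomorphism property. The only (harmless) differences are that you verify the homomorphism condition directly on tuples in $B$ via the already-established equation $g = e \circ d$ rather than lifting tuples to $A$, and that you spell out the pointed case and uniqueness, which the paper leaves implicit.
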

    \begin{claimproof}
        For $b\in B$ define $d(b)$ as $f(a)$ for an arbitrary $a\in A$ such that $q(a) = b$. Such $a$ exists because $q$ is surjective. Moreover, the choice of $a$ does not matter because if $q(a) = q(a') = b$ then $e(f(a)) = g(q(a)) = g(q(a')) = e(f(a'))$ and since $e$ is an embedding $f(a) = f(a')$.

        Observe that, by definition of $d$, we obtain the first desired equation, that is, $f = d\circ q$. The second one follows from the fact that $q$ is surjective. Indeed, let $b \in B$. To show that $g(b) = e(d(b))$ we take an arbitrary $a\in A$ such $f(a) = b$. Then, $g(b) = g(q(a)) = e(f(a)) = e(d(q(a))) = e(d(b))$.

        What remains to be verified is that $d$ is a $\sg$-structure homomorphism, which follows by similar reasoning to the above. Given an $n$-ary $R\in \sg$, and $\ol a\in A^n$ such that $R^B(q(\ol a))$, because $g$ is a homomorphism also $R^D(g(q(\ol a)))$ or equivalently $R^D(e(f(\ol a)))$. Finally, because $e$ is an embedding, it reflects relations and so $R^C(f(\ol a))$, as we needed to prove.
    \end{claimproof}
    The fact that $(f[A],\beta')$ is a $\C$-coalgebra follows by a simple diagram chasing from the fact that $\counit$ and $\delta$ are natural and that $q$ is surjective.
\end{prf}

The proof of the following proceeds by checking the assumptions of Propositions~\ref{p:s1p} and~\ref{p:s2p}.
\begin{proposition}
    If $\C$ preserves embeddings, and for any surjective morphism of coalgebras in $\EM \C$ 
    \[ (A,\alpha) \to (B,\beta) \] 
    if $(A,\alpha)$ is a path then so is $(B,\beta)$, then the lifting of $\op(\vec{A_i}) = \prod_i A_i$ satisfies axioms \ref{ax:s1} and \ref{ax:s2}.
\end{proposition}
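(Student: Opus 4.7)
The plan is to invoke Propositions~\ref{p:s1p} and~\ref{p:s2p}, reducing the statement to two checks in the base category. For \ref{ax:s1}, it suffices to verify that $\op(\vec{A_i}) = \prod_i A_i$ preserves embeddings. This is a direct componentwise check: if each $f_i\colon A_i \emb B_i$ is injective and reflects relations, then using the definition $R^{\prod_i A_i}(\ol a_1,\dots,\ol a_r) \iff \forall i.\ R^{A_i}(a_{1,i},\dots,a_{r,i})$ one sees immediately that $\prod_i f_i$ is injective and reflects relations.

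The interesting content is \ref{ax:s2p}. Given a path $(P,\pi)$ in $\EM \C$, coalgebras $(A_i,\alpha_i)$, and a morphism $f\colon P \to \prod_i A_i$ making the bimorphism square commute, my key first observation will be that each component $p_i \circ f$ is itself a coalgebra morphism $(P,\pi) \to (A_i,\alpha_i)$. Indeed, since $\kappa\colon \C(\prod_j A_j) \to \prod_j \C(A_j)$ is defined so that $p_i \circ \kappa = \C(p_i)$, postcomposing the bimorphism condition $\prod_i \alpha_i \circ f = \kappa \circ \C(f) \circ \pi$ with $p_i$ yields $\alpha_i \circ (p_i \circ f) = \C(p_i \circ f) \circ \pi$.

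Next, I apply Lemma~\ref{l:EM-factorisation} to factor each $p_i \circ f$ in $\EM \C$ as $(P,\pi) \xrightarrow{q_i} (P_i,\pi_i) \xrightarrow{e_i} (A_i,\alpha_i)$ with $q_i$ a surjective coalgebra morphism and $e_i$ a coalgebra embedding. The hypothesis on surjections out of paths then guarantees that each $(P_i,\pi_i)$ is a path, so $e_i$ is a path embedding. Using the universal property of $\prod_i P_i$, I build $f_0\colon P \to \prod_i P_i$ from the tuple $\{q_i\}$. That $\prod_i e_i \circ f_0 = f$ follows by checking projections: $p_i \circ (\prod_i e_i \circ f_0) = e_i \circ q_i = p_i \circ f$, then invoking uniqueness in the product's universal property.

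For minimality, given any alternative decomposition $f = \prod_i g_i \circ g_0$ with path embeddings $g_i\colon (Q_i,\rho_i) \embed (A_i,\alpha_i)$, the equation $g_i \circ (p_i \circ g_0) = p_i \circ f = e_i \circ q_i$ presents a commutative square with surjection $q_i$ on the left and embedding $g_i$ on the right. Claim~\ref{cl:diagonal-factorisation} supplies the required diagonal filler $h_i\colon P_i \to Q_i$ with $e_i = g_i \circ h_i$. The only step requiring even mild care is ensuring the image factorisation of $p_i \circ f$ lifts to $\EM \C$, which is precisely what Lemma~\ref{l:EM-factorisation} guarantees under the assumption that $\C$ preserves embeddings; everything else is formal manipulation via the universal property of products and the diagonal fill-in.
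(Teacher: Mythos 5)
Your argument is essentially the paper's own proof: reduce via Propositions~\ref{p:s1p} and~\ref{p:s2p}, observe that each composite $p_i \circ f$ with a projection is a coalgebra morphism (your justification via $p_i \circ \kappa = \C(p_i)$ is exactly right), factor it through its image using Lemma~\ref{l:EM-factorisation}, invoke the hypothesis that surjective images of paths are paths, and obtain minimality from the diagonal fill-in of Claim~\ref{cl:diagonal-factorisation}. The one point you elide is that Claim~\ref{cl:diagonal-factorisation} is a statement about (pointed) $\sg$-structures, so it hands you the diagonal $h_i\colon P_i \to Q_i$ only as a structure homomorphism, whereas the minimality clause of \ref{ax:s2p}, as it is consumed in the proof of Proposition~\ref{p:s2p}, requires $h_i$ to be a morphism of coalgebras $(P_i,\pi_i) \to (Q_i,\rho_i)$. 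The paper verifies this separately: since $g_i$ is a coalgebra embedding and $\C$ preserves embeddings, $\C(g_i)$ is injective, and postcomposing both $\rho_i \circ h_i$ and $\C(h_i) \circ \pi_i$ with $\C(g_i)$ yields the same map (by commutativity of the outer rectangle and of the square for $g_i$), whence they agree. This extra check is routine but not vacuous --- it is one more place where the hypothesis that $\C$ preserves embeddings is used --- and should be stated; with it added, your proof coincides with the paper's.
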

\begin{prf}
    \ref{ax:s1} For simplicity we only show that the binary product operation $\op(A, A') = A \times A'$ preserves embeddings on $\sg$-structures where $\sg$ consists of one binary relation $R(\cdot,\cdot)$. Let $e\colon A \emb B$ and $e'\colon A' \emb B'$ be two embeddings of $\sg$-structures. The mapping $e \times e'\colon A\times A' \to B\times B'$ is injective since $e$ and $e'$ are injective too. Moreover, it also reflect relations. Assume
    \[
        R^{B\times B'}( \left<e(x), e'(x')\right>, \ \left<e(y), e'(y')\right>)
    \]
    which is equivalent to $R^{B}(e(x), e(y))$ and $R^{B'}(e'(x'), e'(y'))$. Because $e$ and $e'$ are embeddings, we have that $R^{A}(x, y)$ and $R^{A'}(x', y')$ which, in turn, is equivalent to
    \[
        R^{A\times A'}( \left<x, x'\right>, \ \left<y, y'\right>).
    \]

    \ref{ax:s2} We assume that $(Q,\rho)$ is a path in $\EM \C$ and $(A_i,\alpha_i)$ are arbitrary $\C$-coalgebras. Further, assume that we are given a morphism $f\colon \Ps \to \prod_i A_i$ which makes the following diagram commute.
    \[
    \begin{tikzcd}[->]
        \Qs \arrow[rr, "f"] \dar[swap]{\Qc} & & \prod_i\As_i \dar{\prod_i \Ac_i} \\
        \C(\Qs) \rar{\C(f)} & \C(\prod_i \As_i) \rar{\kappa} & \prod_i \C(\As_i)
    \end{tikzcd}
    \]
    It is an easy observation that the composition $\pi_i \circ f$ is a coalgebra morphism $(Q,\rho) \to (A_i, \alpha_i)$. By Lemma~\ref{l:EM-factorisation}, we can take a factorisation of this morphism as
    \[
        (Q,\rho) \xrightarrow{q_i} (Q_i, \rho_i) \xrightarrow{e_i} (A_i, \alpha_i)
    \]
    where $e_i$ is an embedding. Since $(Q,\rho)$ is a path and $q_i$ is surjective, we see that~$(Q_i, \rho_i)$ is a path and, furthermore, we also know that $e_i$ is an embedding. Consequently, we have a decomposition of $f$ into
    \begin{equation}
        \label{eq:f-decomp-prod-1}
        \textstyle
        Q \xrightarrow{q} \prod_i Q_i \xrightarrow{\prod_i e_i} \prod_i A_i
    \end{equation}
    where $q$ is obtained from the universal property of products, from the collection of morphisms~$\{q_i\}_i$.

    To show minimality of this decomposition, let $f$ also decompose as
    \begin{equation}
        \label{eq:f-decomp-prod-2}
        \textstyle
        Q \xrightarrow{r} \prod_i S_i \xrightarrow{\prod_i m_i} \prod_i A_i
    \end{equation}
    where $m_i$ are path embeddings $(S_i, \omega_i) \emb (A_i, \alpha_i)$ in $\EM \C$. By post-composing both homomorphisms in \eqref{eq:f-decomp-prod-1} and \eqref{eq:f-decomp-prod-2} with the projection $\pi_i\colon \prod_i A_i \to A_i$ we obtain a commutative square:
    \[
        \begin{tikzcd}
            Q \ar{r}{q_i} \ar[swap]{d}{\pi_i \circ r} & Q_i \ar{d}{e_i} \\
            S_i \ar{r}{m_i} & A_i
        \end{tikzcd}
    \]
    Now, observe that $q_i$ is a surjective homomorphism and $m_i$ is an embedding and so, by Claim~\ref{cl:diagonal-factorisation}, there is a diagonal morphism $d_i\colon Q_i \to S_i$ such that $e_i = m_i \circ d_i$. It remains to check is that $d_i$ is a morphism of coalgebras $(Q_i, \rho_i) \to (S_i, \omega_i)$. This is a consequence of the fact that $\C(m_i)$ is an embedding and that the right square and the surrounding rectangle of the following diagram commute.
    \[
        \begin{tikzcd}
            Q_i \ar{r}{d_i} \ar[swap]{d}{\rho_i} & S_i \ar{r}{m_i} \ar{d}{\omega_i} & A_i \ar{d}{\alpha_i} \\
            \C(Q_i) \ar{r}{\C(d_i)} & \C(S_i) \ar{r}{\C(m_i)} & \C(A_i)
        \end{tikzcd}
    \]
    Indeed, let $x\in Q_i$. Then, by chasing the above diagram we obtain
    \begin{align*}
        \C(m_i)(\omega_i(d_i(x)))
        = \alpha_i(m_i(d_i(x)))
        = \C(m_i)(\C(d_i)(\rho_i(x)))
    \end{align*}
    and since $\C(m_i)$ is injective $\omega_i(d_i(x)) = \C(d_i)(\rho_i(x))$, as required.
\end{prf}

As a corollary of the above and Theorem~\ref{t:fvm-standard}, we obtain a general FVM theorem for products, with assumptions that hold true for all the game comonads studied so far in the literature.

\begin{theorem}
    If $\C$ preserves embeddings, and for any surjective morphism of coalgebras in $\EM \C$ 
    \[ (A,\alpha) \to (B,\beta) \] 
    if $(A,\alpha)$ is a path then so is $(B,\beta)$, then
    \[
        \left(\forall i\in I.\; A_i \lequiv{\C} B_i \right) 
        \qtq{implies}
        \prod_i A_i \lequiv{\C} \prod_i B_i.
    \]
\end{theorem}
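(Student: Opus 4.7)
The plan is to apply Theorem~\ref{t:fvm-standard} to the product operation $H(\vec{A_i}) = \prod_i A_i$, with all comonads equal to $\C$. I would use the canonical Kleisli law $\kappa : \C(\prod_i A_i) \to \prod_i \C(A_i)$ determined by the universal property of products from the family $\{\C(\pi_i)\}_{i\in I}$; that $\kappa$ is a Kleisli law was already verified in the preceding lemma of this appendix, and a routine componentwise check shows the product operation preserves embeddings (a product of injective maps is injective, and reflection of relations transfers through the componentwise definition of relations on the product). Axiom \ref{ax:s1} for the lifted product then follows from Proposition~\ref{p:s1p}.

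The key step is verifying axiom \ref{ax:s2p}. Given a bimorphism $f : Q \to \prod_i A_i$ from a path $(Q,\rho)$, I would set $f_i = \pi_i \circ f$. Using $\pi_i \circ \kappa = \C(\pi_i)$ together with the bimorphism square, each $f_i$ becomes a coalgebra morphism $(Q,\rho) \to (A_i,\alpha_i)$. By Lemma~\ref{l:EM-factorisation}, each $f_i$ factors through its image as $(Q,\rho) \xrightarrow{q_i} (Q_i,\rho_i) \xrightarrow{e_i} (A_i,\alpha_i)$ with $e_i$ an embedding and $q_i$ surjective; the hypothesis that surjective images of paths are paths guarantees each $(Q_i,\rho_i)$ is a path. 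Bundling the $q_i$ via the universal property of products gives a morphism $q : Q \to \prod_i Q_i$ which, together with $\prod_i e_i$, recovers $f$ as the required decomposition through a product of path embeddings.

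The main obstacle is establishing minimality of this decomposition. Given any competing decomposition $Q \xrightarrow{r} \prod_i S_i \xrightarrow{\prod_i m_i} \prod_i A_i$ with path embeddings $m_i : (S_i,\omega_i) \hookrightarrow (A_i,\alpha_i)$, projection via $\pi_i$ yields a commutative square with the surjection $q_i$ on top and the embedding $m_i$ on the bottom. The diagonal fill-in property for the (surjection, embedding) factorisation (Claim~\ref{cl:diagonal-factorisation}) delivers morphisms $d_i : Q_i \to S_i$ with $e_i = m_i \circ d_i$; each $d_i$ is then forced to be a coalgebra morphism because $\C(m_i)$ is monic, giving the required universal factorisation through the $(Q_i,\rho_i)$. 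With axioms \ref{ax:s1} and \ref{ax:s2p} in hand, Theorem~\ref{t:fvm-standard} immediately delivers the desired FVM statement for $\lequiv{\C}$ under arbitrary products.
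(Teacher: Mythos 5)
Your proposal is correct and follows essentially the same route as the paper's own argument: the same canonical Kleisli law $\kappa$ induced by the projections, the same reduction to Propositions~\ref{p:s1p} and~\ref{p:s2p}, the same factorisation of each $\pi_i \circ f$ through its image (using the hypothesis that surjective images of paths are paths), and the same diagonal fill-in argument via Claim~\ref{cl:diagonal-factorisation} to establish minimality, including the observation that $\C(m_i)$ being monic forces each $d_i$ to be a coalgebra morphism. No gaps.
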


\section{Proofs omitted from Section~\ref{s:enrichments}}
\label{s:proofs-enrichments}

\CommutativityWithTranslations*
\begin{prf}
    Assume that $A_i \equiv_{\L_i} B_i$, for every $i=1,\dots,n$.
    By our assumptions, we obtain that $\tr_i(A_i) \equiv_{\L'_i} \tr_i(B_i)$ for every $i=1,\dots,n$ and, therefore,
    \[
        \op'\left(\tr_1(A_1),\dots,\tr_n(A_n)\right) \lequiv{\L'_{n+1}} \op'\left(\tr_1(B_1), \dots, \tr_n(B_n) \right)
    \]
    Then, by a straightforward calculation
    \begin{align*}
        &\tr_{n+1}\left(\op\left(\vec{A_i}\right)\right) \\
        &\cong \op'\left(\vec{\tr_i(A_i)}\right) \\
        &\equiv_{\L'_{n+1}} \op'\left(\vec{\tr_i(B_i)}\right) \\
        &\cong \tr_{n+1}\left(\op\left(\vec{B_i}\right)\right)
    \end{align*}
    which, by our assumptions, implies that 
    \[ \op\left(\vec{A_i}\right) \equiv_{\L_{n+1}} \op\left(\vec{B_i}\right) .\qedhere\]
\end{prf}

\end{document}